\documentclass[aps,prx,showpacs,twocolumn,reprint]{revtex4-2}

\usepackage{qcircuit}
\usepackage[dvips]{graphicx}
\usepackage{amsmath,amssymb,amsthm,mathrsfs,amsfonts,dsfont}
\usepackage{epsfig}
\usepackage{braket}
\usepackage{hyperref}
\usepackage{bm}
\usepackage{enumerate}
\usepackage{color}
\usepackage{graphicx}
\usepackage{pgf}
\usepackage{tikz}
\usepackage{enumitem}
\usepackage[capitalize]{cleveref}
\usepackage[normalem]{ulem}

\usepackage{amsthm}

\newtheorem{definition}{Definition}
\newtheorem{statement}{Statement}
\newtheorem{theorem}{Theorem}
\newtheorem{lemma}{Lemma}
\newtheorem{corollary}{Corollary}

\Crefname{theorem}{Theorem}{Theorems}

\newcommand{\tr}{\mathrm{Tr}}

\newcommand{\re}{\mathrm{Re}}
\newcommand{\var}{\mathrm{Var}}

\newcommand{\cmat}{\mathbf{C}}
\newcommand{\hvec}{\underline{h}}
\newcommand{\covar}{f}
\newcommand{\jac}{\mathbf{J}}
\newcommand{\npool}{r_p}
\newcommand{\pool}{\mathcal{P}}
\newcommand{\bilinear}[3]{\langle#1,#2\rangle_{#3}}
\newcommand{\nc}{N_c}

\newcommand{\cv}{\textsc{CoVaR} }
\newcommand{\cvns}{\textsc{CoVaR}}

\begin{document}


\title{Training Variational Quantum Circuits with CoVaR: Covariance Root Finding with Classical Shadows}

\author{Gregory Boyd}
\affiliation{Department of Materials, University of Oxford, Parks Road, Oxford OX1 3PH, United Kingdom}

\author{B\'alint Koczor}
\email{balint.koczor@materials.ox.ac.uk}
\affiliation{Department of Materials, University of Oxford, Parks Road, Oxford OX1 3PH, United Kingdom}


\begin{abstract}
Exploiting near-term quantum computers and achieving practical value is a considerable and exciting challenge. Most prominent candidates as variational algorithms typically aim to find the ground state of a Hamiltonian by minimising a single classical (energy) surface which is sampled from by a quantum computer. Here we introduce a method we call \cvns, an alternative means to exploit the power of variational circuits: We find eigenstates by finding joint roots of a polynomially growing number of properties of the quantum state as covariance functions between the Hamiltonian and an operator pool of our choice. The most remarkable feature of our \cv approach is that it allows us to fully exploit the extremely powerful classical shadow techniques, i.e., we \emph{simultaneously} estimate a very large number $>10^4-10^7$ of covariances. We randomly select covariances and estimate analytical derivatives at each iteration applying a stochastic Levenberg-Marquardt step via a large but tractable linear system of equations that we solve with a classical computer. We prove that the cost in quantum resources per iteration is comparable to a standard gradient estimation, however, we observe in numerical simulations a very significant improvement by many orders of magnitude in convergence speed. \cv is directly analogous to stochastic gradient-based optimisations of paramount importance to classical machine learning while we also offload significant but tractable work onto the classical processor. As we demonstrate numerically, the approach shares features with phase-estimation protocols that prepare eigenstates with a dominant initial fidelity contribution.
\end{abstract}

\maketitle

\section{Introduction}

Quantum computers are becoming a reality and with an accelerating pace
experiments set more and more impressive records~\cite{aruteQuantumSupremacyUsing2019,zhongPhaseProgrammableGaussianBoson2021,wuStrongQuantumComputational2021,ebadiQuantumPhasesMatter2021,gongQuantumWalksProgrammable2021a}.
Current  generations of machines are already well beyond the $50$-qubit frontier and have been
demonstrated to being
capable of significant computational advantage over the best classical supercomputers.
Despite rapid progress in improving hardware it is generally believed the fault-tolerant, error corrected
systems that are expected to emerge ultimately require significantly better and larger hardware and
may thus not be within reach in the near term. The reason is that quantum states are highly vulnerable to experimental
imperfections and correcting those errors requires highly non-trivial measures, such as encoding a single logical
qubit into potentially thousands of physical qubits.

\begin{figure*}[t]
	\begin{centering}
	        \includegraphics[width=\linewidth]{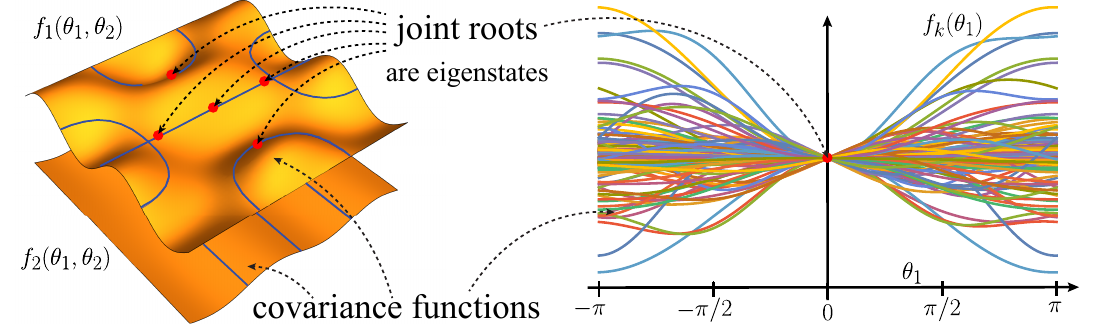}
	\end{centering}
			\caption{(left) A toy-example of a $2$-qubit problem
				whose eigenstates we aim to find by finding parameters of a variational quantum state
				$|\psi(\theta_1, \theta_2)\rangle$ prepared by two parametrised gates. Covariances 
				$\covar_k:=\bilinear{O_k}{\mathcal{H}}{\psi}$ between our problem Hamiltonian and between observables $O_k$			
				span classical surfaces (orange surfaces) and express uncertainty relations between those operators.
				Blue lines show roots as regions in parameter space where these uncertainties (covariances) vanish, i.e.,
				at roots $\underline{\theta}^\star$ the equation is satisfied
				$f_1(\underline{\theta}^\star)  = 0$. Intersections of the lines in the above surface with those
				of the below surface (red dots) guarantee eigenstates of the problem Hamiltonian as joint roots $f_1(\underline{\theta}) = f_2(\underline{\theta}) = 0$.
				(right) We use the extremely powerful classical shadow techniques to determine a very large
				number of these covariances $f_k(\theta_1)$ whose slices along the parameter $\theta_1$ are shown in a
				practically relevant variational circuit (solid lines). We initialise at $\theta_1 \neq 0 $ and iteratively
				find the joint root at  $\theta_1^\star = 0$ (red dot): We use a Levenberg-Marquardt step whereby we linearise the covariances 
				through computing a Jacobian and solve the resulting large, overdetermined linear system of equations.				
					\label{fig:plot_trdist}
				}
\end{figure*}

It is thus a very exciting challenge in the near term to achieve practical value
with these noisy intermediate-scale quantum (NISQ)~\cite{preskillQuantumComputingNISQ2018a} devices despite the damaging noise in the
hardware. The most promising candidates, generally known as variational quantum algorithms~\cite{farhi2014quantum,peruzzo2014variational,endoHybridQuantumClassicalAlgorithms2021, cerezoVariationalQuantumAlgorithms2021a, bharti2021noisy},
are robust against noise given the quantum circuit is restricted to a shallow depth.
The most prominent example is the variational quantum eigensolver (VQE) whereby
a circuit of shallow depth is constructed of parametrised quantum gates such that
the emerging quantum state is powerful enough to express the ground state of a problem of interest, e.g.,
the Hamiltonian of a chemical system. Nearly all such techniques proceed by
efficiently estimating the energy (expected value of the Hamiltonian) or an equivalent cost function
via sampling with a quantum computer and then the circuit parameters are variationally optimised
to find the solution to the desired problem. While these techniques seem promising there are many challenges, especially
in reducing high sampling costs
and performing non-linear parameter optimisations which suffer from the presence of local traps and possibly flat regions as barren plateaus~\cite{PhysRevLett.127.120502,mcclean2018barren,laroccaTheoryOverparametrizationQuantum2021,van2020measurement}.

Here we make significant progress towards addressing these challenges:
First, our approach converges faster
	than VQE in order(s) of magnitude fewer iterations and has a \emph{logarithmic} measurement cost via classical shadows when increasing our constraint size -- finding a solution as an eigenstate thus has a significantly reduced sampling cost. 
	Second, VQE optimisations have been shown to be NP hard~\cite{PhysRevLett.127.120502} due to
	local traps -- our approach is particularly robust against local traps due to a stochastic generation
	of a large number of constraints.
	Third, since in the present work we resort to local Hamiltonians as we use the NISQ-friendly variant
	of classical shadows~\cite{classical_shadows}, barren-plateaus do not necessarily exist and thus pose a less significant issue than local traps~\cite{anschuetz2022beyond,cerezoCostFunctionDependent2021a}.

In contrast to usual variational minimisation of a single cost function,
we define an entirely new class class of algorithms by leveraging the following observation:
in order to find an eigenstate a large number of properties of the variational quantum state
must satisfy certain uncertainty relations with respect to observable measurements. We define these properties as covariances~\cite{ferraro2005gaussian,carmi2018significance,tripathi2020covariance} between the problem
Hamiltonian and elements of an operator pool of our choice. This definition leaves us great flexibility in choosing our operator
pools 
and the ability to pose the problem of finding eigenstates as joint roots of covariances.
As we illustrate in \cref{fig:plot_trdist}(left) these covariances form surfaces as a function of circuit parameters
and roots of the individual covariances form submanifolds (blue lines in \cref{fig:plot_trdist}(left)).
Intersections of these as joint roots (red dots) then correspond to eigenstates of the problem Hamiltonian. In our CoVariance Root finding (\cvns) approach
we randomly select a large number of such covariances as illustrated in \cref{fig:plot_trdist}(right)
and apply powerful classical numerical techniques:
We linearise the surfaces by computing their analytical Jacobian with a quantum computer and solve
a large but tractable linear system of equations to estimate the root ($\theta_1^\star=0$ in \cref{fig:plot_trdist}(right)). We iteratively repeat this procedure until a sufficiently
good approximation of an eigenstate is found -- which we can verify classically efficiently
from our reconstructed covariances.

The most significant advantage of \cv is that we can use classical shadows to
reconstruct these covariances with an extreme efficiency: we prove that the cost of estimating
a very large Jacobian is comparable to a standard gradient estimation and grows only
logarithmically with the number of covariances. 
\cv is directly analogous to stochastic gradient-based optimisers that
have been the \emph{de-facto} standard choice for most typical variants of machine learning, e.g., 
Levenberg-Marquardt is considered to be the fastest method for training classical neural networks~\cite{neuralnet,demuth2014neural,beale2010neural,yu2018levenberg}.
As such, \cv is a quantum-classical hybrid that ideally combines the fast convergence speed of Levenberg-Marquardt
with the logarithmically efficient (quantum) computation of our large Jacobian.

We demonstrate in a comprehensive set of numerical
experiments that the efficacy of root finding is significantly increased by
employing such large datasets and our optimisation procedure is robust against local traps, circuit noise,
shot noise and noise due to random sampling of constraints. 
We cover a number of important practical applications, such as recompilation, finding ground and excited states
of local Hamiltonians, where \cv is particularly powerful as we demonstrate in numerical simulations.
Given the rapidly growing literature on variational quantum algorithms we discuss in detail connections and differences to similar approaches.

The structure of this work is the following. In the rest of this introduction section
we briefly introduce covariances and related basic notations in \cref{sec:prelim} which are both fundamental
to quantum mechanics but also form the basis of our approach. In  \cref{sec:vari_circ} we then briefly recapitulate notations related to shallow variational quantum circuits. Our main, general results are presented in \cref{sec:general} where we state conditions for finding eigenstates based on covariances and pose our
problem as root finding. In \cref{sec:covar} we introduce our CoVaR approach that uses classical shadows to find eigenstates of local Hamiltonians and relies on finding joint roots of very large systems. In \cref{sec:applications} we numerically demonstrate the power and utility of \cv in solving practical problems while we compare our technique to various other in  \cref{sec:comparison}.

\subsection{Preliminaries: operator covariances and their properties\label{sec:prelim}}
In this section we introduce all necessary tools for deriving our main results.
First, recall that a \emph{pure} quantum state is an element of the complex
Hilbert space $|\psi\rangle \in \mathbb{C}^d$ with the dimension, e.g., in a system of $N$ qubits $d=2^N$.
We will consider observables as Hermitian operators that
act on this Hilbert space as complex Hermitian matrices $A \in \mathbb{C}^{d \times d}$.
For any pair of such Hermitian operators we can define the following bilinear form
that we will refer to as a covariance.
\begin{definition}[Covariances]\label{def:covars}
Given two arbitrary Hermitian operators $A, B \in \mathbb{C}^{d\times d}$ we can define a covariance functional
between them that depends on a pure quantum state $|\psi \rangle$ via the bilinear form
\begin{equation}
    \bilinear{A}{B}{\psi} := 
    \langle \psi  | A B | \psi \rangle - \langle \psi  | A  | \psi \rangle \langle \psi  | B | \psi \rangle \in \mathbb{C}.
\end{equation}
\end{definition}
These covariances are fundamentally important in quantum mechanics and they are closely related to the
statistics when an observable property of a quantum system is measured -- covariances then express
the compatibility of these observable properties of a quantum system.

It simplifies following derivations to introduce an orthonormal set of Hermitian operators: for example, Pauli strings $O_k \in \{\mathrm{Id}_2, X, Y, Z\}^{\otimes N}$ form a complete orthonormal set with respect to the Hilbert-Schmidt scalar product $\tr[P^\dagger_k P_l]/2^N = \delta_{kl}$. Here $\delta_{kl}$ denotes the Kronecker delta and $X$, $Y$ and $Z$ are Pauli matrices. Let us now define our operator pool as a suitable set of such operators.

\begin{definition}[Operator pool]\label{def:oppool}
We define an operator pool $\pool$
as a collection of $\npool$ orthonormal Hermitian operators as
\begin{equation}
    \pool := \{O_k \}_{k=1}^{\npool},
\end{equation}
where $\tr[O^\dagger_k O_l]/2^N = \delta_{kl}$ and $O_k^\dagger = O_k$.
For example, our operator pool can be constructed of Pauli strings
as $O_k \in \{\mathrm{Id}_2, X, Y, Z \}^{\otimes N}$
where $k \in \{0,1,2,3\}^{N}$ and the overall number of terms is denoted as $\npool \leq 4^N$.
\end{definition}

Let us define the covariance matrix associated to our operator pool from \cref{def:oppool} which depends on a quantum state $| \psi \rangle$.
\begin{definition}[Complex covariance matrix]\label{def:covar_matr}
Given a collection of operators $O_k, O_l \in \pool$ from \cref{def:oppool} we define an associated Hermitian covariance matrix $\cmat(\psi) \in \mathbb{C}^{\npool \times \npool}$
that depends on a pure quantum state $| \psi \rangle$
and has the matrix entries
\begin{align} \label{eq:covar_matr}
	[\cmat^\dagger (\psi)]_{kl} =& [\cmat (\psi)]_{kl} :=
	\bilinear{O_k}{O_l}{\psi} 
	\\
	=& \langle \psi  | O_k O_l | \psi \rangle - \langle \psi  | O_k  | \psi \rangle \langle \psi  | O_l  | \psi \rangle.\nonumber
\end{align}
\end{definition}

Note that the above covariance matrix $\cmat (\psi)$ expresses fundamental
uncertainty relations between the observables $O_k$ in the operator pool via the matrix inequation $\cmat \geq 0$~\cite{tripathi2020covariance}.
We remark that our definition above of a covariance matrix has complex entries: While this definition will simplify our following arguments, it is worth noting that in the literature other conventions are also commonly used \cite{ferraro2005gaussian,carmi2018significance,tripathi2020covariance}.
For example, ref.~\cite{tripathi2020covariance} defines a covariance matrix in terms of the anticommutator as the real part
\begin{equation}\label{eq:covar_real}
   \tfrac{1}{2} \langle \psi  | \{ O_k, O_l\} | \psi \rangle - \langle \psi  | O_k  | \psi \rangle \langle \psi  | O_l  | \psi \rangle =
    \mathrm{Re}[\cmat (\psi)],
\end{equation}
while the imaginary part is often referred to as
the commutator matrix
$\tfrac{1}{2} \langle \psi  | [ O_k, O_l] | \psi \rangle  =   i \mathrm{Im}[\cmat (\psi)]$.
We will find it convenient to compactly describe the covariance matrix with complex entries thereby simultaneously referring to both the anticommutator and commutator matrices.

Given the above definitions we can straightforwardly derive a number of useful identities which we need not prove here.

\begin{corollary}\label{corollary:decompose}
Given the decompositions $A = \sum_k a_k O_k$ and $B = \sum_k b_k O_k$ of Hermitian operators $A$ and $B$ in terms of the orthonormal operator basis from \cref{def:oppool} we can obtain the covariance functional from the covariance matrix as
\begin{equation*}
    \bilinear{A}{B}{\psi} =  
    \sum_{k,l} a_k b_l [\cmat (\psi)]_{kl} 
    =
    \underline{a}^\intercal \,  \cmat(\psi) \, \underline{b},
\end{equation*}
where $\underline{a},\underline{b} \in \mathbb{R}^{\npool}$ are coefficient vectors of the operators and $\cmat(\psi)$ is the covariance matrix in this operator basis. We will later find it useful to express the special cases as the vector of covariances, or \textbf{covariance functions}  $\bilinear{O_k}{A}{\psi} = \cmat \, \underline{a}$, i.e., our primary quantities of concern will be covariances with the system Hamiltonian $\bilinear{O_k}{\mathcal{H}}{\psi}$
\end{corollary}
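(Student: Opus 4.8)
The plan is to exploit the manifest bilinearity of the covariance functional in its two Hermitian arguments, which reduces the claim to a direct substitution of the operator decompositions into \cref{def:covars}. First I would observe that the map $A \mapsto \langle \psi | A | \psi \rangle$ is linear, so that \emph{both} terms defining the covariance are bilinear in the pair $(A,B)$. Substituting $A = \sum_k a_k O_k$ and $B = \sum_l b_l O_l$ and distributing the linear expectation value term by term gives
\begin{equation*}
\langle \psi | AB | \psi \rangle = \sum_{k,l} a_k b_l \langle \psi | O_k O_l | \psi \rangle,
\end{equation*}
while the product of the two single-operator expectation values factorises identically as $\sum_{k,l} a_k b_l \langle \psi | O_k | \psi \rangle \langle \psi | O_l | \psi \rangle$. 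Subtracting the second expression from the first and comparing the bracketed difference with the matrix entries of $\cmat(\psi)$ from \cref{def:covar_matr} immediately yields the double-sum form $\bilinear{A}{B}{\psi} = \sum_{k,l} a_k b_l [\cmat(\psi)]_{kl}$, which is nothing but the quadratic form $\underline{a}^\intercal \, \cmat(\psi) \, \underline{b}$.

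The only point that genuinely needs care is the assertion that the coefficient vectors $\underline{a}, \underline{b}$ lie in $\mathbb{R}^{\npool}$. Here I would invoke orthonormality of the pool from \cref{def:oppool}, which gives the closed form $a_k = \tr[O_k A]/2^N$, together with the elementary fact that the trace of a product of two Hermitian operators is real: $\overline{\tr[O_k A]} = \tr[A^\dagger O_k^\dagger] = \tr[O_k A]$ by cyclicity and Hermiticity. This guarantees $\underline{a} \in \mathbb{R}^{\npool}$ and likewise for $\underline{b}$, so the quadratic form is well defined over the reals even though $\cmat(\psi)$ itself carries complex entries through its commutator part.

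The stated special case then follows by specialising the first argument to a single pool element: choosing the left operator to be $O_k$ corresponds to taking the coefficient vector of the first argument to be the standard basis vector $e_k$, so that the $k$-th component of the covariance-function vector reads $\bilinear{O_k}{A}{\psi} = \sum_l a_l [\cmat(\psi)]_{kl} = [\cmat(\psi) \, \underline{a}]_k$, which assembled over all indices $k$ is precisely $\cmat(\psi) \, \underline{a}$. I do not anticipate any real obstacle: the entire argument is a one-line consequence of bilinearity of the expectation value, and the sole bookkeeping subtlety is keeping the coefficient vectors real while the covariance matrix remains complex-Hermitian.
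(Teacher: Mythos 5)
Your proposal is correct, and it is exactly the straightforward bilinearity expansion the paper has in mind: the paper explicitly omits the proof of this corollary, stating that such identities "can straightforwardly be derived" from \cref{def:covars} and \cref{def:covar_matr}. Your added care about the reality of the coefficient vectors (via $a_k = \tr[O_k A]/2^N$ and the reality of the trace of a product of Hermitian operators) is a sound, if unstated, detail consistent with the paper's conventions.
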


The variance of any Hermitian operator $A$ can also be calculated conveniently from the covariance matrix using the decomposition of $A$ into orthogonal operators.
\begin{lemma}\label{lemma:variance}
Given the decomposition $A = \sum_k a_k O_k$ of any Hermitian operator $A$ in terms of the orthonormal operator basis from \cref{def:oppool} we obtain the variance of the operator as
\begin{equation*}
\var[A] = 
    \bilinear{A}{A}{\psi} =  
    \sum_{k,l} a_k a_l [\cmat (\psi)]_{kl} 
    =
    \underline{a}^\intercal \,  \cmat \, \underline{a},
\end{equation*}
where $\underline{a}^\intercal \in \mathbb{R}^{\npool}$ is a coefficient vector and $\underline{a}\,  \cmat \, \underline{a} \geq 0$ guarantees that $\cmat$ is positive semidefinite given its Hermiticity. We will later find it useful to express the variance in terms of covariance functions
$\var[A] =   \sum_k a_k  \bilinear{O_k}{A}{\psi}$.
\end{lemma}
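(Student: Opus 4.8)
The plan is to read almost everything directly off \cref{def:covars} and \cref{corollary:decompose}, so that the only genuinely substantive content is the positive-semidefiniteness claim. First I would set $B=A$ in the covariance functional of \cref{def:covars}, which gives $\bilinear{A}{A}{\psi} = \langle \psi | A A | \psi \rangle - \langle \psi | A | \psi \rangle^2$, and recognise the right-hand side as the textbook variance $\var[A]=\langle A^2\rangle - \langle A\rangle^2$ of the Hermitian observable $A$. This establishes the first equality with no computation.

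Next I would apply \cref{corollary:decompose} verbatim with $\underline{b}=\underline{a}$ (i.e.\ $B=A$). Since that corollary already supplies $\bilinear{A}{B}{\psi}=\sum_{k,l} a_k b_l [\cmat(\psi)]_{kl}=\underline{a}^\intercal \cmat\, \underline{b}$, specialising immediately yields the middle two expressions $\sum_{k,l} a_k a_l [\cmat]_{kl} = \underline{a}^\intercal \cmat\, \underline{a}$. The covariance-function form is then also automatic: \cref{corollary:decompose} records the vector identity $\bilinear{O_k}{A}{\psi}=[\cmat\,\underline{a}]_k$, so that $\var[A]=\sum_k a_k [\cmat\,\underline{a}]_k = \sum_k a_k \bilinear{O_k}{A}{\psi}$.

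The only part that requires care — and what I expect to be the main obstacle — is the positive-semidefiniteness assertion. Since $\var[A]\geq 0$ holds for every Hermitian $A$, and every real coefficient vector $\underline{a}$ corresponds to such an $A=\sum_k a_k O_k$, we obtain $\underline{a}^\intercal \cmat\, \underline{a}\geq 0$ for all real $\underline{a}$. The subtlety is that this real quadratic form does not by itself certify the full Hermitian operator inequality $\cmat\geq 0$: writing $\cmat=\re[\cmat]+i\,\im[\cmat]$ with $\re[\cmat]$ real symmetric and $\im[\cmat]$ real antisymmetric, the antisymmetric piece satisfies $\underline{a}^\intercal \im[\cmat]\,\underline{a}=0$, so real vectors only probe $\re[\cmat]$. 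To upgrade to $\cmat\geq 0$ I would therefore allow complex coefficients $\underline{c}$ and consider the (generally non-Hermitian) operator $A_c=\sum_k c_k (O_k-\langle O_k\rangle)$; positivity of $\langle \psi | A_c^\dagger A_c | \psi \rangle\geq 0$ expands, using $O_k^\dagger=O_k$, into $\sum_{k,l}\bar{c}_k c_l [\cmat]_{kl}=\underline{c}^\dagger \cmat\,\underline{c}\geq 0$ for every complex $\underline{c}$, which is precisely the statement that the Hermitian matrix $\cmat$ is positive semidefinite. Combined with the Hermiticity already fixed in \cref{def:covar_matr}, this completes the claim.
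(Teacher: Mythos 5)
Your proposal is correct, and it is in fact more complete than the paper, which states \cref{lemma:variance} without proof (it is introduced under the heading of identities that ``we need not prove here''). The reductions to \cref{def:covars} and \cref{corollary:decompose} are exactly the intended ones. Your main added value is the positive-semidefiniteness step: you are right that the lemma's own wording is too loose as stated, since positivity of the \emph{real} quadratic form $\underline{a}^\intercal \cmat\, \underline{a}\geq 0$ for $\underline{a}\in\mathbb{R}^{\npool}$ only constrains $\re[\cmat]$ (real vectors annihilate the antisymmetric part $\im[\cmat]$), and Hermiticity plus $\re[\cmat]\succeq 0$ does not imply $\cmat\succeq 0$ over $\mathbb{C}$ --- e.g.\ $\bigl(\begin{smallmatrix} 1 & 2i \\ -2i & 1 \end{smallmatrix}\bigr)$ is Hermitian with positive-semidefinite real part but has eigenvalues $-1$ and $3$. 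Your repair via complex coefficients, $\langle \psi | A_c^\dagger A_c | \psi \rangle = \underline{c}^\dagger \cmat\, \underline{c} \geq 0$ with $A_c=\sum_k c_k (O_k - \langle O_k\rangle)$, is the standard Gram-matrix argument; note that the paper implicitly contains the same observation later, in \cref{app:state_overlaps}, where $[\cmat(\psi)]_{kl}=\langle \phi_{O_k}|\phi_{O_l}\rangle$ with $|\phi_{O_k}\rangle := (O_k-\langle O_k\rangle)|\psi\rangle$ exhibits $\cmat$ as an overlap (Gram) matrix --- your $A_c|\psi\rangle$ is precisely $\sum_k c_k |\phi_{O_k}\rangle$. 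So your route agrees with the paper's eventual justification while correctly diagnosing that the sentence inside the lemma, taken literally, does not suffice on its own.
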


\subsection{Parametrisation through variational quantum circuits\label{sec:vari_circ}}
A variational quantum circuit  $U(\underline{\theta})$ usually refers to a series of 
$\nu$
parametrised quantum gates $U(\underline{\theta}) := U_\nu(\theta_\nu) \dots U_2(\theta_2) U_1(\theta_1)$ \cite{cerezoVariationalQuantumAlgorithms2021a} in some specific form, or \emph{ansatz} which often has layers of the form

\begin{equation}\label{eq:vari_gates}
	U_l(\theta_l) =\prod_m e^{-i \theta_l H_m} W_m.
\end{equation}
Where $H_m$ is the Hermitian generator of the gate parametrised by $\theta_m$ and $W_m$ can be a non-parametrised unitary associated with this layer. We will find it useful later to consider specific parametrised
gates where $H_m \in \{\mathrm{Id}_2, X, Y, Z\}^{\otimes N}$ is a Pauli string as typical in practice -- we will 
refer to these as Pauli gates. 
The number of gates in a shallow ansatz circuit is usually chosen such that the circuit depth grows slowly, such as $\mathrm{polylog}(N)$~\cite{farhi2014quantum,peruzzo2014variational,endoHybridQuantumClassicalAlgorithms2021, cerezoVariationalQuantumAlgorithms2021a, bharti2021noisy}.
Let us list here a number of well-studied ans{\"a}tze.\\[-3mm]

\textit{Hardware Efficient Ans{\"a}tze} are designed to be optimised for low circuit depth and maximal expressivity.
$U_l(\theta_l)$ are typically chosen as the native gates of the given hardware platform and the rotation angles $\theta_l$ of each
gate are treated as parameters to be optimised~\cite{kandala2017hardware}.

\textit{Unitary Coupled Cluster ansatz} (UCC) is a problem inspired ansatz for quantum chemistry. It proposes the candidate ground state using excitations of orbitals from some reference state $\ket{\psi_0}$, typically the Hartree-Fock state as $e^{T(\theta)-T(\theta)^\dagger} \ket{\psi_0}$. Here the cluster operator $T(\theta)$ \cite{taubeNewPerspectivesUnitary2006} is often restricted to single and double excitations, leading to the `UCCSD' ansatz (SD for single and double).

  \textit{Quantum Alternating Operator Ansatz, Hamiltonian Variational Ansatz and further variants}
are motivated by a time-discretised and trotterised adiabatic evolution that is guaranteed to drag the eigenstate
of a trivial Hamiltonian $\mathcal{H}_0$ to the desired problem Hamiltonian $\mathcal{H}$ for a sufficiently deep
ansatz. The evolution time $\theta_l$ of each piecewise constant, trotterised evolution $U_l(\theta_l) =  \prod_m e^{-i \theta_l H_m}$
is variationally optimised to find the ground state.

Applying this quantum circuit to an easy-to-prepare reference state of $N$ qubits defines our parametrised ansatz states as
\begin{equation*}
	| \psi(\underline{\theta}) \rangle := U(\underline{\theta}) |0\rangle^{\otimes N }.
\end{equation*}

Parameters of the ansatz circuit $\underline{\theta}$ are then varied through classical optimisation techniques
such that the quantum state $|\psi(\underline{\theta}_{opt})\rangle$ at the optimal set of parameters is a solution to our problem \cite{cerezoVariationalQuantumAlgorithms2021a}. Usually this optimisation is done by minimising a cost function, most typically the energy of a problem Hamiltonian $E(\underline{\theta})  = \langle \psi(\underline{\theta}) | \mathcal{H} | \psi(\underline{\theta}) \rangle$ \cite{peruzzo2014variational, tillyVariationalQuantumEigensolver2021a}, but note that variants of the VQE paradigm allow for the optimisation of other cost functions, such as the
variance of the Hamiltonian \cite{variance_minimisation} or non-linear functions of expected values \cite{koczor2019quantum}.

In the usual case when the gate generators $H_m$ in \cref{eq:vari_gates} are Pauli gates,
the cost function $E(\underline{\theta})$ has been shown to be a trigonometric polynomial~\cite{koczor2020quantumAnalytic}.
Finding the global minimum of $E(\underline{\theta})$ as trignomoteric functions has
been shown to be NP hard~\cite{PhysRevLett.127.120502} given the
rapidly increasing number of local minima.

Here we introduce a different paradigm; Instead of searching for the minimum of a single classical function $E(\underline{\theta})$, 
we efficiently estimate a large number of covariances that each depend on the set of parameters $\underline{\theta}$ and  thus each
corresponds to a unique surface as a function of $\underline{\theta}$ as illustrated in \cref{fig:plot_trdist}.
We prove that these parametrised covariances are indeed smooth functions of the circuit parameters $\underline{\theta}$.
\begin{lemma}[Smooth covariance functions]\label{lemma:smooth}
	Given a variational quantum state $| \psi(\underline{\theta}) \rangle := U(\underline{\theta}) |0\rangle^{\otimes N }$
	as defined via a variational quantum circuit $U(\underline{\theta}) \in \mathrm{SU}(2^N)$ we define the
	parametrised covariances as
	\begin{equation}\label{eq:param_covar_def}
		\covar_k(\underline{\theta}) := \bilinear{O_k}{\mathcal{H}}{\psi(\underline{\theta})},
		\quad \text{with}
		\quad \quad O_k \in \pool.
	\end{equation}
	The covariance functions $ \covar_k: \mathbb{R}^\nu \mapsto \mathbb{C}$
	are smooth, infinitely differentiable functions of the circuit parameters $\underline{\theta} \in \mathbb{R}^\nu$
	for any Hermitian operator $O_k$ and problem Hamiltonian $\mathcal{H}$.
	\end{lemma}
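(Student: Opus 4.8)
The plan is to reduce the smoothness of $\covar_k$ to the smoothness of the individual entries of the state vector $|\psi(\underline{\theta})\rangle$, and then trace the latter back to the matrix-exponential structure of the ansatz from \cref{eq:vari_gates}.

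First I would establish that every component of $|\psi(\underline{\theta})\rangle = U(\underline{\theta})|0\rangle^{\otimes N}$ lies in $C^\infty(\mathbb{R}^\nu)$. Each parametrised gate is built from matrix exponentials $e^{-i\theta_l H_m}$, and every such factor is an entire matrix-valued function of its scalar parameter: its power series converges on all of $\mathbb{C}$ and may be differentiated termwise arbitrarily often, with $\partial_{\theta_l} e^{-i\theta_l H_m} = -i H_m\, e^{-i\theta_l H_m}$. Since $U(\underline{\theta})$ is a \emph{finite} product of such factors, the Leibniz rule shows that every matrix entry of $U(\underline{\theta})$, and hence every component $\psi_j(\underline{\theta})$ of the state, is infinitely differentiable in $\underline{\theta}$.

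Second, I would observe that for any fixed matrix $M \in \mathbb{C}^{d\times d}$ the expectation value expands as
\begin{equation*}
    \langle \psi(\underline{\theta}) | M | \psi(\underline{\theta}) \rangle = \sum_{i,j} \overline{\psi_i(\underline{\theta})}\, M_{ij}\, \psi_j(\underline{\theta}),
\end{equation*}
a finite sum of products of the smooth functions $\psi_j$ and their complex conjugates $\overline{\psi_i}$. Complex conjugation preserves smoothness (the real and imaginary parts are each smooth), and finite sums and products of $C^\infty$ functions remain $C^\infty$, so every such expectation value is smooth. Applying this with $M = O_k\mathcal{H}$, $M = O_k$ and $M = \mathcal{H}$ in turn, and recalling from \cref{def:covars} that
\begin{equation*}
    \covar_k(\underline{\theta}) = \langle \psi | O_k \mathcal{H} | \psi \rangle - \langle \psi | O_k | \psi \rangle\, \langle \psi | \mathcal{H} | \psi \rangle,
\end{equation*}
we see that $\covar_k$ is a difference of a smooth expectation value and a product of two smooth expectation values; hence $\covar_k \in C^\infty(\mathbb{R}^\nu)$ with values in $\mathbb{C}$.

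There is no genuine obstacle here; the result is essentially structural. The only point deserving a word of care is that the gate generators $H_m$ generally do not commute, but this never interferes with the argument, because we only ever invoke the product rule and the smoothness of each individual exponential factor — at no stage do we need to reorder or commute operators. I would remark in passing that the same reasoning in fact delivers real-analyticity (indeed entireness in each $\theta_l$ separately), which is strictly stronger than the smoothness the lemma requires and is what later underpins parameter-shift rules and the trigonometric-polynomial structure of the covariance surfaces.
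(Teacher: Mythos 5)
Your proof is correct and takes essentially the same route as the paper's: both reduce the claim to the smoothness of expectation values $\langle\psi(\underline{\theta})|M|\psi(\underline{\theta})\rangle$ of fixed operators, which in turn follows from the smoothness of the matrix-exponential ansatz $U(\underline{\theta})$. The only cosmetic differences are that the paper first splits $\covar_k$ into four expectation values of \emph{Hermitian} operators via the anticommutator/commutator decomposition, whereas you handle the generally non-Hermitian product $O_k\mathcal{H}$ directly in the sesquilinear form, and you spell out the entire-function/Leibniz argument (also noting real-analyticity) where the paper simply asserts that $U(\underline{\theta})\in\mathrm{SU}(2^N)$ is a smooth mapping by construction.
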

Refer to \cref{sec:proof_smooth} for a proof.
Above we have introduced the more compact notation for these covariance functions as $\covar_k(\underline{\theta})$
to highlight that we pose our problem of finding eigenstates by finding simultaneous roots of a cohort
of smooth functions $\{\covar_k(\underline{\theta}) \}_{k=1}^{\npool}$. 
Furthermore, in the practically important special case when
the ansatz circuit is composed of Pauli gates we show that the covariances $\covar_k(\underline{\theta})$ are
actually trigonometric polynomials in $\underline{\theta}$ via~ref.~\cite{koczor2020quantumAnalytic}.
\begin{corollary}[Trigonometric polynomials]\label{cor:trig_poly}
	In the specific but pivotal scenario when parametrised gates in the
	ansatz circuit in \cref{eq:vari_gates} are Pauli gates, the covariances are trigonometric polynomials as
	$\covar_k(\underline{\theta}) 
	=	\sum_{j=1}^{3^{2\nu}} C_j \mathcal{T}_j(\underline{\theta})$
	where the prefactors $C_j \in \mathbb{C}$ depend on the index $k$
	while $\mathcal{T}_j(\underline{\theta})$ are trignometric monomials, i.e.,
	products of single-variate sine and cosine functions.
\end{corollary}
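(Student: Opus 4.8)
The plan is to exploit that every parametrised gate is a Pauli rotation $e^{-i\theta_l H_l}$ with $H_l^2 = \mathrm{Id}$, which forces each expectation value evaluated on $|\psi(\underline{\theta})\rangle$ to be a trigonometric polynomial of the stated form, and then to track how the two terms that build $\covar_k$ combine. Writing out the covariance \eqref{eq:param_covar_def} via \cref{def:covars} as $\covar_k(\underline{\theta}) = \langle\psi(\underline{\theta})|O_k\mathcal{H}|\psi(\underline{\theta})\rangle - \langle\psi(\underline{\theta})|O_k|\psi(\underline{\theta})\rangle\,\langle\psi(\underline{\theta})|\mathcal{H}|\psi(\underline{\theta})\rangle$, I would analyse the single expectation value $\langle O_k\mathcal{H}\rangle$ and the two factors $\langle O_k\rangle$ and $\langle\mathcal{H}\rangle$ of the product term separately, invoking the analytic structure established in ref.~\cite{koczor2020quantumAnalytic}.

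The central ingredient is the Heisenberg-picture identity for a single Pauli rotation: for any operator $M$ and any Pauli generator $H_l$,
\begin{equation*}
e^{i\theta_l H_l} M e^{-i\theta_l H_l} = \tfrac{M + H_l M H_l}{2} + \tfrac{\cos 2\theta_l}{2}\,(M - H_l M H_l) + \tfrac{i\sin 2\theta_l}{2}\,[H_l, M],
\end{equation*}
which follows purely from $H_l^2 = \mathrm{Id}$ and does not use Hermiticity of $M$. Conjugating the relevant fixed operator ($O_k\mathcal{H}$, $O_k$, or $\mathcal{H}$) successively through all $\nu$ Pauli rotations --- the non-parametrised $W_m$ only redefine the intermediate operators and introduce no $\underline{\theta}$ dependence --- shows that each rotation contributes exactly one of the three single-variate factors $\{1,\cos 2\theta_l,\sin 2\theta_l\}$. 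Hence each such expectation value is a sum $\sum_j c_j \prod_l g_l(\theta_l)$ of at most $3^{\nu}$ trigonometric monomials; since the identity is valid for non-Hermitian $M$, the complex expectation $\langle O_k\mathcal{H}\rangle$ is covered directly and simply carries complex coefficients $c_j$.

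Combining the pieces then yields the claimed form. The first term $\langle O_k\mathcal{H}\rangle$ supplies at most $3^{\nu}$ monomials. The product term is a product of two trigonometric polynomials with at most $3^{\nu}$ monomials each, so it expands into at most $3^{\nu}\cdot 3^{\nu} = 3^{2\nu}$ products of monomials; each such product is again a product of single-variate sine and cosine factors and is therefore itself a trigonometric monomial $\mathcal{T}_j(\underline{\theta})$ in the sense of the statement. Moreover, every monomial of the first term is obtained from a product monomial by choosing the trivial factor $g_l \equiv 1$ in the second polynomial, so the first term's monomials already lie among the $3^{2\nu}$ monomials of the product term; collecting coefficients gives $\covar_k(\underline{\theta}) = \sum_{j=1}^{3^{2\nu}} C_j\,\mathcal{T}_j(\underline{\theta})$ with $C_j\in\mathbb{C}$, the complex values originating from the first term. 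I expect the only genuine obstacle to be the bookkeeping of this final step --- verifying that the single-variate products stay inside the declared monomial family and that the two contributions can be expressed over a common monomial set, so that the total count is $3^{2\nu}$ rather than the naive $3^{\nu}+3^{2\nu}$; the remainder reduces to the standard Pauli-rotation expansion and is routine.
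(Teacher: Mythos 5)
Your proposal is correct, and it reaches the corollary by a genuinely more self-contained route than the paper. The paper never re-derives the trigonometric structure of expectation values: it reuses the Hermitian decomposition from the smoothness proof, writing $\covar_k = A_{re} + i A_{im} - BC$ with $A_{re} = \langle \tfrac{1}{2}\{O_k,\mathcal{H}\}\rangle$ and $A_{im} = \langle -\tfrac{i}{2}[O_k,\mathcal{H}]\rangle$ as in \cref{eq:covar_expected_herm}, precisely so that it can invoke the result of ref.~\cite{koczor2020quantumAnalytic} that the expected value of any \emph{Hermitian} observable is a trigonometric polynomial $\sum_{j=1}^{3^\nu} c_j T_j(\underline{\theta})$ with real coefficients and monomials built from $a(\theta)=(1+\cos\theta)/2$, $b(\theta)=\sin\theta/2$, $c(\theta)=(1-\cos\theta)/2$; the complex prefactors then arise from the recombination $A_{re}+iA_{im}$. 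You instead derive the expansion from first principles via the conjugation identity for generators with $H_l^2 = \mathrm{Id}$ — which is correct as stated, the $\cos 2\theta_l, \sin 2\theta_l$ arguments being a harmless reparametrisation of the half-angle convention behind $a,b,c$ — and you observe that the identity never uses Hermiticity of $M$, so $\langle O_k\mathcal{H}\rangle$ is handled directly with complex coefficients, bypassing the anticommutator/commutator split entirely. The final combinatorial step is the same in both proofs: the product $\langle O_k\rangle\langle\mathcal{H}\rangle$ of two $3^\nu$-term polynomials expands into at most $3^{2\nu}$ product monomials. But your absorption of the first term into the common monomial family is actually tidier than the paper's: your per-variable basis $\{1,\cos 2\theta_l,\sin 2\theta_l\}$ contains the constant, so each $3^\nu$-type monomial embeds by choosing the trivial second factor, whereas in the paper's basis the constant function is not a monomial ($1 = a + c$) and the corresponding absorption — which the paper glosses over when collapsing $\sum_j \tilde{c}_j T_j - \sum_{j,l} c'_j c''_l T_j T_l$ into a single sum of $3^{2\nu}$ terms — would require re-expanding. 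In short, the paper's route buys reuse of machinery it needs anyway (the Hermitian splitting and the cited analytic-descent result), while yours buys independence from that reference, explicit control of the complex coefficients, and a cleaner common-monomial bookkeeping.
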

As such, finding parameters such that $\covar_k(\underline{\theta}) =0$ for all $k$ is equivalent 
to finding roots of the corresponding (trigonometric) polynomial system.

\section{General results: Finding Eigenstates by Finding Roots\label{sec:general}}

This section introduces the main theoretical underpinnings of our approach in a general
setting, i.e., without making any assumptions about the problem Hamiltonian or
the type of operator pool. In contrast, in \cref{sec:covar} we will introduce \cv which is a specific, practically
motivated approach where we restrict operators to local Pauli strings which in return  allows us to
utilise the powerful classical shadow technique. We note that we will also investigate another
theoretically interesting special case of operator pools in \cref{app:orthogonal}.

\subsection{Finding eigenstates of a problem Hamiltonian}
Finding approximate representations of eigenstates of a problem Hamiltonian is a key application of near-term quantum computers. The primary hope for quantum advantage in the near term is usually placed on variational quantum algorithms whereby the solution to a problem is encoded into the ground state of a Hamiltonian. Most notable is the Variational Quantum Eigensolver \cite{peruzzo2014variational} which aims to find the ground state of a Hamiltonian via a variational minimisation of the system's energy. Finding excited states is also of particular importance for, e.g., analysing chemical reactions in drug discovery or in catalysis~\cite{reiherElucidatingReactionMechanisms2017}. 

Furthermore, applications for finding eigenstates also exist outside of quantum simulation, for example, in solving classical optimisation problems using quantum hardware via the Quantum Approximate Optimisation Algorithm (QAOA). These were introduced to solve problems such as constraint satisfaction and Max-Cut~\cite{farhi2014quantum} but have been extended beyond. Furthermore, finding eigenstates is also highly relevant to the continued design and improvement of applications and quantum algorithms. For example, recompilation problems are highly relevant as we will show. Similarly, the preparation of logical states in quantum error correction can be cast as eigenstate finding procedures~\cite{khatriQuantumassistedQuantumCompiling2019, QVECTOR}.

The aforementioned techniques typically proceed by exploiting the fact that the problem Hamiltonians of interest $\sum_{a=1}^r h_a \mathcal{H}_a $ decomposes into a polynomially growing number $r \in \mathrm{poly}(N)$ of Pauli operators whose expected values can be estimated
efficiently with a quantum computer. In the following we will denote the collection of these Pauli strings
as $\mathcal{Q} = \{\mathcal{H}_a\}_{a=1}^r$.
Let us now introduce our main result which uses covariances described in the previous section to finding eigenstates of a problem Hamiltonian.

\begin{theorem}\label{theo:main}
	Given the decomposition of a fixed problem Hamiltonian $\mathcal{H} = \sum_{a=1}^r h_a \mathcal{H}_a $
	into a set of basis operators $\mathcal{H}_a \in \mathcal{Q} \subseteq \pool$ which form a subset of our
	operator pool.
    This subset usually has a polynomial size as $r \in \mathrm{poly}(N)$.
	Given a fixed quantum state $|\psi\rangle$, simultaneous roots of all covariances
	\begin{equation}\label{eq:sufficient_cond}
	    \textbf{sufficient conds.} \quad \bilinear{\mathcal{H}_a}{\mathcal{H}}{\psi}=0, \quad \forall \mathcal{H}_a \in \mathcal{Q}
	\end{equation}
	provide a sufficient condition for the eigenvalue equation
	$\mathcal{H} | \psi \rangle = \langle \mathcal{H} \rangle  | \psi \rangle$ to hold. Further necessary conditions
	can be introduced via roots of the covariances
	\begin{equation}\label{eq:necessary_cond}
	   \textbf{necessary conds.} \quad \bilinear{O_k}{\mathcal{H}}{\psi}=0, \quad O_k \in \pool
	\end{equation}
	with respect to any basis operator in our pool $O_k$.
\end{theorem}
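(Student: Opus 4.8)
The plan is to treat the statement as two logically separate claims---a sufficiency claim and a necessity claim---and prove each independently. For the sufficient direction I would exploit the bilinearity of the covariance functional (\cref{corollary:decompose}) together with the variance identity of \cref{lemma:variance}, reducing the hypothesis to the vanishing of $\var[\mathcal{H}]$ and then invoking the standard fact that zero variance forces an eigenstate. For the necessary direction a single direct substitution of the eigenvalue equation into \cref{def:covars} suffices.

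First, for the sufficient conditions, I would use that the decomposition $\mathcal{H} = \sum_{a=1}^r h_a \mathcal{H}_a$ has \emph{real} coefficients $h_a$, so that \cref{lemma:variance} expresses the variance of the Hamiltonian as
\begin{equation*}
\var[\mathcal{H}] = \bilinear{\mathcal{H}}{\mathcal{H}}{\psi} = \sum_{a=1}^r h_a \bilinear{\mathcal{H}_a}{\mathcal{H}}{\psi}.
\end{equation*}
Only the operators carrying a nonzero coefficient contribute, so precisely the covariances of \cref{eq:sufficient_cond} appear and no reference to the full pool is needed. Hence if every $\bilinear{\mathcal{H}_a}{\mathcal{H}}{\psi}$ vanishes then $\var[\mathcal{H}]=0$. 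The remaining step is to convert vanishing variance into the eigenvalue equation: writing the Hermitian shift $\tilde{\mathcal{H}} := \mathcal{H} - \langle\mathcal{H}\rangle\,\mathrm{Id}$, I would identify $\var[\mathcal{H}] = \langle\psi|\tilde{\mathcal{H}}^\dagger \tilde{\mathcal{H}}|\psi\rangle = \big\| \tilde{\mathcal{H}}|\psi\rangle \big\|^2$, so that $\var[\mathcal{H}]=0$ is equivalent to $\tilde{\mathcal{H}}|\psi\rangle = 0$, i.e.\ $\mathcal{H}|\psi\rangle = \langle\mathcal{H}\rangle|\psi\rangle$. This closes the sufficient direction while invoking only the polynomially many covariances against $\mathcal{H}_a \in \mathcal{Q}$.

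Next, for the necessary conditions, I would assume $|\psi\rangle$ already satisfies $\mathcal{H}|\psi\rangle = \langle\mathcal{H}\rangle|\psi\rangle$ and substitute this directly into \cref{def:covars}. Acting with $\mathcal{H}$ on the ket in the first term yields $\langle\psi|O_k\mathcal{H}|\psi\rangle = \langle\mathcal{H}\rangle\langle\psi|O_k|\psi\rangle$, which exactly cancels the product term $\langle\psi|O_k|\psi\rangle\langle\mathcal{H}\rangle$, so that $\bilinear{O_k}{\mathcal{H}}{\psi}=0$ for \emph{every} $O_k \in \pool$. This shows the constraints in \cref{eq:necessary_cond} are satisfied automatically at any eigenstate and may therefore be appended to over-determine the root-finding system without discarding solutions.

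I do not anticipate a genuine analytic obstacle, as the content becomes elementary once the bilinearity reduction to the variance is spotted. The one point requiring care is the bookkeeping of the complex versus real structure: the individual covariances $\bilinear{\mathcal{H}_a}{\mathcal{H}}{\psi}$ are complex, yet their real-weighted combination collapses to the manifestly real, non-negative variance, so demanding that each complex covariance vanish is a strictly stronger---but plainly valid---sufficient condition. I would also state explicitly that sufficiency draws only on the $\mathcal{Q}$-covariances whereas \cref{eq:necessary_cond} supplies \emph{additional} necessary constraints over the full pool $\pool$, keeping the distinct logical roles of the two operator sets cleanly separated.
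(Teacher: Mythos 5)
Your proposal is correct and follows essentially the same route as the paper's own proof: vanishing of the $r$ covariances collapses, via \cref{lemma:variance}, to $\var[\mathcal{H}]=0$, which the norm identity $\var[\mathcal{H}]=\lVert(\mathcal{H}-\langle\mathcal{H}\rangle)\,|\psi\rangle\rVert^2$ converts into the eigenvalue equation, while the necessary direction is the same direct substitution of $\mathcal{H}|\psi\rangle=\langle\mathcal{H}\rangle|\psi\rangle$ into \cref{def:covars}. The only quibble is your aside that demanding each complex covariance vanish is ``strictly stronger'' than $\var[\mathcal{H}]=0$: by your own necessary-direction argument the two are in fact equivalent, since zero variance forces an eigenstate, which in turn annihilates every covariance.
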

\begin{proof}
    \textbf{sufficient conds.}
	A direct calculation shows that the variance of the operator $\mathcal{H}$ can be expressed as
	$\var[\mathcal{H}] = \langle ( \mathcal{H} - \langle \mathcal{H} \rangle) \psi  | ( \mathcal{H} - \langle \mathcal{H} \rangle) \psi \rangle $
	and therefore the condition $\var[\mathcal{H}] =0$  immediately
	implies the eigenvalue equation $\mathcal{H} | \psi \rangle = \langle \mathcal{H} \rangle  | \psi \rangle$.
	Given our expression $\var[\mathcal{H}] =   \sum_a h_a  \bilinear{\mathcal{H}_a}{\mathcal{H}}{\psi}$ from \cref{lemma:variance} 
	simultaneous roots as $\bilinear{\mathcal{H}_a}{\mathcal{H}}{\psi} =0$ for all $a$ guarantee that 
	$\var[\mathcal{H}]=0$.\\
	\textbf{necessary conds.}
    The explicit form of the covariance as $\bilinear{A}{\mathcal{H}}{\psi} = \langle \psi  | A \mathcal{H} | \psi \rangle - \langle \psi  | A  | \psi \rangle \langle \psi  | \mathcal{H}  | \psi \rangle$ 
    simplifies when the eigenvalue equation $\mathcal{H} | \psi \rangle = \langle \mathcal{H} \rangle  | \psi \rangle$ is satisfied
    as $\bilinear{A}{\mathcal{H}}{\psi} = 0$ for Hermitian operators $A \in \mathbb{C}^{d\times d}$.
\end{proof}

	We note that the individual covariance functions may vanish without implying the presence of eigenstates of the problem Hamiltonian, for example $\bilinear{\mathcal{H}_a}{\mathcal{H}}{\psi}{=}0$ can be satisfied for a single index $a$ in the special case when
$| \psi \rangle$ is an eigenstate of $\mathcal{H}_a$.
These form a submanifold of the smooth covariances when viewed as a function of circuit parameters
as illustrated with blue lines in \cref{fig:plot_trdist}.
We therefore predicate that all covariance functions in our operator pool simultaneously vanish (red dots in \cref{fig:plot_trdist}) for all indexes $k$
which necessarily implies an
eigenstate of the problem Hamiltonian. The problem of searching for eigenstates of $\mathcal{H}$ then becomes
that of finding simultaneous roots of a system of covariances.

The above theorem ensures us that in an eigenstate all the exponentially many covariances vanish (necessary conditions), however, it is sufficient to verify only that the polynomially growing number of covariances are zero (sufficient conditions). Of course $\var[\mathcal{H}]=0$ certainly guarantees an eigenstate, however, the experimental estimation of $\var[\mathcal{H}]$ proceeds by computing expected values of individual Pauli terms and is thus informationally equivalent to estimating the above covariances~\cite{endoHybridQuantumClassicalAlgorithms2021, cerezoVariationalQuantumAlgorithms2021a, bharti2021noisy}.

While \cref{eq:sufficient_cond} lists all sufficient conditions with respect the minimal operator pool that
only contains the Pauli-decomposition terms of our problem Hamiltonian as  $\mathcal{Q}$, 
in the following we consider unions such that our operator pool is enlarged as
$\mathcal{Q}' := \{\mathcal{H}_a\}_{a=1}^r \cup \{ O_k \}_{k=1}^{\nc-r}$
 with a polynomially growing number of operators $O_k$
that are orthogonal to our problem Hamiltonian $\tr [ \mathcal{H} O_k ] = 0$ (via not including common terms $\{\mathcal{H}_a\} \cap \{ O_k \} =  \emptyset$).
Roots of all covariances with respect to our enlarged operator pool then signify an eigenstate
and we will show below that the enlarged operator pool increases the efficacy of our optimisation algorithm, i.e., by
over-constraining the Jacobian of our root finding approach. We will
refer to the size of our enlarged pool $|\mathcal{Q}'| = \nc $
as the number $\nc$ of constraints.

\subsection{Finding joint eigenstates of commuting observables}

Many problems of practical interest are concerned with finding joint eigenstates of a group of observables that all commute with each other.
For example, to prepare logical states for quantum error correction we wish to produce an eigenstate of the generators of the corresponding stabiliser group -- these generators mutually commute~\cite{nielsenChuang}. Another example is the case of recompilation of quantum circuits. Here we wish to transform a given gate sequence $V$ into a native gate sequence $U$ with an optimal circuit depth, e.g., to make it resilient to noise.

Both in the case of Full Unitary Matrix Compilation (FUMC)~\cite{khatriQuantumassistedQuantumCompiling2019} and Fixed Input State Compilation~(FISC) \cite{jonesRobustQuantumCompilation2022} the problem can be stated as applying $U^\dagger$ after $V$ onto our reference state $|\underline{0} \rangle$ (see section \ref{sec:applications_recomp} for details) which at the solution $V=U$ would correspond to the identity operation
$U^\dagger V = \mathrm{Id}$ and the resulting state $|\underline{0}\rangle$ is then the ground state of the Hamiltonian  $-\sum_{j=1}^N Z_j$. While one ultimately aims to find the ground state of this Hamiltonian, note that we can also accept any
computational basis state  $|n\rangle$ which are simultaneous eigenstates of the mutually commuting terms $\{Z_j\}$.
This motivates our approach of finding joint eigenstates of the individual Hamiltonian terms.

\begin{corollary}\label{cor:commuting_observables}
    Let us consider a set of mutually commuting Hermitian operators $\{ \mathcal{H}_a\}_{a=1}^r =: \mathcal{Q} \subseteq \pool$ 
    as our operator pool with $[\mathcal{H}_a, \mathcal{H}_b] =0$ for all $a,b$.
    Simultaneous roots of all variances
    \begin{equation*}
      \text{sufficient conds}  \quad \var[\mathcal{H}_a] =   \bilinear{\mathcal{H}_a}{\mathcal{H}_a}{\psi}
        = 0, \quad \forall a
    \end{equation*}
	provide a set of sufficient conditions such that the fixed quantum state $|\psi \rangle$ is a
	simultaneous eigenstate of all $\mathcal{H}_a$.
	We can consider further necessary constraints as the simultaneous roots of all coavariances
    \begin{equation*}
	   \text{necessary conds} \quad \bilinear{O_k}{\mathcal{H}_a}{\psi}=0, \quad O_k \in \pool, \mathcal{H}_a \in \mathcal{Q},
	   \end{equation*}
	   that need to be satisfied by $|\psi \rangle$ for any pair of operators $O_k$ and $\mathcal{H}_a $.
\end{corollary}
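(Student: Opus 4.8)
The plan is to recognise that this corollary is a direct specialisation of the argument underlying \cref{theo:main}, applied separately to each individual operator $\mathcal{H}_a$ rather than to a single composite Hamiltonian. I would therefore reuse the two building blocks already established in the proof of that theorem, treating the sufficient and necessary parts in turn.

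For the sufficient conditions, I would first recall the identity from the proof of \cref{theo:main}, namely that for any Hermitian operator $A$ one has $\var[A] = \langle (A - \langle A \rangle)\psi \, | \, (A - \langle A \rangle)\psi \rangle$, so that $\var[A]$ is the squared norm of the vector $(A - \langle A \rangle)|\psi\rangle$. Applying this to $A = \mathcal{H}_a$, the condition $\var[\mathcal{H}_a] = \bilinear{\mathcal{H}_a}{\mathcal{H}_a}{\psi} = 0$ forces $(\mathcal{H}_a - \langle \mathcal{H}_a \rangle)|\psi\rangle = 0$, i.e.\ $|\psi\rangle$ is an eigenstate of $\mathcal{H}_a$. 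Since this holds for every index $a$ simultaneously, $|\psi\rangle$ is a common eigenvector of the whole family $\{\mathcal{H}_a\}$, which is precisely the claim.

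For the necessary conditions, I would use the eigenvalue equation directly: once $|\psi\rangle$ is a simultaneous eigenstate, we have $\mathcal{H}_a |\psi\rangle = \lambda_a |\psi\rangle$ with $\lambda_a = \langle \psi | \mathcal{H}_a | \psi \rangle$. Substituting into the explicit covariance $\bilinear{O_k}{\mathcal{H}_a}{\psi} = \langle \psi | O_k \mathcal{H}_a | \psi \rangle - \langle \psi | O_k | \psi \rangle \langle \psi | \mathcal{H}_a | \psi \rangle$ makes both terms equal to $\lambda_a \langle \psi | O_k | \psi \rangle$, so the covariance vanishes for every $O_k \in \pool$ and every $\mathcal{H}_a \in \mathcal{Q}$ --- this is exactly the necessary-conditions argument of \cref{theo:main} reused with the substitution $\mathcal{H} \mapsto \mathcal{H}_a$ and $A \mapsto O_k$.

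The one point that deserves care --- and which I expect to be the only subtlety rather than a genuine obstacle --- is the role played by the mutual commutativity hypothesis $[\mathcal{H}_a, \mathcal{H}_b] = 0$. Both implications above in fact hold without assuming commutativity at all. What commutativity buys us is well-posedness: it guarantees that the operators share a common eigenbasis, so that a state satisfying all the sufficient conditions $\var[\mathcal{H}_a] = 0$ simultaneously actually exists. I would therefore close with a short remark emphasising that commutativity is required not for the logical chain of the proof but to ensure the system of root-finding conditions is simultaneously satisfiable, tying the statement back to the stabiliser-group and recompilation applications motivated immediately before the corollary.
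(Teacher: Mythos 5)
Your proposal is correct and takes essentially the same route as the paper, whose proof simply applies \cref{theo:main} to each operator $\mathcal{H}_a$ individually and notes that satisfying all the per-index sufficient conditions yields a simultaneous eigenstate; you merely inline the two steps of that theorem's proof (vanishing variance as a vanishing norm of $(\mathcal{H}_a - \langle \mathcal{H}_a \rangle)|\psi\rangle$, and direct substitution of the eigenvalue equation into the covariance) rather than citing it as a black box. Your closing observation --- that $[\mathcal{H}_a,\mathcal{H}_b]=0$ is never used in the logical implications and serves only to guarantee that a simultaneous eigenstate exists, i.e.\ that the joint root-finding problem is satisfiable --- is accurate and is a point the paper's proof leaves implicit.
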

\begin{proof}
    For each individual index $a$ we can apply \cref{theo:main} to the corresponding operator $\mathcal{H}_a$ and
    obtain necessary and sufficient conditions such that $|\psi\rangle$ is an eigenstate of the particular operator $\mathcal{H}_a$.
    It follows that if all sufficient conditions from \cref{theo:main} are satisfied for all indexes $a$, as listed above, then
    $|\psi\rangle$ is a simultaneous eigenstate of all $\mathcal{H}_a$.
\end{proof}

\newcommand{\covarvec}{\mathbf{f}}
\subsection{Conventional techniques for finding roots}
As introduced above, our approach is based on estimating
operator covariances with a quantum computer which we
use to inform our decision of updating parameters of our variational quantum 
circuit. Our aim ultimately is to find a simultaneous root of these covariances at which 
parameters the variational state is guaranteed to be an eigenstate of our problem Hamiltonian.

There are a large number of well-established techniques for finding simultaneous roots of vector-valued functions
and almost all such techniques are in some way related to Newton's original method \cite{dennis1996numerical,press2007numerical}.
Newton's method proceeds by linearising the non-linear (but smooth) vector of covariance functions $\covarvec(\underline{\theta})$ via
the first-order Taylor expansion as
\begin{equation}\label{eq:linear_model}
\covarvec(\underline{\theta} + \Delta \underline{\theta})
=
\covarvec(\underline{\theta})
+
\jac  \Delta \underline{\theta}
+
\mathcal{O}(\lVert \Delta \underline{\theta} \rVert^2).
\end{equation}
Given each covariance function is an infinitely differentiable, smooth function of
the parameters $\underline{\theta}$ one can indeed apply Newton's method
and can approximate roots by solving the equation
$\covarvec(\underline{\theta}+\Delta \underline{\theta}) = \underline{0}$
using the above expansion $\covarvec(\underline{\theta}) = - \jac  \Delta \underline{\theta}$ and neglecting
second-order terms.
This results in a linear system of equations which can be solved using techniques from linear algebra.

\begin{figure}[tb]
	\centering
	\includegraphics[width=\linewidth]{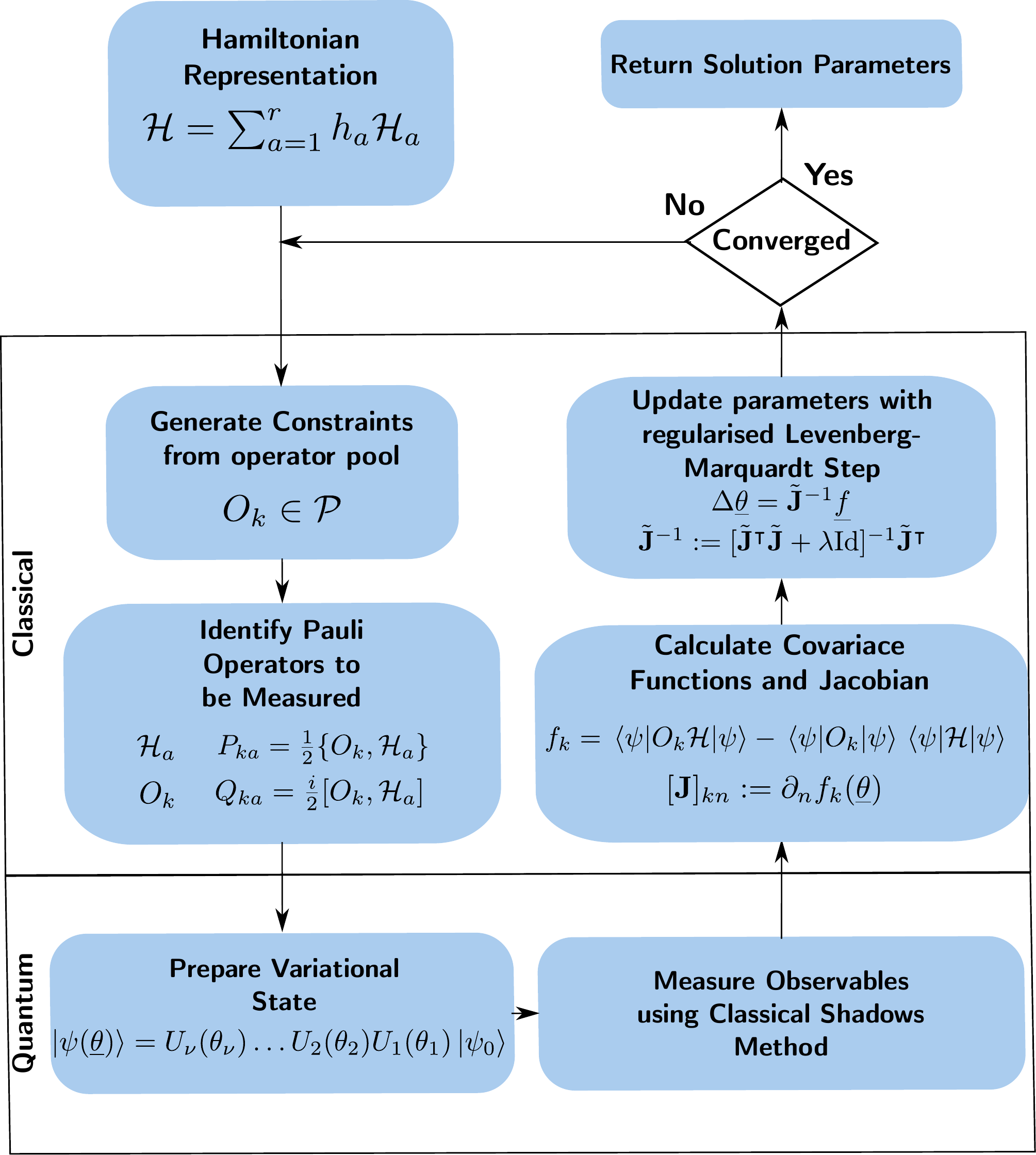}
	\caption{
		Flowchart depicting the \cv algorithm
			showing the separation between quantum and classical computations.
	}
	\label{fig:flowchart}
\end{figure}

The approach results in an iterative procedure whereby at every iteration we compute the Jacobian $\jac$ with a quantum computer
and apply its (regularised pseudo)inverse to the vector of covariances
$\covarvec$ to compute the parameter-update rule as
\begin{equation}\label{eq_jacobian}
	\underline{\theta}_{t+1} = 	\underline{\theta}_{t} - \jac^{-1} \covarvec
	\quad \quad \text{with} \quad \quad
	[\jac]_{kn} := \partial_n f_k(\underline{\theta}).
\end{equation}
We derive expressions for computing the Jacobian with a quantum computer
in \cref{app:jacobian}
using well established techniques from the literature~\cite{endoHybridQuantumClassicalAlgorithms2021, cerezoVariationalQuantumAlgorithms2021a, bharti2021noisy}, e.g.,
 parameter-shift rules.
Furthermore, we also discuss in Appendix~\ref{app_newton_method} that by stacking real parts of $\jac$ and $\covarvec$
on top of the imaginary parts results in real $\tilde{\jac}$ and $\tilde{\covarvec}$ -- enforcing that the solution of the
linear system of equations is a real vector $\Delta \underline{\theta}$.

While we aim to compute the Jacobian and the covariances with a quantum computer,
we note that the resulting linear systems of equations are solved with a classical computer.
It is important to note that we would obtain an under-determined system of equations if our operator pool were smaller
than the number of ansatz parameters as $\npool < \nu$. This is the reason why we require that 
our operator pool, and thus the dimension of the vector $\covarvec$ is at least as large as the number of circuit parameters
-- indeed, later we will aim to set up highly over-determined systems of equations.

While powerful, the vanilla Newton method has its limitations and is only guaranteed to converge
when starting near a root -- given the linear model in Eq.~\eqref{eq:linear_model} is only accurate for
small $\lVert \Delta \underline{\theta }\rVert$.
Nevertheless, a number of advanced techniques have been developed to increase the radius of convergence
and some variants of the Newton method have been proved to be globally convergent
under mild continuity conditions of the functions \cite{dennis1996numerical,okawa2018w4,pasquini1985globally}. 
In particular, the simplest globally convergent approach first attempts a conventional Newton step and if the
norm of the vector-valued function $\lVert \tilde{ \covarvec } \rVert$ does not decrease then a
line search is attempted in the step direction $\lambda \tilde{\jac}^{-1} \tilde{\covarvec}$~\cite{dennis1996numerical}
whereby one searches for $\lambda$ that minimises $\lVert \tilde{ \covarvec } \rVert$ along the 1-dimensional
search direction, see Appendix~\ref{app_newton_method} for more details. This approach is guaranteed
to converge to a root as long as the Jacobian is non-singular and well-conditioned \cite{dennis1996numerical}. 

Another family of closely related approaches are the Levenberg-Marquadt (LM) methods which are
additionally robust against ill-conditioned Jacobian matrices.
The approach can be shown to be equivalent
to the Gauss-Newton algorithm for least-squares minimisation with a trust-region method \cite{dennis1996numerical}.
It attempts steps along
what is formally a ``regularised Newton direction'' via the regularised inverse $\Delta \underline{\theta} =  [ A + \lambda \mathrm{Id}]^{-1} \tilde{\covarvec}$ with $A = \tilde{\jac}^\intercal \tilde{\jac}$ and accepts the regularisation parameter $\lambda$ based on some condition,
e.g., such that $\lVert \tilde{ \covarvec } \rVert$  decreases. The regularisation matrix is either $\mathrm{Id}$, but in
practice it is often chosen to be the diagonal matrix $\mathrm{diag}(\tilde{\jac}^\intercal \tilde{\jac})$.
In many practical applications of non-linear least-squares fitting LM can be interpreted
as an approximate Hessian optimisation, however, we detail in \cref{sec:hessian_comparison}
that this is not the case for our root finding approach.

\begin{figure}[tb]
	\centering
	\includegraphics[width=\linewidth]{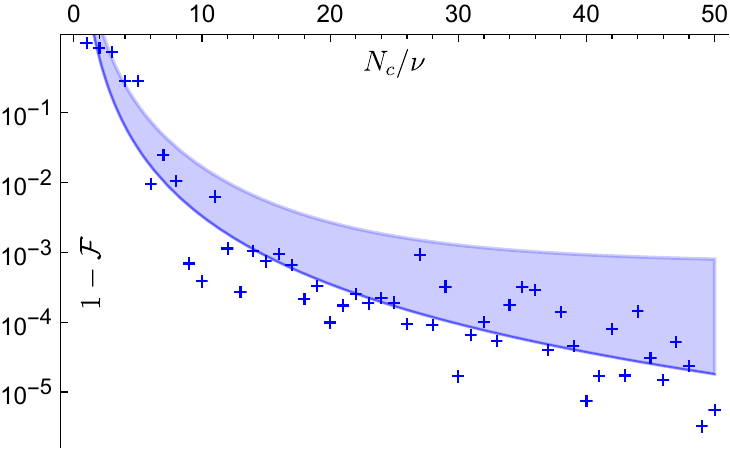}
	\caption{
		Performance improvement when we increase the number of constraints $\nc$
		illustrated on a $14$-qubit recompilation problem of `rediscovering' unknown parameters ($\nu = 124$) of an ansatz
		(refer to \cref{sec:applications_recomp} for more details).
			\cv was run for a fixed number $20$ of iterations (initial average fidelity $\mathcal{F}=46\pm7\%$)
			and the final achieved infidelity $1{-}\mathcal{F}$
			with respect to the ground state is reported:
			Blue crosses show the best of three runs of \cv and we plot their fit
			with the function $a(\nc/\nu)^{-b} +c$ (blue curve)
			as we detail in \cref{sec:ncons_comp_details}.
			The worst of three runs were also fitted with the same function (blue shade)
			confirming a polynomially (in $\nc$) increasing performance with $b \approx 3.2$
			-- although we expect this degree to depend on the problem.
			The large spread of datapoints is due to our randomly generated constraints (with respect to 3-local Pauli strings).
	}
	\label{fig:constraints_noise_comp}
\end{figure}

\section{Covariance Root Finding via classical shadows\label{sec:covar}}

In the previous sections we have described the general theory as the basis for our quantum optimisation algorithm.
We now detail concrete settings where our approach may achieve significant practical value in
exploiting near-term quantum devices.
Our aim is that the number of constraints $\nc$ in the linear system of equations is tractable but is significantly
larger than the number of circuit parameters as $\nu \ll \nc$.
For this reason we choose a $p$-local operator pool of size $\npool \in \mathcal{O}(N^p)$ 
that consists of all Pauli strings that act non-trivially on
only $p$ qubits as $\pool = \{ P_k \in \{\mathrm{Id}_2, X, Y, Z \}^{\otimes N}: \text{$P_k$ is $p$-local} \}$.

This fits very well with the use of classical shadows for determining a very large number of local Pauli strings.
In particular, the recent development of the classical shadows method \cite{classical_shadows} allows us in a NISQ-friendly way to measure $N_c$ covariances and their derivatives using a measurement count that is \emph{only logarithmic} in $N_c$. Therefore, it is possible to offload processing to the classical computer with only a small increase in the number of measurements (quantum resources) required. This combination is ideal for NISQ-era algorithms where quantum resources are limited and it is generally to our benefit if we can offload large, but tractable calculations to a classical computer. In the remainder of this section we describe the application of this method to local Hamiltonians, using a measurement channel of single-qubit Pauli gates \cite{classical_shadows}. In \cref{fig:flowchart} we provide a diagrammatic representation of the \cv algorithm.

\subsection{Stochastic optimisation with very large operator pools\label{sec:large_pool}}

Despite very promising experimental progress~\cite{aruteQuantumSupremacyUsing2019,zhongPhaseProgrammableGaussianBoson2021,wuStrongQuantumComputational2021,ebadiQuantumPhasesMatter2021,gongQuantumWalksProgrammable2021a}, near-term quantum devices  are noisy and 
in order to avoid practically prohibitive accumulation of errors
the circuit depth $D(N)$ is required to be shallow and is usually assumed to
grow poly-logarithmically as $D(N)\in \mathrm{polylog}(N)$.
The Jacobian is generally a non-square matrix with dimension $\jac \in \mathbb{C}^{ \nc \times \nu}$,
where $\nu$ is number of ansatz parameters typically scaling as $\nu =N \mathrm{polylog}(N) $ due to shallow circuit depth.  
As such, for a sufficiently large system we can always over-constrain
the Jacobian just by including covariances with respect to only two-local Pauli strings given then the
number of constraints $\nc = \mathcal{O}(N^2)$ grows faster than the number of circuit parameters.
We can thus conveniently define a very large operator pool for \cref{theo:main}
relative to the number of parameters in the ansatz circuit.

For this reason we set our operator pool $\pool$ to contain all $p$-local Pauli strings and we randomly select constraints $\underline{\covar} \in \mathbb{C}^{\nc}$ of a large size $\nc$ but much smaller than the full operator pool as $\nu \ll \nc \ll \npool$ -- but still much
larger than the number of circuit parameters.
This construction has the following advantages.
First, the large (but tractable) size of the Jacobian yields an over-constrained linear system of equations in \cref{eq_jacobian}
which significantly improves convergence speed as we demonstrate below.
Second, randomly choosing constraints has the advantage of navigating out of local traps as we numerically simulate in~\cref{sec:localmintest}.
Third, we employ stochastic Levenberg-Marquadt (LM) methods that adaptively regularise the Jacobian and
are thus by construction robust against the noise produced by random choice of constraints, as well as the hardware/shot noise on expectation values -- and rigorous proofs of convergence are available in the literature~\cite{bergouStochasticLevenbergMarquardtMethod2021, liewOptimizedSecondOrder2016}.
These are indeed properties why stochastic LM and stochastic gradient descent have been extremely popular in the
classical machine learning context, i.e., due their robustness against noise as well as their robustness against getting stuck in local traps~\cite{ruder2016overview, sweke2019stochastic}~\footnote{
	Stochastic gradient descent for VQE has been termed for instances when shot noise on estimated gradients is significant \cite{sweke2019stochastic}. In contrast, the present approach is stochastic due to the random selection of constraints
}

In \cref{fig:constraints_noise_comp} we confirm numerically on a $14$-qubit recompilation problem
that indeed the performance of root finding increases as the number of constraints $\nc/\nu$ in the linear system of equations is increased.
As we detail below in \cref{sec:applications_recomp}, this recompilation problem is a hard benchmarking task with the advantage that
our ansatz is capable of expressing the exact solution.
In \cref{fig:constraints_noise_comp} we ran \cv for a fixed number $20$ of iterations and plot how close the evolution came to the solution, i.e., 
the infidelity with respect to the ground state. We assume an idealised simulation with no shot noise or circuit noise; 
thus the only source of `error' is the linearisation of the non-linear covariances via \cref{eq:linear_model}
while the performance is significantly improved as we increase the number of constraints.

Let us attempt to intuitively explain on an analytical example why such an increasingly over-constrained system of equations
improves our ability to find the solution
Take for example the simple case when the ansatz circuit has a \emph{single parameter} $\theta$ and (as illustrated
in \cref{fig:plot_trdist} (right)) thus
the covariance function vector $\covarvec(\theta)$ can be linearised via \cref{eq:linear_model} as
\begin{equation}\label{eq:single_variable}
	\covarvec(\theta + \Delta \theta )= 
	\covarvec(\theta)
	+ \jac \Delta \theta
	+\mathcal{O}(|\Delta \theta|^2).
\end{equation}
Here the Jacobian is $J_k = \covar_k'(\theta)$ (assuming $J_k$ and $\covar_k$ are real
as we have stacked real and imaginary parts on top of each other).
For each individual
function $f_k(\theta)$, Newton's single-variate parameter update approximates the root as $\Delta_k \theta =- \covar_k(\theta) /J_k$, however,
we incur an error $\lVert \covarvec(\theta {+} \Delta \theta ) \rVert^2 = \sum_{j \neq k} [ \covar_j(\theta) - \covar_j'\Delta_k \theta ]^2+ ...$ due to the nonlinearity of $f_k(\theta)$ as we illustrate in \cref{fig:single_variable} (blue lines).
On the other hand, the least squares solution 
simultaneously takes into account all linearised constraints as $\covarvec(\theta {+} \Delta \theta ) = \underline{0}$
and is given analytically as $\Delta \theta  = (\jac^\intercal \jac)^{-1} \jac^\intercal \covarvec =   -\sum_k J_k \covar_k(\theta)/[ \sum_k J_k^2 ] $
which inherently minimises the aforementioned error via $\lVert \covarvec(\theta {+} \Delta \theta ) \rVert^2 \rightarrow min$.
Indeed, the least-squares solution (\cref{fig:single_variable} orange line)
approximates the solution much better than either the individual, single-variate Newton solutions (blue lines) or their average (black line).

\begin{figure}[tb]
	\begin{centering}
		\includegraphics[width=0.8\linewidth]{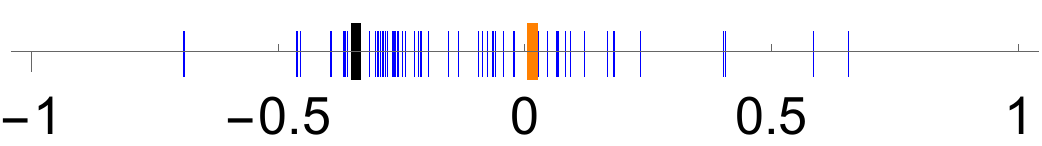}
	\end{centering}
	\caption{
		Illustrating the advantage of overdetermined systems of equations.
		The same variational circuit as in \cref{fig:constraints_noise_comp}
		with all parameters optimal except for a single variable which was disturbed as $\theta = 0.5$ and thus the exact root is at $\theta^\star = 0$. Blue lines show single-variate Newton steps computed for 100 individual covariance functions $\theta + \Delta_k \theta$ and their average is represented by the black line. With the orange line we obtain a better approximation of the root by solving a $\nc = 176$ over-determined system of linear equations (see main text).}
	\label{fig:single_variable}
\end{figure}

Let us now analyse the time complexity of classically computing the (pseudo)inverse of the Jacobian $\tilde{\jac}$.
We prove in \cref{app:classical_comp_time} that computing the least-squares solution to the linear systems of equations is dominated by
the step of computing $A := \tilde{\jac}^\intercal \tilde{\jac}$ which can be performed in time $\mathcal{O}(\nu^2 \nc)$ and, as such,
scales linearly with the number of constraints. The rest of the procedure, including the computation of the inverse  of the small square matrix
$A\in \mathbb{C}^{\nu \times \nu}$ can then be computed in negligible additive time.
Given the number of constraints grows at most as $\nc \leq \mathcal{O}(N^p)$ for our specific choice of $p$-local Pauli strings
the computation time $t$ grows at most as $t \leq \mathcal{O}[N^{p+2} \mathrm{polylog}(N)]$ with the number of qubits $N$.

We confirm these expectations in \cref{fig:classical_comp_fig} and estimate that a very large matrix with
$\nc = 10^6$ constraints for a variational circuit of $\nu = 10^3$ can straightforwardly be computed in
a matter of minutes and fits into the RAM of a typical single node -- while distributed computation for larger datasets $>10^7$ is possible with negligible communication between nodes.
We expect determining necessary expected values from classical shadows has a comparable computation time
which we detail below.

\subsection{Noise robustness}

Let us now demonstrate the aforementioned noise-robustness of our approach: as we experimentally
estimate the Jacobian and the covariances
we always incur a certain amount of shot noise (due to finite sampling) but also
possible noise due to experimental imperfections.
While we demonstrated in a noise-free environment that performance
	is improved when increasing the number of constraints, one might think that it could also lead
to an accumulation of noise. For this reason we prove in
\cref{app:shot_noise_floor} that the error in our estimate of the update rule $\Delta \theta$ in \cref{sec:large_pool} does
not accumulate as we increase the number of constraints, i.e., the error is constant bounded by
the worst-case error in a \emph{single Jacobian/covariance entry}. 

We obtain a similar conclusion for the error (shot noise) propagation in the general multi-variate case by applying the
error propagation formula of ref~\cite{van2020measurement} for matrix inversion.
In particular, the error in the update rule scales with the fourth power of the 
smallest inverse singular value (or regularisation parameter $\lambda^{-4}$) of $\jac$.
Given singular values of our $\nu \times \nc$-dimensional
Jacobian matrix grow with $\nc$, we expect \cv is particularly robust against shot noise. In our $\nu=1$-dimensional example in the
previous subsection we had a singular value
$[\sum_{k=1}^{\nc} J_k^2]^{1/2}$ of $\jac^\intercal \jac$ which indeed grows with the square root of $\nc$
for non-zero derivatives $|J_k|>0$.

In \cref{fig:constraints_noise} we repeated our simulations from \cref{fig:constraints_noise_comp}
 with added shot noise and circuit noise. In particular, \cref{fig:constraints_noise}(orange) shows our simulations with only
 shot noise added
and confirms our above analytical arguments: As we increase $\nc/\nu$ the optimisation is
able to come closer and closer to the root in a fixed number $20$ of iterations
up until a point when we reach a shot-noise floor $\mathcal{E}$ where the performance is no longer increased. This shot-noise floor is indeed below the precision of determining individual entries $N_s^{-1/2} \propto 10^{-2.5}$.

Furthermore, \cref{fig:constraints_noise}(black) shows the performance of root finding under simulated circuit noise but without shot noise.
As we detail in \cref{sec:recomp_details}, we have assumed two- and single-qubit gate error rates $\epsilon_2 =10^{-3}$ and $\epsilon_1 = \epsilon_2/4$, respectively, which is comparable to the performance of state-of-the art hardware. While the optimisation is performed with noise, the plotted infidelities are calculated without noise to reflect that the correct parameters are found as, e.g., error mitigation techniques are typically applied for extracting noise-suppressed
expected values from a final state~\cite{qemreview,koczor2020exponential,koczor2021dominant,huggins2020virtual}.
These results show a very similar performance to the case with shot noise only in \cref{fig:constraints_noise}(orange):
the performance is increased up until a point where a noise-floor is reached -- and the magnitude of this noise floor
in our example appears to be very close to the case of shot noise only.
Interestingly, our approach finds circuit parameters very close to the ideal
ones (small final infidelities) despite circuit noise -- this indeed resembles to the phenomenon of Optimal Parameter Resilience~\cite{sharma2020noise}, meaning this recompilation task is not merely learning in the applied circuit noise. 
These simulations confirm the robustness of our approach against experimental noise.

\subsection{Estimating a large number of covariances via classical shadows}

Recall that a $p$-local problem Hamiltonian can be specified in terms of its Pauli decomposition as $\mathcal{H} = \sum_{a=1}^{r} h_k P^{(p)}_a$ where the Pauli strings $P^{(q)}_a \in \mathcal{Q}^{(p)}$ are $p$-local, i.e., they only act on $p$ qubits non-trivially. Such local Hamiltonians are highly relevant in many important problems which include,
for example, recompilation, spin models in materials science, boolean satisfiability problems (3SAT) and fermionic models using mappings that retain operator locality \cite{derbyCompactFermionQubit2021,cerezo2020variationalreview,endoHybridQuantumClassicalAlgorithms2021,bharti2021noisy, pagano2020quantum,arute2020quantum}.

\begin{figure}[tb]
	\centering
	\includegraphics[width=\linewidth]{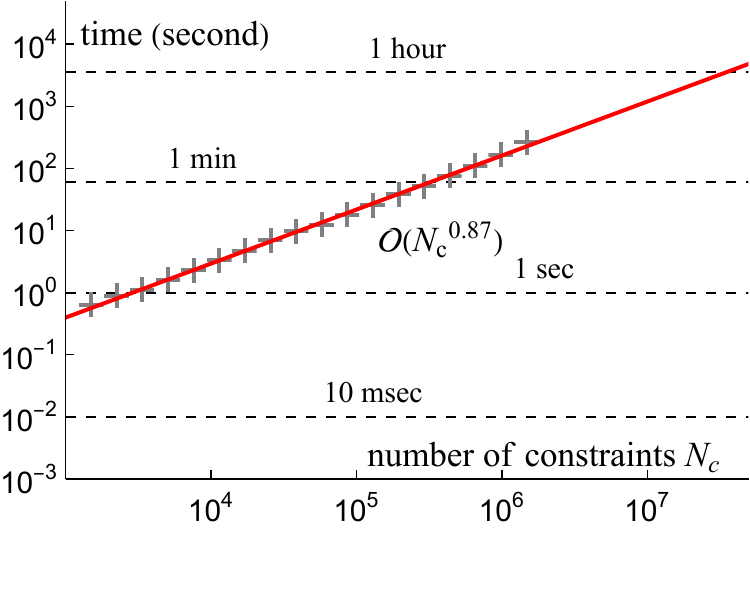}
	\caption{
		\label{fig:classical_comp_fig}
		Computation time with a Mathematica code of the 
		update rule 
		$\Delta \underline{\theta}=\tilde{\jac}^{-1}\covarvec$
		via the regularised inverse of the
		non-square matrix Jacobian $\tilde{\jac} \in \mathbb{R}^{ 2\nc \times \nu}$
		with a fixed number of circuit parameters $\nu = 1000$ and increasing number
		of constraints (covariances) $\nc$.
		The time complexity is $\mathcal{O}(\nc)$ linear in the leading dimension as the number of
		constraints and the absolute time is very reasonable, i.e., less than an hour
		even for very large matrices with $\nc = 10^7$	(Computed with a desktop PC).
	}
\end{figure}

Let us consider an operator pool $\pool^{q}$ that contains all $q$-local Pauli strings and thus has size $\npool \in \mathcal{O}(N^q)$.
We randomly choose covariances $\covar_k$ from this operator pool such that $\nc \ll \npool$,
 and as we discussed we aim to estimate a large number of covariances (constraints)
via a large but tractable $\nc$.
Let us now establish that we need only reconstruct expected values of at most $p{+}q$-local Pauli strings
in order to determine all covariances.
\begin{statement}\label{stat:locality}
Given a $p$-local problem Hamiltonian $\mathcal{H}$ we can estimate covariances $\bilinear{P^{(q)}_k}{\mathcal{H}}{\psi}$ with respect to at most $q$-local Pauli strings by reconstructing expected values of at most $p{+}q$-local Pauli strings of the form $\langle \psi | P^{(p+q)}_k  | \psi \rangle$.
\end{statement}
\begin{proof}
Let us explicitly write the covariances as
\begin{align}\nonumber
\bilinear{P^{(q)}_k}{\mathcal{H}}{\psi}
=&
\sum_{a=1}^r h_k \bilinear{P^{(q)}_k}{P^{(p)}_a}{\psi}
=
\sum_{a=1}^r h_k \langle \psi  | P^{(q)}_k P^{(p)}_a | \psi \rangle \\
-&\sum_{a=1}^r h_k \langle \psi  |  P^{(q)}_k  | \psi \rangle \langle \psi  | P^{(p)}_a | \psi \rangle.
\end{align}
Above the product of the Pauli strings $P^{(q)}_k P^{(p)}_a$
is proportional to a $p{+}q$-local Pauli string as $P^{(p+q)}_k$
up to possibly a prefactor $\pm i$ depending on whether $P^{(q)}_k$ and $ P^{(p)}_a$ commute or anticommute etc.
as discussed in \cref{app:jacobian}. As such, above we obtain a weighted sum of only expected values of Pauli strings, and thus we conclude that any covariance of the form $\bilinear{P^{(q)}_k}{\mathcal{H}}{\psi}$ can be reconstructed by estimating expected values of at most $p{+}q$-local Pauli strings.
\end{proof}

Note that determining all covariances that satisfy the sufficient conditions in \cref{eq:sufficient_cond} require that the locality
of the operator pool is at least as large as the locality of the problem Hamiltonian via $q\geq p$.

The classical shadow procedure~\cite{classical_shadows} fits very well with our
\cv approach as it allows us to estimate a very large number of
these covariances such that the number of samples (quantum resources)
increase only logarithmically with the number of constraints -- while the required measurements are very
NISQ friendly. Let us briefly recapitulate the main steps to reconstructing Pauli strings
using classical shadows.
\begin{itemize}[leftmargin=*]
	\item  We apply a random unitary $U$ to rotate the state. In our case of local Pauli strings the unitaries are chosen randomly from single qubit Clifford gates on each qubit and the procedure is thus equivalent to randomly selecting to measure in the $X, Y$ or $Z$ bases -- we measure each qubit to obtain $N$-bit measurement outcomes $\ket{\hat{b}_i} \in \{0,1\}^N$.
	\item We then generate the classical shadows by applying the inverse of the measurement channel $\mathcal{M}$, which can be done efficiently as the channel chosen is a distribution over Clifford circuits. The classical snapshots are generated as $\hat{\rho}_i = \mathcal{M}^{-1} ( U^\dagger \ket{\hat{b}_i}\bra{\hat{b}_i} U ) $, the classical shadows are collections of these snapshots $S(\rho; N) = [\hat{\rho}_1,...,\hat{\rho}_N]$.
	\item From these classical shadows we can construct $K$ estimators of $\rho$ from our $N_{batch}$ snapshots as \\ $\hat{\rho}_{(k)}=\frac{1}{r} \sum_{i=(k-1)r+1}^{k r} \hat{\rho}_i$ with $r=\lfloor N_{batch}/K \rfloor$ and classically calculate estimators of the Pauli expectation values $\hat{o}_i(N,K) = \text{median} \{ \tr(O_i \hat{\rho}_{(1)}),...,\tr(O_i \hat{\rho}_{(K)}) \}$.
	The classical computational resources are quite modest.
	\item The sample complexity of obtaining these estimators of $M$ Pauli operators of locality $l$ to error $\epsilon$ is $\mathcal{O} [3^l \log(M )/\epsilon^2]$. 
\end{itemize}

\begin{figure}[tb]
	\centering
	\includegraphics[width=\linewidth]{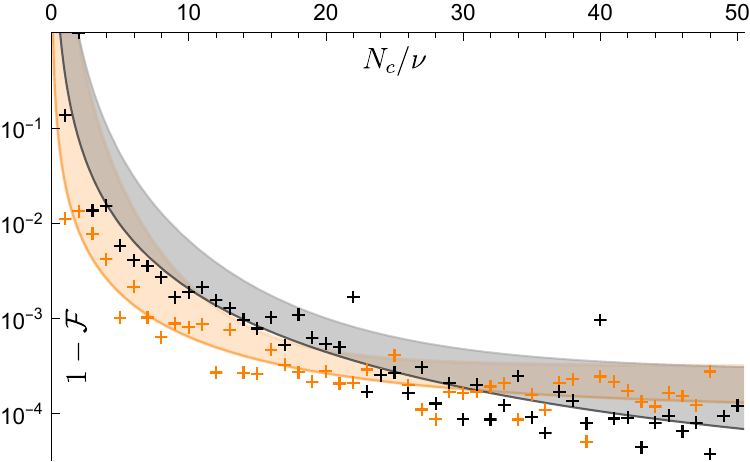}
	\caption{
		Simulations and fits are identical to those in~\cref{fig:constraints_noise_comp} but now with added noise. Orange shows results including a shot noise for $N_s=10^5$ shots, modelled as independent Gaussian noise on all expectation values. At small $N_c$ shot noise regularises the ill-conditioned linear system of equations and does seemingly improve performance.
		Black results show the performance of \cv under simulated circuit noise as described in~\cref{sec:recomp_details}, indicating the resilience of the method to reasonable levels of circuit noise.
		In the large-$\nc$ limit 
			a noise floor is approached whose magnitude in our particular example is
			comparable for both cases of only shot noise (orange) and only circuit noise (black)
			-- large spread of the datapoints is due to our randomly generated constraints and
			fits of the best (solid curves) and worst (shaded area) of 3 runs of \cv are included.
	}
	\label{fig:constraints_noise}
\end{figure}

Using classical shadows allows us to reconstruct all $(p{+}q)$-local Pauli strings with a sample complexity that is merely logarithmic in the system size. This fits particularly well with the preset approach: when the locality $p{+}q$ of Pauli strings is modest then 
we can obtain a large, polynomially growing number of constraints $\nc \in \mathcal{O}(N^{q})$.
Furthermore, given the covariances are fully determined by expected values (of local Pauli strings), we show that their analytical derivatives can be estimated using expected values at shifted circuit parameters via the so-called parameter-shift rules~\cite{paramshift}. In particular, each partial derivative in the Jacobian $[\jac]_{kn} := \partial_n f_k(\underline{\theta})$ is determined by estimating expected values at
two different shifted parameters as discussed in \cref{sec:computing_jacobian}. As such,
we can fully determine our Levenberg-Marquardt step just using expected values of local Pauli strings.

\begin{figure*}[tb]
	\centering
	\includegraphics[width=\textwidth]{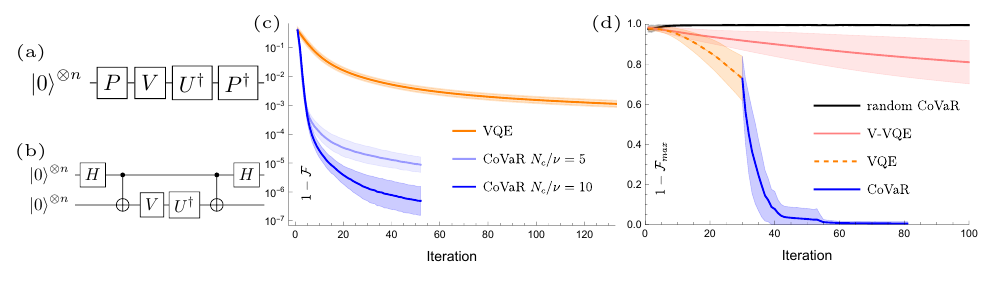}
	\caption{
		Demonstration of CoVaR in fixed input state recompilation for a $10$-qubit and 2-layer ($\nu = 88$) ansatz.
			The circuit in (a) is used with $V$ and $U$ being the same ansatz but with $V$ having solution parameters $\underline{\theta}^\star$ that are randomly chosen as  $| \theta^\star_k| \leq 2\pi$ but remain fixed over 20 runs. We rediscover these hidden parameters using the ansatz
			$U(\underline{\theta})$ where $\underline{\theta} = \underline{\theta}^\star {+} \Delta \underline{\theta}$. All curves show mean and standard deviation of infidelities over 20 runs and simulations
			include shot noise ($N_s=10^5$).
		(b) shows  the circuit for the recompilation of a full unitary. 
		(c) 
		performance of \cv (blue lines) using $N_c/\nu =5,10$ compared with gradient descent (orange). Infidelity $1{-}\mathcal{F}$ from the ground state is plotted as we initialise close to the solution via small random parameter perturbations $|\Delta \theta_k| \leq 0.3$ resulting in an
			initial fidelity $1-\mathcal{F}=41 \pm 8 \%$.
		(d) performance when initialised completely randomly in parameter space via $|\Delta \theta_k| \leq 2\pi$, showing the infidelity $1-\mathcal{F}_{max}$ to the nearest computational basis state. Black shows progress for CoVaR and red for variance-VQE -- when both are initialised randomly, both fail to make significant progress, regardless of choice of $\nc$. Additionally shown is the use of a short period of gradient descent to `initialise' (orange, dashed) and then \cv (blue, $\nc/\nu=20$), which reaches a final infidelity of $0.5\%$ on average (blue curve only includes the 16/20 runs which were able to converge).
	}
	\label{fig:gradient_comparison}
\end{figure*}

Let us now state the sample complexity of \cv whose (quantum) cost is
dominated by estimating the Jacobian and let us compare it to the
cost of determining a gradient vector used in energy minimisation.
\begin{statement}
Given a $p$-local problem Hamiltonian $\mathcal{H}$ we use classical shadows to determine a large number $\nc$ of covariances with respect to $q$-local Pauli strings.
The sample complexity of determining the Jacobian of size $\jac \in \mathbb{C}^{\nc \times \nu}$
to an error $\epsilon$ is 
\begin{equation*}
	N_s = \mathcal{O}[3^{p+q} \, \nu \log( r \nc  )/\epsilon^2 ].
\end{equation*}
In contrast, determining the gradient of the energy expected value $\langle \mathcal{H} \rangle$ using
classical shadows has a complexity $\mathcal{O}(3^p \nu \log( r )/\epsilon^2 )$. As such, determining a very large Jacobian is only logarithmically more expensive then determining an energy gradient (up to a multiplicative constant $3^{q}$ that depends on the modest locality of our choice, e.g, $q=2,3$.  
\end{statement}
\begin{proof}
Theorem~1 of ref.~\cite{classical_shadows} established that $M$ Pauli strings $O_i$ of locality $l$,
can be estimated to precision parameters $\epsilon, \delta$ via the number
of batches $K= 2\log(2M/\delta)$ and the number of samples in the individual batches as
$N_{batch}=\tfrac{34}{\epsilon^2} \max_{i} \lVert O_i \rVert_{shadow}^2$.
This results in an overall number of samples $N_{batch} K$ and the norm is given in Lemma~3 in ref.~\cite{classical_shadows}
as $\lVert O_k \rVert_{shadow}^2 = 3^l$.

We have established in \cref{lemma:jacobian_complexity} that we can determine the
Jacobian matrix $\jac \in \mathbb{C}^{\nc \times \nu }$
by applying the classical shadow procedure $2\nu {+}1 $ times at different circuit-parameter configurations
with $M \leq 3 r \nc$. 
As such, determining these Pauli strings of locality $l = p+q$
requires the number of samples $N_{batch} K \leq \tfrac{68}{\epsilon^2}3^{p+q}  \log(6 r \nc /\delta)$.
Given we apply the classical shadow procedure $2 \nu +1$ times we obtain the
following upper bound on the sample complexity
\begin{equation*}
	N_s  \leq (2 \nu {+}1) 3^{p+q} \tfrac{68}{\epsilon^2}  \log(6 r \nc /\delta)
\end{equation*}

In contrast determining a gradient vector
for gradient descent requires $2\nu$ applications of the classical shadow
procedure each with $M=r$ and thus we obtain the sample complexity
\begin{equation*}
	N_s^{(grad)} \leq 2 \nu 3^p \tfrac{68}{\epsilon^2}\log(2 r /\delta).
\end{equation*}
In both cases we have determined the necessary Pauli strings to precision $\epsilon$ and both the energy gradient and the covariance Jacobian are then obtained from these as a linear combination with respect to Hamiltonian coefficients which leads to a worst-case error propagation of $\mathcal{O}(r)$.
\end{proof}

Actually, this bound on the number of required measurements in terms of the locality is noted to be conservative and it is expected that the actual constants are \emph{much smaller} in practice \cite{classical_shadows}. Furthermore, the inclusion of the development of derandomized classical shadows \cite{shadows_derandomization} has the potential to significantly reduce the number of required measurements, i.e., an order of magnitude reduction has been demonstrated in numerical experiments \cite{classical_shadows}. These techniques could thus greatly improve the speed at which the covariances can be extracted by optimising the Pauli measurement basis to the specific Pauli strings in our operator pool -- but we do not consider these in our above performance bounds.
Furthermore, the overhead of \cv relative to determining a single gradient vector in gradient descent is the constant $3^q$ (up to the logarithmic dependence on $\nc$) and is only
due to the increase in the locality of Pauli strings with $q= 2, 3$ etc.
We can thus expect that determining a \emph{very large} Jacobian has a comparable
complexity to determining just a single gradient vector in gradient descent.
We will demonstrate in the following that this increased size of the Jacobian 
has significant advantages in practical applications.

\section{Applications\label{sec:applications}}

\subsection{Recompilation} \label{sec:applications_recomp}

The ability to recompile a given quantum circuit into an equivalent but practically
feasible or more favourable representation is crucial for the successful exploitation of quantum 
computers. The ideas exist in many variants, from the application of classically tractable analytical 
gate-replacement rules to automatic discovery techniques \cite{mooreIntroductionIntervalAnalysis2009, schuchProgrammableNetworksQuantum2003, moroQuantumCompilingDeep2021}.
In variational recompilation, we want to find a parametrised unitary circuit $U(\underline{\theta})$ that approximates a target unitary $V$.
This target unitary is required to approximate the action of $U$ on the entire Hilbert space in case of recompiling a Full Unitary~\cite{khatriQuantumassistedQuantumCompiling2019} in \cref{fig:gradient_comparison}(b) or just to approximate the action on a specific input state in \cref{fig:gradient_comparison}(a).
After applying the circuits $V$ and $U(\underline{\theta})^\dagger$ consecutively, the goal is to find circuit parameters $\underline{\theta}$ such that the state of the registers is in the ground state $| \underline{0} \rangle$ of the Hamiltonian $\mathcal{H}=- \sum_{j=1}^N Z_j$ \cite{jonesRobustQuantumCompilation2022}. However, the problem would be equally solved by finding ansatz parameters to produce any computational basis state, (i.e. any eigenstate of $\mathcal{H}$) given we can then just append single qubit $X$ rotations to the ansatz to produce the desired operation. This feature, along with the local Hamiltonian allowing for the efficient measurement of many covariances makes it particularly amenable to root finding which is not limited to only searching for the ground state.

For this reason we apply \cref{cor:commuting_observables} to the present problem and
define our problem Hamiltonians as $\mathcal{Q}:= \{ Z_a \}_{a=1}^N$.
We can indeed enlarge this pool by further considering products of single-qubit Pauli $Z$ operators.
Our aim is then to find joint roots of the covariances from \cref{cor:commuting_observables}
which then guarantee that the solution corresponds to a joint eigenstate of all operators in $\mathcal{Q}$
as one of the computational basis states. After having found one of these computational states
we just apply single qubit $X$ rotations to our ansatz to map to the $| \underline{0} \rangle$ state.

Here we consider an example of \emph{parameter rediscovery} as a benchmark, whereby we recompile a unitary $V=U(\theta^\star)$ that has the same form as the parametrised quantum circuit $U(\theta)$ but with the parameters fixed at some random solution values $\theta^\star$. This has the advantage of being a very hard problem to solve variationally, while also giving us a guarantee that our circuit is capable of expressing the solution -- while note that below we will benchmark our \cv approach on practical problems as well.

\cref{fig:gradient_comparison}(c) shows the performance of root finding 
when we initialise relatively close to the solution (by disturbing parameters $|\Delta \theta_k| \leq 0.3$) on a $10$-qubit, $2$-layer parameter rediscovery problem and compares it to the performance of gradient descent.
In this recompilation problem our operator pool $\pool$ contained all 3-local Pauli strings
and we chose randomly $\nc$ operators at every iteration, see details in \cref{sec:recomp_details}.
Indeed, \cref{fig:gradient_comparison}(c) confirms that root finding is able to converge significantly faster to the shot noise floor, i.e.,
a limitation due to finite sampling of expected values.
Furthermore, \cref{fig:gradient_comparison}(c) confirms that root finding has a significantly improved convergence rate (steeper slope), which is improved with a greater number of constraints (blue vs. light blue), while note that the quantum resources required for a single iteration is comparable to that of gradient descent.

\cref{fig:gradient_comparison}(d/blue) shows applying root finding to an initial state that we obtained by 
a short period of gradient descent from a random state -- applying gradient descent to a random state has the effect
of producing a state with an appreciable overlap with the lowest energy eigenstates, allowing root finding to efficiently converge.
In contrast, in \cref{fig:gradient_comparison}(d/black) we demonstrate the performance of root finding when we start from a randomly chosen initial point in parameter space on the same problem. It fails to make any progress; This is due to the fact that root finding works well when there are only a small number of eigenstates that significantly contribute to the state produced by the PQC. In contrast, random states as nearly equal superpositions of a large number of basis states do not contain a dominant eigenstate towards which root finding could converge thus \cv fails to make significant progress.
 This is very much analogous with fault-tolerant phase-estimation protocols which do indeed similarly fail under
 random initialisation, but enable us to efficiently prepare any eigenstate given a good initialisation is possible. 
 These signify the importance of initialisation when searching for eigenstates and clearly demonstrate that even a short period of gradient descent may be sufficient for these purposes.

\begin{figure}[tb]
	\begin{centering}
		\includegraphics[width=\linewidth]{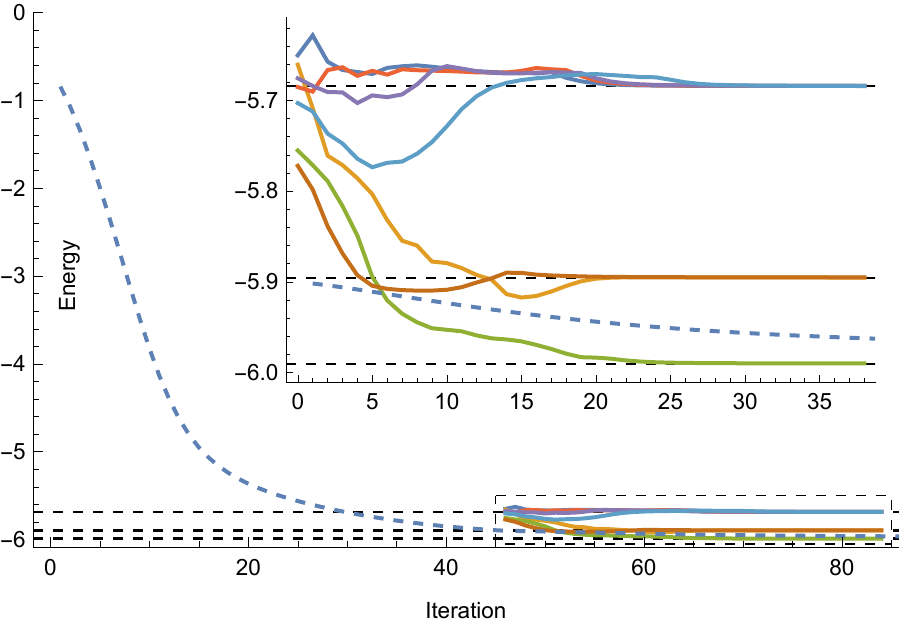}
		\caption{Using root finding to explore a low energy subspace of a $10$-qubit spin chain via an ansatz circuit of 20 layers ($\nu = 610$). The initialisation to the low energy subspace is performed by Imaginary Time Evolution (Blue, Dashed) and is followed by runs of \cv to find the three lowest eigenstates (marked by dashed lines), to an accuracy of $\Delta E \le 4 \times 10^{-4}$ in all cases. Imaginary Time Evolution was run for 250 iterations after the point where root finding started and converged to a state with $\Delta E = 0.012$. These simulations do not include shot noise.}
		\label{fig:spinchainLowestEigs}
	\end{centering}
\end{figure}

The performance of variance-VQE (a gradient based method that minimises the variance of the Hamiltonian, see \cref{sec:variance} for more details) is also shown for comparison -- it is another method which, like root finding, is not only searching for the ground state of the Hamiltonian. Variance-VQE is not stuck in the same way as root finding, but makes slow progress due to its relatively (compared to root finding) slow convergence speed.

\subsection{Spin Models \label{sec:application_spin}}
Spin models are highly relevant in the study of condensed matter physics, quantum statistical mechanics
and many problems in practice can be mapped to spin models, such as NP problems~\cite{lucasIsingFormulationsMany2014}, cf. approximate optimisation algorithms (QAOA) or spin systems in materials science~\cite{cerezo2020variationalreview,endoHybridQuantumClassicalAlgorithms2021,bharti2021noisy, pagano2020quantum,arute2020quantum}.
Furthermore, lattice models of quantum field theories \cite{latticeschwinger} typically have local Hamiltonians.

Here we simulate \cv when used to search for low energy excited states of a spin chain 
described by the Hamiltonian
\begin{equation}\label{eq:spin-ring}
    \mathcal{H}=J\sum_{i=1}^N \Vec{\sigma}_i \cdot \Vec{\sigma}_{i+1} + c_i Z_i.
\end{equation}
We use periodic boundary conditions ($N{+}1{=}1$) and $c_i$ are randomly selected onsite interactions comparable in strength to the couplings $J$, and $\Vec{\sigma}_i$ is the Pauli vector for the $i$th qubit. This Hamiltonian cannot be simulated classically for large $N$ with reasonable computational resource despite its simple structure~\cite{luitzManybodyLocalizationEdge2015,childs2018toward} -- and it is relevant for studying the phenomenon of many-body localization in condensed matter systems \cite{nandkishoreManyBodyLocalizationThermalization2015}.

We use a hardware-efficient ansatz of 20 layers for $10$ qubits.
Before root finding was performed, the ansatz parameters were initialised to $\underline{\theta}_{imag}$ using Imaginary Time Evolution \cite{samimagtime} to a state with low expected energy. This ensured that only a limited number of eigenstates contributed significantly to the state produced by the ansatz. \cv was then performed from points in parameter space with small random variations from $\underline{\theta}_{imag}$ to map out the low energy subspace. This process is shown in \cref{fig:spinchainLowestEigs} in mapping out the lowest energy subspace of a $10$-qubit spin chain; \cv rapidly finds a state an order of magnitude closer to the ground state in energy error than Imaginary Time Evolution is able to converge to. This indeed confirms that root finding has a significantly improved convergence rate when compared to
Imaginary Time Evolution, where the latter is equivalent to natural gradient \cite{quantumnatgrad, yamamoto2019natural} and is thus a second-order method
that requires increased absolute quantum resources (samples) \cite{quantumnatgrad}.
While here we focused on practical applications of \cvns,
we demonstrate additional numerical simulations in \cref{fig:spinchain_close} in the Appendix
whereby we explicitly compare \cv to VQE;
we conclude that in comparison to other variational methods, \cv
again exhibits superior performance on the present spin-ring problem.

\begin{figure}[tb]
	\begin{centering}
		\includegraphics[width=0.9\linewidth]{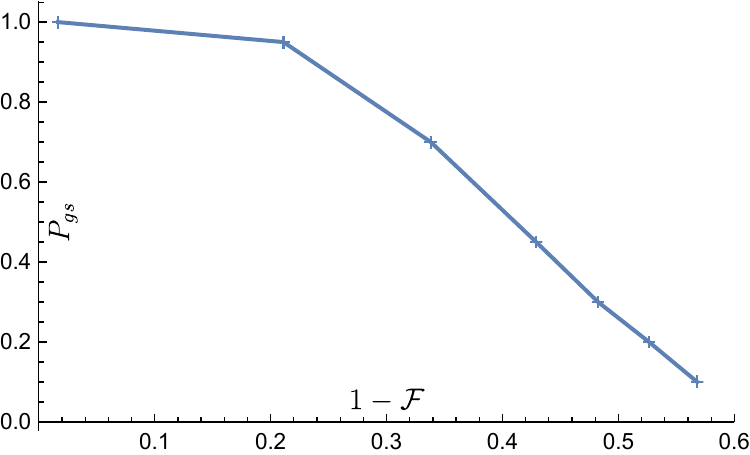}
		\caption{
			Probability $P_{gs}$ as the proportion of 20 runs where \cv converged to the ground state of the spin chain as
				a function of starting infidelity from the ground state. While here we only report the probability of converging to the ground state specifically, we note that $97.1 \%$ of all runs converged to one of the low-energy eigenstates -- the distribution of eigenstates found is detailed in \cref{sec:groundstate_barchart}.
			Initialisation was done the same way as in \cref{fig:spinchainLowestEigs} by running Imaginary Time Evolution from a random initial state until reaching selected energies between $-5.8$ and the ground state. A linear decay of success probability---reminiscent of phase-estimation protocols---with ground state overlap is found, although we expect the rate of this decay to strongly depend on the population-distribution of the lowest-energy eigenstates.
		}
		\label{fig:spinchain_groundstate}
	\end{centering}
\end{figure}

\subsection{Finding excited states}
Searching for excited states of a Hamiltonian is important in many practical applications and indeed variational
quantum algorithms have been proposed for solving this class of problems.
On the other hand, finding excited states using orthogonality constrained VQE techniques \cite{higgott2018variational,jonesRobustQuantumCompilation2022} can be difficult because we need to discover the parameters for and project out every state from the ground state up to the energy of the state we wish to find. \cv is agnostic to the energy of the state and acts to find states close to the one it is initialised into.
Although \cv could converge to states that have been found in previous runs of the algorithm, similar techniques of projecting out previously found eigenstates could also be applied. We can also potentially use classical techniques, such as Interval Analysis \cite{mooreIntroductionIntervalAnalysis2009}, to find all the roots within an area of parameter space, reducing the problem converging to an already known eigenstate (root). 

In \cref{fig:spinchain_groundstate} we show the probability of converging to the ground state of our spin chain as a function of
the overlap between the ground state and the initial state. These initial states were obtained by performing Imaginary Time Evolution from random points in parameter space down to energies between $-5.8$ and the ground state (at $E=-5.99$). The initial fidelity where root finding was started was recorded (\cref{fig:spinchain_groundstate} horizontal axis) and the fraction of runs that converged to the ground state is listed.
Indeed we find a nearly linear relationship and the vast majority of runs that do not converge to the ground state
converge to one of the other low-lying energy levels. We provide the distribution of how these runs converged in \cref{sec:groundstate_barchart}.
Furthermore, the observed relation between fidelity and probability is directly analogous
to phase estimation whereby a measurement is used to collapse the system into the desired eigenstate
with a probability that is given by the fidelity with respect to that state.
In contrast, variational quantum algorithms converge to local minima and 
only an exponentially small fraction of such local energy minima may be close to the ground state
as proved in ref~\cite{anschuetz2022beyond}.

\section{Comparison to existing techniques\label{sec:comparison}}
\subsection{Relation to variational quantum algorithms}

A large subset of variational quantum algorithms is concerned with minimising
a cost function $E(\underline{\theta})$ which is usually the expected value of a Hamiltonian
$\langle \mathcal{H} \rangle$, such as in case of VQE. Among other practical limitations, this surface may have a large number of
local minima that can trap local optimisers \cite{bittelTrainingVariationalQuantum2021}.
The main difference is that \cv uses a large number of such	 surfaces that are
randomly selected and computing this large data requires similar quantum resources (shots)
as a standard gradient estimation in case of variational quantum algorithms. In the
following sections we compare \cv in more detail to specific variational quantum 
algorithms and related techniques.

\subsection{Comparison to variance minimisation	 \label{sec:variance}}

Minimising the variance $\sigma^2 := \langle \mathcal{H}^2  \rangle - \langle \mathcal{H}\rangle^2$ of a Hamiltonian 
allows us to find eigenstates and has been explored in the context of quantum chemistry~\cite{variance_minimisation}.
Furthermore, the so-called variance-VQE~\cite{zhangVariationalQuantumEigensolvers2020} approach uses variational quantum
circuits and estimates this variance as well as its derivative using a quantum computer. Note that
the tools we have introduced can naturally be applied in this context: 
Given a minimal pool $\pool = \{\mathcal{H}_a\}_{a=1}^r$ that only contains Pauli terms of the
Hamiltonian and thus a covariance matrix $\bilinear{\mathcal{H}_a}{\mathcal{H}_b}{\psi}$ of the Hamiltonian terms only,
our Jacobian allows us to compute the gradient of the variance  $\partial_n \sigma^2 = \jac^\intercal \underline{h}$,
as we detail in \cref{app:variance_grad}.

While our \cv approach contains full information about the variance and its gradient, it is important to recognise, however, that
a gradient descent optimisation has an inferior convergence rate when compared to the quadratic Newton method.
In \cref{fig:gradient_comparison} we compare the two techniques and indeed find that \cv has a
superior performance. Furthermore, note that \cv uses strictly more information than variance minimisation:
while the operator pool $\pool = \{\mathcal{H}_a\}_{a=1}^r$ containing only the Hamiltonian terms
is sufficient in \cref{theo:main}, we significantly enlarge this pool  such that $\nc \gg r$ and aim
to find joint roots of this large number of covariances.

\subsection{Comparison to subspace expansion}

Subspace expansion \cite{mccleanSubspaceExpansion} is a method for discovery of low-energy excited states starting from an estimated ground state $| \tilde{\psi}_G \rangle$. One then explores directions in Hilbert space by applying low-weight operators as excitations to the ground state.
Typically operator pools of Pauli operators are used to produce a new set of states $\ket{\psi_k}=\mathcal{P}_k \ket{\tilde{\psi}_G}$ for calculating the overlaps $H_{kj}=\bra{\psi_k}\mathcal{H}\ket{\psi_j}$ and $S_{kj}=\braket{\psi_k|\psi_j}$.
Diagonalising $H_{kj}$ then reveals better ground-state energies than that of $| \tilde{\psi}_G \rangle$. As opposed to
\cvns, subspace expansion cannot prepare the ``good quality'' representation of the eigenstates with a quantum computer, but is rather limited to estimating their energies with a classical computer.

The connection to \cv is elucidated further in \cref{app:state_overlaps} where we express covariance functions
in terms of a set of quantum state overlaps $\bilinear{O_k}{\mathcal{H}}{\psi} = 	\langle \phi_{O_k}  | \phi_\mathcal{H} \rangle$
where we can define the (unnormalised) vectors  $| \phi_A \rangle := (A - \langle A \rangle)  |\psi \rangle $.
The covariance matrix in Eq.~\eqref{eq:covar_matr} is thus a positive-semidefinite overlap matrix
$[\cmat (\psi)]_{kl}  = \langle \phi_{O_k} |\phi_{O_l} \rangle$.
We can make a weak analogy to subspace expansion based on the following observation.
Given covariance functions can be expressed as state overlaps, the method explores possible
directions in Hilbert space via our operator pool that is beyond the capabilities of the ansatz and we gain information through a linearisation as the Jacobian at what parameters the nearest eigenstate may be found via the vanishing overlaps. Similarly, subspace expansion also uses operators additional to the ansatz to explore around the estimated ground state to extract low energy excited states.
As such, we may be able to use existing heuristics from subspace expansion for selecting problem-specific operator pools.

\subsection{Comparison to Hessian optimisation} \label{sec:hessian_comparison}

A Hessian-based Newton-Raphson optimisation \cite{gillPracticalOptimization1981} of the function $E(\theta)$ uses the update rule
\begin{equation}
    \theta^{(t+1)} = \theta^{(t)} - \eta \mathbf{H}^{-1} \nabla E(\theta),
\end{equation}
Where $\mathbf{H}$ is the Hessian matrix of second derivatives $\partial_n \partial_n E(\theta)$
which can be estimated on a quantum device using standard techniques such as the parameter-shift rule \cite{mariEstimatingGradientHigherorder2021,koczor2020quantumAnalytic}.

Let us compare our root finding with this Hessian optimisation. A similarity may be that
both methods use additional information to provide improved convergence over methods that use only gradient information.
The Hessian obtains information about the local curvature of the $E(\theta)$ manifold which is a ``classical'' multivariate function
and its local curvature can even be accurately captured by analytical approximations~\cite{koczor2020quantumAnalytic} -- with the
use of a quantum computer.
In contrast, \cv uses information from operator covariances of the variational state $\ket{\psi(\theta)}$ extracted from exploring directions in Hilbert space through a randomly chosen, large operator pool as detailed in \cref{app:state_overlaps}. In this sense, \cv is a quantum-aware method, as it does not only use the fact that $E(\theta)$ (and its derivatives) are efficiently calculable on a quantum computer, but also the relationship between $\ket{\psi(\theta)}$, the Hamiltonian and possible directions in Hilbert space.
The most pronounced difference between the two techniques is from a practical point of view: while we can use the classical shadow procedure $\nu$-times to estimate a very large Jacobian, for the Hessian we need to use it $\mathcal{O}(\nu^2)$-times at different circuit-parameter configurations. In this sense \cv obtains significantly more information using fewer measurements.

It is also interesting to point out that the Gauss-Newton and LM techniques
can be interpreted as approximate Hessian optimisations of the vector norm $\lVert \covarvec\rVert^2$ when the Jacobian
$\mathbf{H} \approx 2 \tilde{\jac}^\intercal \tilde{\jac}$ gives a good approximation of the Hessian matrix by
keeping only the first order derivatives $[\partial_n \covar_k(\underline{\theta})] [\partial_m \covar_k(\underline{\theta})]$
but neglecting second order derivatives of the form $\partial_n \partial_m \covar_k(\underline{\theta})$.
This, however, does not apply to \cv given in our case the functions $\covar_k(\underline{\theta})$
are trigonometric polynomials for which the second order derivatives are dominant, especially near solutions.
Take for example the simple function $g(\underline{\theta})  = -\prod_{n=1}^\nu \cos(\theta_n)$ which has a minimum at $\underline{\theta}=(0,0 \dots,0)$
and while its first derivatives vanish near the minimum, the second derivatives $|\partial_n \partial_n g(\underline{\theta}) |\approx 1$
are dominant.
Note also that the vector norm $\lVert \covarvec\rVert^2$ has no immediate relation with the energy surface
$E(\theta)$, and thus it is not related to a second-order energy minimisation.

\subsection{Comparison to natural gradient}
Quantum Natural Gradient, which is equivalent to Imaginary Time Evolution for ideal, unitary circuits improves over both the convergence rate and ability to avoid traps of vanilla gradient descent by taking into account the geometry of the space of quantum states.
Here we estimate the Quantum Fisher Information~\cite{samimagtime,quantumnatgrad, yamamoto2019natural,koczor2019quantum} as
\begin{equation}
    \mathbf{F}_{ij}(\underline{\theta})=\langle \partial_i \psi |\partial_j \psi \rangle - \langle \partial_i \psi |\psi \rangle\langle \psi |\partial_j \psi \rangle
\end{equation}
which is equivalent to the real part of the Quantum Geometric Tensor $ \mathbf{F}(\underline{\theta})=\re [\mathbf{F}_{ij}]$ and is used
to compute the parameter update
\begin{equation}
    \underline{\theta}_{t+1} = \underline{\theta}_t - \eta \mathbf{F}^+(\underline{\theta}_t) \nabla E(\underline{\theta}_t).
\end{equation}
Similarly to Hessian optimisation, this method has a complexity $\mathcal{O}(\nu^2)$ for calculating the tensor which,
as opposed to the Hessian, is independent of the energy surface, and rather expresses relations between states reached by varying different parameters.

\cv can also be thought of as an optimisation using additional information about the space of quantum states, but in this case extracted from the many covariance functions (which do depend on $\mathcal{H}$) instead of the geometric tensor.

\subsection{Relation to parent Hamiltonians}
Let us now relate \cv to existing techniques that do not aim to find eigenstates, but rather aim to find Hamiltonians  $\mathcal{H}_{parent}$ that encode a fixed state $|\psi\rangle$ as an eigenstate. The  $\mathcal{H}_{parent}$
are then called as parent Hamiltonians to the state $|\psi\rangle$.

In particular, these techniques proceed by assuming that the parent Hamiltonian
can be expressed in terms of the ansatz as a linear combination of basis operators $\mathcal{H}_a $
as~\cite{PhysRevX.8.031029,Qi2019determininglocal}
\begin{equation}
	\mathcal{H}_{parent} = \sum_{a=1}^r h_a \mathcal{H}_a,
\end{equation}
via the real coefficient vector $\hvec \in \mathds{R}^{r}$. Here $r$ is the rank of the decomposition, i.e., the number of independent basis operators.
The covariance matrix $\cmat(\psi) \in \mathds{C}^{r \times r}$
then depends on our trial quantum state $|\psi \rangle$
that we have defined in \cref{eq:covar_matr}.
The parent Hamiltonian is then found by finding the nullspace of this covariance
matrix given every coefficient vector $\underline{h}$ in the nullspace
satisfies $\cmat \underline{h} = 0 $ and given our expression for the variance
in \cref{lemma:variance} it guarantees that $\underline{h}^\intercal \,  \cmat \, \underline{h} = \var[\mathcal{H}_{parent}] =0$ in the
particular state $|\psi \rangle$.

CoVaR clearly works according to a reverse logic whereby the problem Hamiltonian is fixed and we search for quantum 
states that result in $0$ covariances. We then search the space of quantum states via an efficient parametrisation, i.e.,
variational circuits, using a quantum computer.

\section{Discussion}
We have demonstrated that \cv shows significantly improved performance by many orders of magnitude compared to analogous variational algorithms due to its effective use of classical shadows. {However, a main limitation is its vulnerability to random parameter initialisation. Although this seemingly has a resemblance to barren plateaus whereby expected-value landscapes suffer from flat regions due to vanishing gradients \cite{anschuetz2022beyond, cerezoCostFunctionDependent2021a, mcclean2018barren}, the present limitation is quite different in nature. 
In particular, recall that phase-estimation protocols provably efficiently find an eigenstate of
an efficiently simulable Hamiltonian given an initial state is provided with a sufficiently large overlap with the desired eigenstate;
Our approach is quite similar as it gets attracted to any eigenstate with a significant contribution to the initial state. For this reason we expect the main limitation of the present approach is not decoupled from the general challenge of finding good initialisation for fault-tolerant phase estimation protocols
or finding problem specific ans{\"a}tze for VQE problems -- and this challenge may be attributed to general hardness results of finding ground/eigenstates~\cite{bookatz2012qma}.}
Interestingly, we have demonstrated in numerical simulations that even a short period of gradient descent evolution provides sufficient initialisation in practice. 

Furthermore, barren plateaus do not necessarily exist for our focus of local Hamiltonians and shallow ans{\"a}tze~\cite{anschuetz2022beyond,cerezoCostFunctionDependent2021a}. Nevertheless,
random initialisation of gradient-based VQE optimisers still prohibits finding eigenstates of large systems 
due to local traps~\cite{anschuetz2022beyond}:
First, optimising VQAs has been shown to be NP-hard due to persistent local minima~\cite{bittelTrainingVariationalQuantum2021};
Second, ref.~\cite{anschuetz2022beyond} proved
that a broad class of shallow VQA models that exhibit no barren plateaus are untrainable due to local traps.
As we demonstrated in numerical simulations, our stochastic Levenberg-Marquardt approach indeed does mitigate the effect of
these local traps:
while
gradient based optimisers fail to make progress around a local trap as the gradient of the \emph{energy surface} vanishes, \cv is not an energy minimiser and those specific parameters may well yield a non-zero step for \cvns.
Furthermore, \cv is also less vulnerable to getting stuck due to our randomly generated constraints (covariances): even
if a single iteration makes no progress, in the next iteration a new, randomly generated set of constraints may well yield a non-zero step 
as we demonstrate in \cref{fig:localmintest} where sometimes several steps of \cv are required to escape a trap.
This is analogous to the well-known advantage of stochastic gradient descent in the machine-learning context~\cite{ruder2016overview}
and we note that exploring globally convergent root-finding techniques may also be a fruitful direction for future research~\cite{dennis1996numerical,okawa2018w4,pasquini1985globally}.

There are a number of apparent extensions to our approach that we have not considered here. 
First, given the classical shadows are stored in a classical computer we can in principle determine multiple sets of update rules from them and apply the one that most decreases the variance or any other metric as opposed to our fully randomised scheme.
Second, it would be worth exploring some specific use cases of \cv in more detail, such as finding highly excited states.
Third, in this work we have used classical shadows to extract a large number of covariances in the case where both the Hamiltonian and operator pool is constructed of local Pauli strings. It is an interesting direction for further work to attempt to use other randomised measurement channels to measure covariances with similar efficiency for non-local Hamiltonians or operator pools.
Several works have appeared recently that make significant progress by
developing shadow-measurement channels that interpolate between Pauli measurements (as in this work) and the powerful global Clifford measurements~\cite{akhtarScalableFlexibleClassical2022, bertoniShallowShadowsExpectation2022}. These intermediate-depth techniques are amenable to NISQ devices and allow for the measurement
of non-local properties.
As a matter of fact, related techniques leveraging simultaneous measurements of commuting observables are also highly relevant
given they allow the efficient reconstruction of a large number of not necessarily local Pauli strings as crucial in applications of quantum chemistry \cite{Crawford2019,yen2020measuring,jena2019pauli,gokhale2020n}.

{Finally, we expect the present approach to be resilient against reasonable levels of experimental noise. First,} our update rule in \cref{eq_jacobian} is invariant under global depolarising noise when the expected value $\langle\mathcal{H}\rangle$ is known to high precision, e.g., via well-established error mitigation techniques~\cite{qemreview,koczor2020exponential,koczor2021dominant,huggins2020virtual}.
{Second, we numerically simulated an approximate noise model that goes beyond global depolarisation and we observed a very good robustness against experimental noise. While these observations speak for the practicality of the present approach, we leave it to future work to confirm the performance in current and near-future generation hardware.
}

\section{Conclusion}

In this work we considered powerful variational quantum circuits that have been
extensively investigated in a hope to exploit near-term quantum computers. Most of these
near-term quantum algorithms aim to encode the solution to a practical problem of
interest to eigenstates of a Hamiltonian, typically the ground state. As a direct
analogy to successful variational techniques in quantum chemistry, nearly all
quantum variational algorithms so far have proceeded by posing the problem as a variational search.
In this paradigm we minimise the single classical cost function---typically the expected value of a Hamiltonian---with respect to circuit parameters.

Our work opens a new research direction in the efforts of achieving practical value with near-term quantum
computers: we observe that the condition for finding  eigenstates
can be posed as finding joint roots of a large number of properties of the quantum state
as covariance functions -- these express fundamental quantum-mechanical uncertainty relations. We have devised the powerful root finding technique \cv
and demonstrated that increasing the number of these constraining covariances significantly increases the
efficacy of the search procedure.

The most remarkable feature of \cv is that
it allows us to fully exploit the extremely powerful classical shadow techniques in a way
that prior variational techniques could not, i.e., we \emph{simultaneously}
estimate a very large number of randomly chosen properties, e.g., $>10^4-10^7$ of the quantum state and their derivatives
with respect to circuit parameters. These inform our search procedure via a large but tractable
linear system of equations that we solve with a classical computer.

Our approach can  be viewed as directly
analogous with (stochastic gradient descent and) stochastic Levenberg-Marquardt techniques
that have been extremely popular in the context of classical machine learning
 	-- and we generally expect \cv inherits the fast convergence speed of Levenberg-Marquardt as we indeed demonstrated in practical examples.
 	In fact, Levenberg-Marquardt is
	the default and fastest method for training classical neural networks~\cite{neuralnet,demuth2014neural,beale2010neural,yu2018levenberg}
	-- 	but with a limitation that handling a large Jacobian becomes the bottleneck for too deep neural networks.
	In stark contrast, we view this limitation of the classical technique a major advantage
	of our approach given we can populate the large Jacobian using
	only \emph{logarithmic} quantum resources. \cv thus allows us to offload non-trivial but
	tractable calculations onto the classical computer and combines the best of both worlds, i.e., fast convergence and
	fast (quantum) computation of the Jacobian.
	Furthermore, as we demonstrated, using a large number of randomly generated 
	constraints makes \cv particularly robust against getting stuck in local traps
	in analogy with stochastic techniques in the classical machine learning context~\cite{neuralnet, ruder2016overview}.

We proved that the quantum resources, using classical shadows, required for a single iteration of our procedure is
comparable to that of a standard gradient estimation in conventional VQE. 
In addition to its significantly improved convergence speed,
our approach exhibits a robustness against shot noise and against experimental imperfections thanks to our large dataset.
We have explored a number of practically motivated important applications whereby the 
problem Hamiltonian is local given the classical shadow procedure is very NISQ-friendly 
in such scenarios, requiring only single-qubit measurements in a random basis. These include, recompiling quantum circuits and finding ground and excited
states. Our numerical simulations confirm the superiority of our approach and indicated that
it can significantly outperform others by many orders of magnitude. Furthermore, previous
techniques for finding excited states of Hamiltonians assumed a sequential search whereas
ours naturally converges to any of the eigenstates -- and can thus be applied naturally in this context.
Similarly, recompilation is another natural set of problems for \cv given any eigenstate of 
the problem Hamiltonian can be accepted as a solution. 
While the presented applications tackling local Hamiltonian problems are ideal for \cvns, important quantum chemistry problems 
may be non-local and may thus be challenging for classical shadows depending on the encoding.
Fortunately, two fields of active research are making progress to alleviate this issue:
first, compact fermion encodings result in local Hamiltonians at the cost of a modest qubit overhead, and second, recent advancements in classical shadows allow for efficiently measuring non-local properties or specifically,
measuring fermionic operators~\cite{wanMatchgateShadowsFermionic2022, zhaoFermionicPartialTomography2021,akhtarScalableFlexibleClassical2022, bertoniShallowShadowsExpectation2022}.
It is thus expected the present approach will
be highly competitive and will spark further developments in the field.

Finally, our work makes exciting connections to various fields, including fundamental  uncertainty relations in quantum mechanics as covariances, exploitation of classical shadows, stochastic optimisation in machine learning and working with big data. We believe
it will be interesting to explore these connections to further improve the presented techniques
in the hope of achieving practical value with near-term quantum computers.

\section*{Acknowledgments}
We thank Jonathan Foldager, Hsin-Yuan Huang and Suguru Endo for providing us with useful comments.
We thank Simon C Benjamin for his support throughout this work.
B.K. conceived the idea and contributed to writing the manuscript, 
G.B. performed numerical simulations and contributed to writing the manuscript.
G.B. and B.K. acknowledge the EPSRC Hub grant under the agreement number
EP/T001062/1 for hardware provision.
B.K. thanks the University of Oxford for
a Glasstone Research Fellowship and Lady Margaret Hall, Oxford for a Research Fellowship.
The numerical modelling involved in this study made
use of the Quantum Exact Simulation Toolkit (QuEST), and the recent development
QuESTlink\,\cite{QuESTlink} which permits the user to use Mathematica as the
integrated front end. We are grateful to those who have contributed
to both these valuable tools.

\appendix

\section{Orthogonal constraints\label{app:orthogonal}}

We presented the general theory of our approach in \cref{sec:general}
whereby we compute covariances with respect to an arbitrary operator pool $\pool$
in order to search for eigenstates of an arbitrary Hamiltonian $\mathcal{H}$.
While our practically motivated CoVAR approach leverages powerful classical
shadows, it restricts the problem Hamiltonian and the operator pool to local Pauli strings.
While Pauli strings are orthonormal in operator space, their actions  on quantum
states are generally not orthogonal directions in Hilbert space. In the present section we
explore another kind of operator pool $\pool$ whereby the operators represent orthogonal
directions in state space. Let us first start by interpreting covariances
as overlaps in Hilbert spaces.

\subsection{Interpretation as state overlaps\label{app:state_overlaps}}

\begin{lemma}
	The covariances from \cref{def:covars} can be interpreted as overlaps between quantum states as
	\begin{align}
		\bilinear{A}{B}{\psi} = 	\langle \phi_A  | \phi_B \rangle,
	\end{align}
	where we can define the (unnormalised) vectors  $| \phi_A \rangle := (A - \langle A \rangle)  |\psi \rangle $ for any operator $A \in \mathbb{C}^{d \times d}$ 
	with norm $\lVert \phi_A \rVert^2 = \var[A]$.
\end{lemma}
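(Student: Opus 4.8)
The plan is to verify the identity by a direct expansion, the only substantive ingredient being the Hermiticity of the operators that is inherited from \cref{def:covars}. First I would write the bra dual to $|\phi_A\rangle = (A - \langle A\rangle)|\psi\rangle$ by taking the adjoint, giving $\langle \phi_A| = \langle\psi|(A^\dagger - \overline{\langle A\rangle})$. Since $A$ is Hermitian, its expectation value $\langle A\rangle = \langle\psi|A|\psi\rangle$ is real and $A^\dagger = A$, so that the shifted operator $A - \langle A\rangle$ is itself Hermitian. This is the one place where Hermiticity is genuinely used: without it the overlap would involve $\langle\psi|A^\dagger B|\psi\rangle$ rather than $\langle\psi|AB|\psi\rangle$ and would fail to reproduce the covariance of \cref{def:covars}.

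Next I would form the overlap
\begin{equation*}
  \langle\phi_A|\phi_B\rangle = \langle\psi|(A-\langle A\rangle)(B-\langle B\rangle)|\psi\rangle,
\end{equation*}
expand the product into its four terms, and take the expectation value in $|\psi\rangle$. Three of the four resulting contributions are proportional to $\langle A\rangle\langle B\rangle$, and they collapse as $-\langle A\rangle\langle B\rangle - \langle A\rangle\langle B\rangle + \langle A\rangle\langle B\rangle = -\langle A\rangle\langle B\rangle$; what remains is $\langle\psi|AB|\psi\rangle - \langle A\rangle\langle B\rangle$, which is precisely $\bilinear{A}{B}{\psi}$. This establishes the claimed identity.

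Finally, for the norm I would specialise to $B = A$, obtaining $\lVert\phi_A\rVert^2 = \langle\phi_A|\phi_A\rangle = \bilinear{A}{A}{\psi} = \var[A]$, where the last equality is exactly \cref{lemma:variance} (equivalently $\langle\psi|A^2|\psi\rangle - \langle A\rangle^2$, using $A^\dagger A = A^2$ for Hermitian $A$). Positivity of this quantity is then automatic, being the squared norm of a bona fide Hilbert-space vector, which incidentally re-derives the positive-semidefiniteness of the covariance matrix noted after \cref{def:covar_matr}. I do not anticipate any real obstacle: the computation is a one-line expansion, and the only care required is the correct handling of the adjoint in passing from $|\phi_A\rangle$ to $\langle\phi_A|$ together with the reality of $\langle A\rangle$.
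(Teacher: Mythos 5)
Your proof is correct and is precisely the direct expansion that the paper's one-line proof (``immediately follows from the defining expression of covariances'') leaves implicit: taking the adjoint of $|\phi_A\rangle$, using Hermiticity of $A$ so that $A^\dagger = A$ and $\langle A\rangle \in \mathbb{R}$, and collapsing the four cross terms. You also correctly identify the only nontrivial point---without Hermiticity the overlap would yield $\langle\psi|A^\dagger B|\psi\rangle$ instead of $\langle\psi|AB|\psi\rangle$---so nothing is missing.
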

\begin{proof}
	The above property immediately follows from the defining expression of covariances from \cref{def:covars}.
\end{proof}
The above lemma informs us that in \cref{theo:main} and in \cref{cor:commuting_observables} we compute overlaps as $\bilinear{O_k}{\mathcal{H}}{\psi} = \langle \phi_k | \phi_\mathcal{H} \rangle$ and thus we actually decompose the quantum state $\mathcal{H} |\psi\rangle$ into a set of quantum states $| \phi_l \rangle$ that we obtain by acting on  $|\psi \rangle$ with elements of our operator pool. Given that $| \phi_\mathcal{H} \rangle$ must be the null vector when the eigenvalue equation is satisfied, it is necessary that any (non-parallel) overlap with this vector must vanish.

While Pauli strings form an orthonormal basis of operator space, they have the disadvantage that in Hilbert space they result in non-orthogonal actions $\langle \phi_k | \phi_l \rangle \neq \delta_{kl}$, i.e., we decompose the vector $\mathcal{H} |\psi\rangle$ into a non-orthogonal basis.
On the other hand, it is possible to define an operator pool that corresponds to orthogonal directions in Hilbert space.
\begin{lemma}[\textbf{Orthogonal operator pool}]\label{lemma:orth}
	Let us consider strings of $X$ Pauli operators as $X_k \in \{ \mathrm{Id}, X \}^{\otimes N}$ using the binary index $k \in \{0,1\}^N$.
	We define the orthogonal operator pool $\pool := \{O_k := U X_k U^\dagger, k \in \{0,1\}^N\} $ via the operators,
	where the unitary quantum circuit $U$ maps our reference $|\psi\rangle = U|0\rangle$ onto our quantum state.
	The quantum states $|\phi_k \rangle := O_k | \psi\rangle$ form an orthonormal system  $\langle \phi_k | \phi_l \rangle= \delta_{kl}$ and
	thus the operators $O_k$ map to orthogonal directions in Hilbert space.
\end{lemma}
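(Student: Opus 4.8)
The plan is to reduce the claim to the orthonormality of computational basis states by exploiting the unitarity of $U$. First I would substitute the defining relation $|\psi\rangle = U|0\rangle^{\otimes N}$ into the states $|\phi_k\rangle := O_k|\psi\rangle$. Since $O_k = U X_k U^\dagger$ by construction, the conjugating circuits telescope: $|\phi_k\rangle = U X_k U^\dagger U |0\rangle^{\otimes N} = U X_k |0\rangle^{\otimes N}$, where the inner pair $U^\dagger U = \mathrm{Id}$ cancels. This is the crucial simplification — it strips away the conjugation and leaves a far more transparent object.

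The next step is to evaluate $X_k |0\rangle^{\otimes N}$ explicitly. Because the single-qubit factor of $X_k$ on site $i$ is $X$ precisely when $k_i = 1$ and $\mathrm{Id}$ when $k_i = 0$, and since $X|0\rangle = |1\rangle$ while $\mathrm{Id}|0\rangle = |0\rangle$, applying $X_k$ flips exactly those qubits indicated by the bit string $k$. I would therefore write $X_k |0\rangle^{\otimes N} = \bigotimes_{i=1}^{N} X^{k_i}|0\rangle = |k\rangle$, identifying the result with the computational basis state labelled by $k \in \{0,1\}^N$. Hence $|\phi_k\rangle = U|k\rangle$, i.e.\ the operator pool merely maps the computational basis through $U$.

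The final step is a one-line overlap computation using unitarity once more: $\langle \phi_k|\phi_l\rangle = \langle k| U^\dagger U |l\rangle = \langle k | l \rangle = \delta_{kl}$, where the last equality is the orthonormality of the computational basis. This single identity delivers both the orthogonality for $k \neq l$ and the normalisation for $k = l$, establishing that $\{|\phi_k\rangle\}_{k \in \{0,1\}^N}$ is an orthonormal system, as claimed.

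I do not anticipate a genuine technical obstacle here; the difficulty is purely conceptual. The content of the lemma is the \emph{design choice} behind the pool: defining $O_k := U X_k U^\dagger$ together with $|\psi\rangle = U|0\rangle^{\otimes N}$ is precisely what makes the pool act as a $U$-relabelling of an orthonormal basis, so that the states $O_k|\psi\rangle$ inherit orthonormality directly from the computational basis. Once the cancellation $U^\dagger U = \mathrm{Id}$ and the bit-flip action $X_k|0\rangle^{\otimes N} = |k\rangle$ are recognised, the proof relies on nothing more. I would also remark that the bijection $k \leftrightarrow |k\rangle$ shows $\{|\phi_k\rangle\}$ is in fact a complete orthonormal basis of the $2^N$-dimensional Hilbert space, which is the feature that contrasts this pool with the non-orthogonal Pauli-string actions discussed earlier.
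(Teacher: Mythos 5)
Your proof is correct and follows essentially the same route as the paper's: both cancel the conjugating unitaries via $U^\dagger U = \mathrm{Id}$ to reduce $\langle\phi_k|\phi_l\rangle = \langle\psi|O_k O_l|\psi\rangle$ to $\langle k|l\rangle = \delta_{kl}$, using the bit-flip identity $X_k|\underline{0}\rangle = |k\rangle$. The only cosmetic difference is that the paper also records $\langle O_k\rangle = \delta_{k0}$ (useful later for the covariance-as-overlap interpretation, since the subtracted term $\langle O_k\rangle\langle\mathcal{H}\rangle$ then vanishes for $k\neq 0$), but this is not needed for the orthonormality claim itself.
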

\begin{proof}
		Orthonormality in Hilbert space follows from $\langle O_k \rangle = \delta_{k0}$ and via
	\begin{equation*}
		\langle \psi| O_k O_l |\psi \rangle = \langle 0 | U^\dagger U X_k U^\dagger U X_l U^\dagger U |\underline{0} \rangle = \langle k | l \rangle,
	\end{equation*}
	where $|k\rangle$ are standard basis vectors with $\langle k | l \rangle = \delta_{kl}$.
	
\end{proof}

The resulting covariances can actually be shown to be entries in a column vector of the Hamiltonian matrix. In particular, given  the states $| \phi_k \rangle $ form an orthonormal basis they can be used to represent the Hamiltonian matrix
as the covariances $\covar_k  = \langle \psi  | O_k \mathcal{H} | \psi \rangle =  \langle \phi_k  | \mathcal{H} | \phi_0 \rangle = \mathrm{Col}_0(\mathcal{H})$, which are then actually elements of the first column vector of the problem Hamiltonian.
These covariances, as entries of the Hamiltonian matrix, can be computed using the Hadamard-test techniques presented in ref.~\cite{li2017}.
In particular, the covariances are obtained by applying our variational quantum circuit $U$ onto the standard computational basis states as $| \phi_k \rangle = U X_k U^\dagger |\psi \rangle = U |k \rangle$
and we compute the overlap of this state with our variational quantum state $|\psi\rangle$.  The approach can straightforwardly be
implemented via the Hadamard test, whereby we apply the $X_k$ operations in $|k \rangle = X_k | \underline{0}\rangle$ controlled on an ancilla qubit.
Derivatives of these covariances can be similarly computed by applying the generator of the quantum gate controlled on the same ancilla.
Let us now show that sum of squares of the covariances is equivalent to the variance of the Hamiltonian.
\begin{lemma}\label{lemma:orth_sum_squares}
	Given the orthogonal operators introduced in \cref{lemma:orth} we compute the corresponding covariances.
	While $\covar_0 = \langle \psi |\mathcal{H} | \psi \rangle$ is the energy expected value, we can show that
	\begin{equation} 
		\sum_{k=1}^{2^N-1} |\covar_k|^2 =  \lVert \covar \lVert^2 = \mathrm{Var}[\mathcal{H}].
	\end{equation}
\end{lemma}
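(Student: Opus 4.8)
The plan is to exploit the orthonormal structure established in \cref{lemma:orth} and to recognise the covariances $\covar_k$ as the entries of a single column of the Hamiltonian expressed in the orthonormal basis $\{|\phi_k\rangle\}$. First I would recall from \cref{lemma:orth} that the vectors $|\phi_k\rangle := O_k|\psi\rangle = U|k\rangle$ satisfy $\langle \phi_k|\phi_l\rangle = \delta_{kl}$ and therefore form a complete orthonormal basis of the Hilbert space. Since the all-identity string gives $O_0 = \mathrm{Id}$ we have $|\phi_0\rangle = |\psi\rangle$, and using Hermiticity of each $O_k$ I would write the covariance in the form already noted above the statement,
\[
\covar_k = \langle \psi | O_k \mathcal{H} | \psi \rangle = \langle \phi_k | \mathcal{H} | \phi_0 \rangle ,
\]
so that the collection $\covar$ is precisely the $0$-th column of $\mathcal{H}$ represented in the basis $\{|\phi_k\rangle\}$.

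The core step is then a single application of the resolution of the identity $\sum_{k=0}^{2^N-1} |\phi_k\rangle\langle \phi_k| = \mathrm{Id}$. Using that $\mathcal{H}$ is Hermitian, so that $\covar_k^* = \langle \phi_0 | \mathcal{H} | \phi_k \rangle$, I would evaluate the full squared norm of this column as
\[
\sum_{k=0}^{2^N-1} |\covar_k|^2 = \sum_{k} \langle \phi_0 | \mathcal{H} | \phi_k \rangle \langle \phi_k | \mathcal{H} | \phi_0 \rangle = \langle \phi_0 | \mathcal{H}^2 | \phi_0 \rangle = \langle \psi | \mathcal{H}^2 | \psi \rangle .
\]
I would then isolate the $k=0$ term: since $\covar_0 = \langle \phi_0 | \mathcal{H} | \phi_0 \rangle = \langle \mathcal{H} \rangle$ is the (real) energy expectation, we have $|\covar_0|^2 = \langle \mathcal{H} \rangle^2$, and subtracting this from both sides yields
\[
\sum_{k=1}^{2^N-1} |\covar_k|^2 = \langle \mathcal{H}^2 \rangle - \langle \mathcal{H} \rangle^2 = \var[\mathcal{H}],
\]
which is exactly the claimed identity.

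There is no deep obstacle here; the argument is in essence a Parseval/completeness identity applied to the vector $\mathcal{H}|\psi\rangle$ decomposed in the orthonormal basis furnished by \cref{lemma:orth}. The only points that require care are bookkeeping ones. First, the notational convention in this orthogonal-pool setting is that $\covar_k$ denotes the correlation $\langle \psi | O_k \mathcal{H} | \psi \rangle$, so that $\covar_0$ coincides with the energy rather than with a vanishing covariance; keeping this straight is what makes the subtraction of the $k=0$ term produce precisely $\langle \mathcal{H} \rangle^2$. Second, I must use Hermiticity of both $\mathcal{H}$ and the $O_k$ when pairing $\covar_k$ with its conjugate, since it is this step that reconstructs $\mathcal{H}^2$ inside the completeness relation. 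I expect the cleanest presentation to simply state the column-vector interpretation and then invoke completeness in a single line.
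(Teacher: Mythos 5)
Your proof is correct and follows essentially the same route as the paper's: both apply the completeness relation $\sum_k |\phi_k\rangle\langle\phi_k| = \mathrm{Id}$ to the sum $\sum_{k=0}^{2^N-1} \langle\psi|\mathcal{H}|\phi_k\rangle\langle\phi_k|\mathcal{H}|\psi\rangle$ and subtract the $k=0$ term $|\covar_0|^2 = \langle\mathcal{H}\rangle^2$ to recover $\var[\mathcal{H}]$. Your write-up is merely a bit more explicit than the paper's about the column-vector interpretation, the Hermiticity bookkeeping, and the convention that $\covar_0$ denotes the energy rather than a vanishing covariance.
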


\begin{proof}
	\begin{equation*}
		\sum_{k=1}^{2^N-1} |\covar_k|^2 = \sum_{k=0}^{2^N-1} \langle \psi  | \mathcal{H} | \phi_k \rangle        \langle \phi_k  | \mathcal{H} | \psi \rangle - \langle \psi  | \mathcal{H} | \psi \rangle^2.
	\end{equation*}
	As the $| \phi_k \rangle $ are a complete basis set due to being a unitary transformation of the computational basis, $\sum_k | \phi_k \rangle        \langle \phi_k  | = \mathrm{Id}$ and we therefore obtain 
	\begin{equation*}
		\lVert \covar \lVert^2 =\langle \psi  | \mathcal{H}^2 | \psi \rangle - \langle \psi  | \mathcal{H} | \psi  \rangle^2 = \mathrm{Var}[\mathcal{H}].
	\end{equation*}
\end{proof}
Importantly, while this operator pool has the advantage that the covariances represent independent, orthogonal directions
in state space it is clear that we would generally need to compute all $2^N -1$ of these orthogonal constraints, as elements
of the first column of the Hamiltonian matrix, in order to be able to compute the variance and thus to verify that
the quantum state $|\psi\rangle$ is an eigenstate of the Hamiltonian. In stark contrast, in \cref{theo:main} we have shown
that given a decomposition of a Hamiltonian into an operator basis $\mathcal{H} = \sum_{k=1}^r h_a \mathcal{H}_a $ which typically
grows polynomially with the system size, it suffices to only compute the corresponding polynomial number
of covariances. Although these operators, such as Pauli strings, are orthonormal in operator space,
they do not correspond to orthogonal directions in Hilbert space.

\begin{figure}[tb]
	\begin{centering}
		\includegraphics[width=\linewidth]{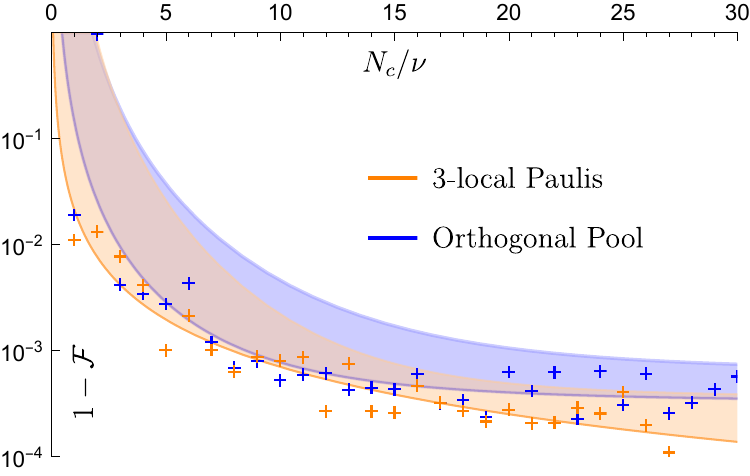}
		\caption{Performance of recompiling the parameters of a $2$-layer unitary on $14$ qubits. Plot shows the minimum infidelity reached over 3 runs of 20 iterations of root finding with the ratio of constraints to ansatz parameters $\nu$, for two different choices of operator pool with shot noise equivalent to taking $10^5$ shots. Blue for orthogonal pool and orange for the set of all 3-local Pauli strings. Fits shown are of the form $1-\mathcal{F}_{min}=a \left(\nc/\nu \right)^{-b} +c$.}
		\label{fig:operator_pool_comp}
	\end{centering}
\end{figure}

\subsection{Finding eigenstates via orthogonal operator pools}
Let us now apply our orthogonal constraints to finding eigenstates by finding roots. It is interesting to note
that it follows from our relation in \cref{lemma:orth_sum_squares} that the Newton step through the inverse Jacobian from \eqref{eq_jacobian} as $\jac^{-1} \underline{\covar}$ is guaranteed to represent a descent direction for the variance $\mathrm{Var}[\mathcal{H}]$ given that the gradient vector $\mathrm{grad} ( \lVert \covar \lVert^2 ) = \jac^\intercal \covarvec = \mathrm{grad} (\mathrm{Var}[\mathcal{H}])$.

Furthermore, we have written our problem as a least-squares minimisation of the constraints $f_k$ and thus the Gauss-Newton and the Levenberg-Marquardt approaches
can be interpreted straightforwardly: our root finding approach is equivalent to a non-linear least squares minimisation.

An advantage of this scheme is that we can randomly sample the constraints according to an importance sampling, i.e., the constraints are picked with a probability proportional to their magnitude $p_k = |\covar_k|^2$. We can efficiently upper bound these probabilities in an experiment as $p_k = \langle \phi_k  | \mathcal{H} | \psi \rangle =  |\langle k  |  U^\dagger \mathcal{H} | \psi \rangle|^2$. In particular, we run the quantum circuit $U^\dagger \mathcal{H}_a U$ and measure samples in the standard basis, whereby we obtain the binary string $k$ with probability $|\langle k  |  U^\dagger \mathcal{H}_a | \psi \rangle|^2$. It then follows that $p_k \leq \sum_a c_a |\langle k  |  U^\dagger \mathcal{H}_a | \psi \rangle|^2$.
There is of course no guarantee that these probabilities have structure, however, when performing energy minimisation first, the probabilities are more likely to be peaked around the lower energy basis vectors $|\phi_k\rangle$.

In case of finding eigenstates of a set of mutually commuting Hamiltonians $\mathcal{H}_a$ we can compute covariances as $\langle \phi_k| \mathcal{H}_a |\psi \rangle$ individually for all operators $\mathcal{H}_a$. If all such variances vanish then we are guaranteed that $\var[\mathcal{H}_a] = 0$ for all $a$. In Figure~\ref{fig:operator_pool_comp}) we compare the performance of root finding for two choices of operator pool on a parameter rediscovery problem (3-local Pauli strings and the orthogonal operator pool), showing that both pools give very similar performance.

\section{Properties of Covariances}

\subsection{Proof of Lemma~\ref{lemma:smooth}: Smooth covariance functions \label{sec:proof_smooth}}
Let us prove that the covariance functions are smooth functions of the parameters $\underline{\theta}$ of the
variational quantum state $|\psi(\underline{\theta})\rangle$. In particular let us
expand our expression from Eq.~\eqref{eq:param_covar_def} as
\begin{align} \nonumber
	\covar_k(\underline{\theta}) =& 	\langle \psi (\underline{\theta}) | O_k \mathcal{H} | \psi (\underline{\theta}) \rangle
	{-}
	\langle \psi (\underline{\theta}) | O_k  | \psi (\underline{\theta}) \rangle \langle \psi (\underline{\theta}) | \mathcal{H}  | \psi (\underline{\theta}) \rangle
	\\
	=& A_{re}(\underline{\theta}) + i A_{im}(\underline{\theta}) - B(\underline{\theta}) C(\underline{\theta}),
	\label{eq:covar_expected_herm}
\end{align}
which expression we have re-written in terms of 4 expectation values of 4 different Hermitian operators
via using the real and imaginary parts as in \cref{eq:covar_real} as
\begin{align}\nonumber
	A_{re} &:= \langle \tfrac{1}{2} \{ O_k, \mathcal{H} \}\rangle,
	\quad
	A_{im} := \langle - \tfrac{i}{2} [ O_k, \mathcal{H} ] \rangle,\\
	B &:= \langle  O_k \rangle,
	\quad
	C:= \langle  \mathcal{H} \rangle
	\label{eq:covar_hermitian_terms}
\end{align}
where we use $\langle \cdot \rangle$ to denote the expected value with respect to
the parametrised quantum state  $\psi (\underline{\theta})$ and we have dropped the
dependence on $\underline{\theta}$ for ease of notation.
Indeed above in \cref{eq:covar_expected_herm} all terms are expected values of Hermitian operators.

It suffices to show that the expectation value of any Hermitian observable $O \in \mathbb{C}^{d \times d}$
is a smooth function of the circuit parameters as
\begin{equation}
	\langle O \rangle(\underline{\theta}) = 	\langle \psi (\underline{\theta}) | O | \psi (\underline{\theta}) \rangle
	=
	\langle \underline{0} | U^\dagger (\underline{\theta}) O  U(\underline{\theta}) | \underline{0}  \rangle.
\end{equation}
Indeed the ansatz circuit is by definition (via \cref{eq:vari_gates}) a smooth mapping as $U(\underline{\theta}) \in \mathrm{SU}(2^N)$ and thus
$\langle O \rangle(\underline{\theta})$ is a smooth function of the parameters for any $O$.

\begin{figure}[tb]
	\centering
	\includegraphics[width=\linewidth]{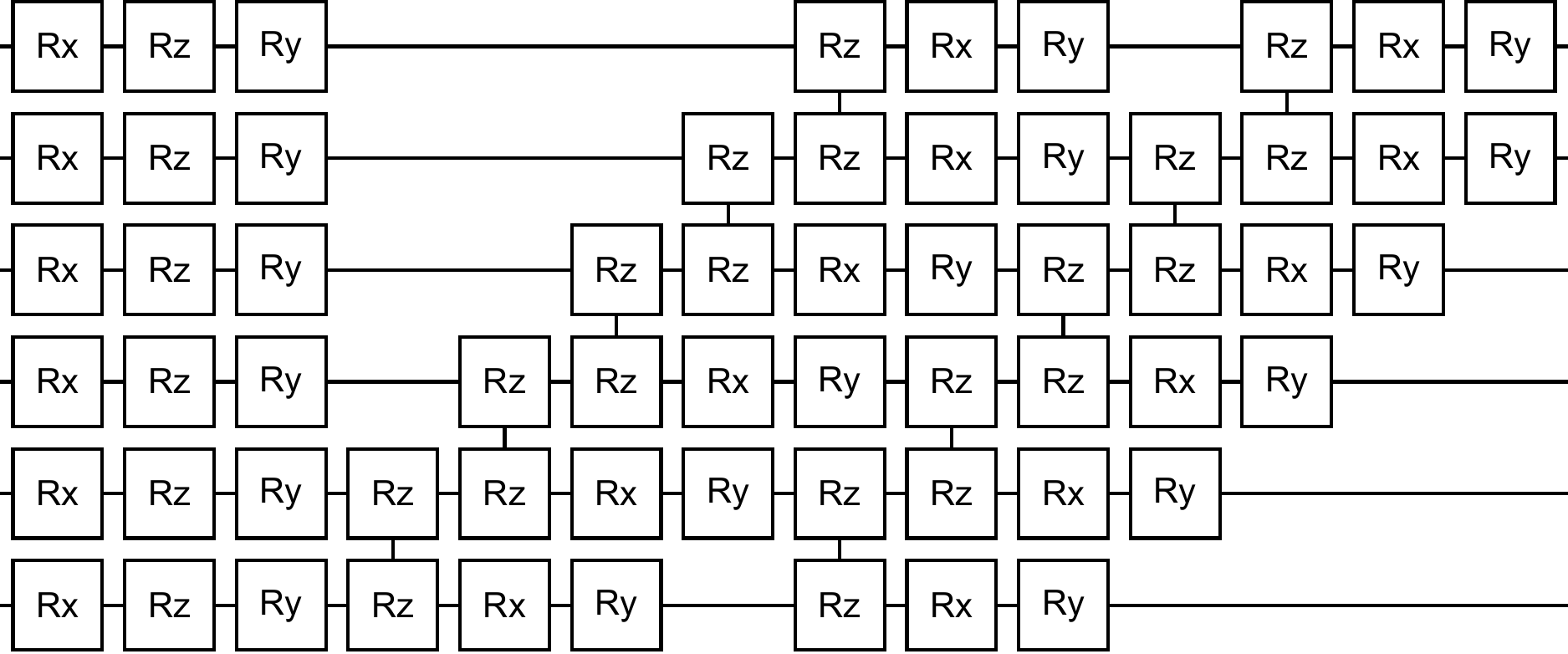}
	\caption{One layer of the Hardware Efficient Ansatz used in numerics for $6$ qubits}
	\label{fig:ansatz}
\end{figure}

\subsection{Proof of Corollary~\ref{cor:trig_poly}: Trigonometric polynomials}
Given the specific but pivotal scenario when every parametrised gate in the ansatz circuit
is a Pauli gate, as in the present work as illustrated in \cref{fig:ansatz}, ref.~\cite{koczor2020quantumAnalytic} established 
the following.
The expected value of any Hermitian observable is a trigonometric polynomial of the form
\begin{equation*}
	\langle O \rangle(\underline{\theta}) = \sum_{j=1}^{3^\nu} c_j T_j(\underline{\theta}),
\end{equation*}
where $c_j \in \mathbb{R}$ are real coefficients that depend on $O$ and $T_j(\underline{\theta})$
are trigonometric monomials as products
$T_j(\underline{\theta}) \in \prod_{n=1}^\nu \{ a(\theta_n), b(\theta_n), c(\theta_n)  \}$
of single-variate trigonometric functions~\footnote{
here the product of sets produces a set that contains all possible products of the elements
}, for example $b(\theta_1) \prod_{n=2}^\nu a(\theta_n)$.
The single variate functions are  	$a(\theta) :=  (1+\cos[\theta])/2$, 
$b(\theta) := 	 \sin[\theta]/2 $ and 
$c(\theta) := 	 (1-\cos[\theta])/2$. Given in \cref{eq:covar_hermitian_terms} all four terms
must be of this form, we obtain the expression for the covariances via \cref{eq:covar_expected_herm}
as
\begin{equation*}
	\covar_k(\underline{\theta}) 
	=
	\sum_{j=1}^{3^\nu} \tilde{c}_j T_j(\underline{\theta})
-	\sum_{j,l=1}^{3^\nu} c'_j c''_l T_j(\underline{\theta}) T_l(\underline{\theta})
=
	\sum_{j=1}^{3^{2\nu}} C_j \mathcal{T}_j(\underline{\theta})
\end{equation*}
where $\tilde{c}_j \in \mathbb{C}$ and $c'_j, c''_j \in \mathbb{R}$ are coefficients that
depend on the index $k$. Indeed, here $T_j(\underline{\theta}) T_l(\underline{\theta})$
are also trigonometric monomials and in the last equation we denoted these as
$\mathcal{T}_j  \in [\prod_{n=1}^\nu \{ a(\theta_n), b(\theta_n), c(\theta_n)  \}]^2$
with prefactors $C_j \in \mathbb{C}$.

\subsection{Experimentally estimating covariances\label{app:jacobian}}
Let us first compute covariances assuming the Hamiltonian $\mathcal{H} = \sum_{a=1}^r h_a \mathcal{H}_a $
is given in terms of Pauli strings $\mathcal{H}_a \in \{\mathrm{Id}_2, X, Y, Z\}^{\otimes N}$. The covariances
$\covar_k = \bilinear{O_k}{\mathcal{H}}{\psi} = A_{re}+ i A_{im} - B C$ are completely determined by the expected values from \cref{eq:covar_hermitian_terms}
of Hermitian operators. We can significantly simplify these when $O_k$ and $\mathcal{H}_a$ are Pauli strings
as
\begin{align}
	A_{re} &= \sum_{a=1}^r h_a \langle \tfrac{1}{2} \{ O_k, \mathcal{H}_a \}\rangle = \sum_{a=1}^r h_a \langle P_{ka} \rangle,\\
	A_{im} &= \sum_{a=1}^r h_a  \langle - \tfrac{i}{2} [ O_k, \mathcal{H}_a ] \rangle = \sum_{a=1}^r h_a \langle Q_{ka} \rangle,
\end{align} 
where indeed $P_{ka}, Q_{ka} \in \pm \{\mathrm{Id}_2, X, Y, Z\}^{\otimes N}$ are Hermitian Pauli strings given
any two Pauli strings $O_k$ and $\mathcal{H}_a$ either commute or anticommute and thus
\begin{equation*}
	\tfrac{1}{2} \{ O_k, \mathcal{H}_a \} {=} 
	\begin{cases}
		0, & \text{if $O_k$ and $\mathcal{H}_a$ }\\[-1mm]
		   & \text{anticommute},\\[2mm]
		 P_{ka} \in \pm \{\mathrm{Id}_2, X, Y, Z\}^{\otimes N} & \text{otherwise,}
	\end{cases}
\end{equation*}
and similarly for the imaginary  part
\begin{equation*}
	\tfrac{i}{2} [ O_k, \mathcal{H}_a ]  {=} 
	\begin{cases}
		0, & \text{if $O_k$ and $\mathcal{H}_a$ }		\\[-1mm]
		& \text{commute},\\[2mm]
		Q_{ka} \in \pm \{\mathrm{Id}_2, X, Y, Z\}^{\otimes N}, & \text{otherwise.}
	\end{cases}
\end{equation*}
Here the particular Pauli strings $P_{ka}$ and $Q_{ka}$  and their signs $\pm$ can be determined straightforwardly and efficiently
from the indexes $k, a \in \{ 0,1,2,3\}^{N}$ using the algebra of Pauli matrices, i.e., the Pauli group.
We therefore conclude that the covariances can be computed 
in terms of only expected values of Hermitian Pauli strings as
\begin{equation}\label{eq:covar_pauli_expected}
	\covar_k
	=
	\sum_{a=1}^r h_a \big( \langle P_{ka} \rangle + i \langle Q_{ka} \rangle 
	+
	 \langle \mathcal{H}_a \rangle \langle O_k \rangle
	 \big).
\end{equation}
We need to estimate overall $3 r$ expected values of Pauli strings to estimate a covariance $f_k$ given $P_{ka}=0$ when $Q_{ka} \neq 0$ and
vice versa.

\section{Properties and applications of the Jacobian \label{sec:prop_of_jacobian}}
\subsection{Computing the Jacobian} \label{sec:computing_jacobian}

We consider the covariances $\covar_k(\underline{\theta})$ with respect to Pauli strings $O_k \in \pool$
in our operator pool as defined in \cref{eq:param_covar_def} for a fixed Hamiltonian $\mathcal{H} = \sum_{a=1}^r h_a \mathcal{H}_a $. Recall that the Jacobian is defined in terms of the
partial derivatives $\jac_{kn}:= \partial_n \covar_k(\underline{\theta})$.  We can explicitly compute these derivatives
by recalling that the covariances can be expressed in terms of expected values of Pauli strings
via \cref{eq:covar_pauli_expected} as
\begin{align} \label{eq:jacobian_expected}
		\jac_{kn}
	&=
	\sum_{a=1}^r h_a\\
	&\times
	 \bigg( 
	    \frac{ \partial \langle P_{ka} \rangle }{\partial \theta_n}
	{+} 	i \frac{ \partial \langle Q_{ka} \rangle }{\partial \theta_n}
	{+}
	\langle O_k \rangle \frac{ \partial \langle \mathcal{H}_a  \rangle }{\partial \theta_n}
	{+}
	\langle \mathcal{H}_a \rangle \frac{ \partial \langle O_k  \rangle }{\partial \theta_n}
	\bigg),
	\nonumber
\end{align}
where $P_{ka}, Q_{ka}, \mathcal{H}_a$ and $O_k$ are Pauli strings.
Above we can use well established techniques for experimentally estimating derivatives
of general expected values for a variety of ansatz constructions and gatesets~\cite{endoHybridQuantumClassicalAlgorithms2021, cerezoVariationalQuantumAlgorithms2021a, bharti2021noisy}.
Furthermore, in \cref{sec:covar} we focus on the typical practical scenario when the ansatz circuit consists
of Pauli gates and thus we can use parameter-shift rules~\cite{paramshift} for computing derivatives---while generalisations in~\cite{PhysRevA.104.052417, Wierichs2022generalparameter} are also applicable---as linear combinations of two expected values as
\begin{equation*}
	\frac{ \partial \langle O  \rangle(\underline{\theta}) }{\partial \theta_n}
	=
\tfrac{1}{2}	\langle O  \rangle(\underline{\theta} + \underline{v}_n \pi/2)
	-
\tfrac{1}{2}	\langle O  \rangle(\underline{\theta}-\underline{v}_n\pi/2),
\end{equation*}
for any Hermitian observable $O$ where $\underline{v}_n$ is the $n^{th}$ standard Euclidean basis vector.
As such, we can compute all derivatives in \cref{eq:jacobian_expected} by applying parameter-shift rules.

\begin{figure}[tb]
	\begin{centering}
		\includegraphics[width=\linewidth]{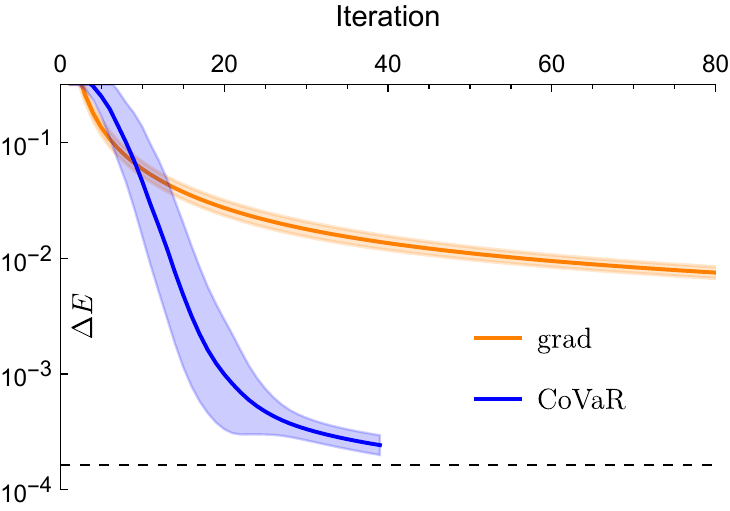}
		\caption{
			Comparison of the performance of \cv to gradient descent when starting from a point close to the ground state of the $10$-qubit spin chain, analogous to \cref{fig:gradient_comparison}(c) for recompilation. This initialisation was done by randomly perturbing the parameters of a state close to the ground state. Average initial fidelity to the ground state was $\mathcal{F}=80 \pm 4 \%$. Mean and standard deviation are shown for 20 runs of gradient descent and the 16/20 runs of \cv that converged to the ground state rather than one of the excited states. The ansatz can only approximate the
				ground state of the spin model, the grid line at $\Delta E \approx 1.6 \times 10^{-4}$ marks the minimum achieved energy as the limit of precision. One can only further improve $\Delta E$ by increasing the ansatz depth.
		}		\label{fig:spinchain_close}
	\end{centering}
\end{figure}

\subsection{Computing the Jacobian using classical shadows}

Let us now describe an explicit measurement protocol in the specific case when our Hamiltonian is local and the operator pool $\pool= \{O_k\}_{k=1}^{\nc}$ is also
local as in \cref{stat:locality}
and thus we can use classical shadows to determine a large number $M$ of local Pauli-string expected values.
We then compute the large Jacobian by estimating Pauli strings at different shifted circuit parameters via
the parameter-shift  rules for computing partial derivatives.

\begin{lemma}\label{lemma:jacobian_complexity}
Given an operator pool $\pool = \{O_k\}_{k=1}^{\nc}$ and a problem Hamiltonian
$\mathcal{H} = \sum_{a=1}^r h_a \mathcal{H}_a $ in terms of local Pauli strings $O_k$ and $\mathcal{H}_a$
and a variational circuit with $\nu$ parametrised Pauli gates,
we can determine a the Jacobian $\jac \in \mathbb{C}^{\nc \times \nu }$ by
applying the classical shadow procedure $2\nu +1 $ times at different circuit-parameter configurations
with each estimating $M \leq 3 r \nc$ Pauli expected values. In contrast determining a gradient vector
for gradient descent requires $2\nu$ applications of the classical shadow
procedure each with $M=r$.
\end{lemma}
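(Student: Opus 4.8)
The plan is to start from the explicit expression for the Jacobian entries in \cref{eq:jacobian_expected}, which already reduces every entry $\jac_{kn}$ to a fixed linear combination (with Hamiltonian coefficients $h_a$) of four types of objects: the bare expected values $\langle O_k\rangle$ and $\langle \mathcal{H}_a\rangle$, and the parameter derivatives $\partial_n\langle P_{ka}\rangle$, $\partial_n\langle Q_{ka}\rangle$, $\partial_n\langle \mathcal{H}_a\rangle$ and $\partial_n\langle O_k\rangle$ of Hermitian Pauli strings. The key fact I would exploit is that a single application of the classical shadow procedure, at one fixed circuit-parameter configuration, simultaneously yields estimates of \emph{all} Pauli strings we ask for; hence the cost is naturally quantified by (i) the number of distinct parameter configurations at which we must run the procedure, and (ii) the number $M$ of distinct Pauli strings we read off at each configuration.

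For point (i), I would convert each partial derivative into expected values via the parameter-shift rule from \cref{sec:computing_jacobian}, so that $\partial_n\langle O\rangle$ is a difference of $\langle O\rangle$ evaluated at $\underline{\theta}\pm \underline{v}_n\pi/2$. The non-differentiated prefactors $\langle O_k\rangle$ and $\langle \mathcal{H}_a\rangle$ are all evaluated at the single unshifted point $\underline{\theta}$, supplying one configuration. Each of the $\nu$ parameters then contributes the two shifted points $\underline{\theta}+\underline{v}_n\pi/2$ and $\underline{\theta}-\underline{v}_n\pi/2$; crucially these $2\nu$ points are shared across \emph{all} Pauli strings whose $\theta_n$-derivative is needed, so no further configurations arise. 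This yields the claimed total of $2\nu+1$ configurations.

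For point (ii), I would bound $M$ at a single configuration by collecting the distinct Pauli strings appearing in \cref{eq:jacobian_expected}. The Hamiltonian strings $\{\mathcal{H}_a\}$ number $r$ and the pool strings $\{O_k\}$ number $\nc$. For the commutator/anticommutator strings, I would invoke the dichotomy established in \cref{app:jacobian}: for each pair $(k,a)$ the operators $O_k$ and $\mathcal{H}_a$ either commute or anticommute, so exactly one of $P_{ka}$ and $Q_{ka}$ is nonzero; hence these contribute at most $\nc r$ distinct strings rather than $2\nc r$. Summing gives $M\le \nc r + r + \nc \le 3\nc r$, using $r,\nc\ge 1$, which is the stated bound; the unshifted configuration in fact needs only the $\nc+r$ prefactor strings and so lies comfortably within the same bound.

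Finally, the gradient comparison is immediate: the energy gradient $\partial_n\langle\mathcal{H}\rangle = \sum_a h_a\,\partial_n\langle\mathcal{H}_a\rangle$ involves only derivatives and no bare prefactors, so the parameter-shift rule requires the $2\nu$ shifted points but no unshifted one, and at each of these the only Pauli strings needed are the $\{\mathcal{H}_a\}$, giving $M=r$. I expect the main obstacle to be purely one of careful bookkeeping rather than any analytic difficulty: one must ensure the shifted configurations are correctly shared among all derivative terms (so the count stays $2\nu$ and does not scale with $\nc r$), that the single unshifted configuration supplying the prefactors is accounted for separately, and that the commute/anticommute argument is used to avoid double-counting $P_{ka}$ and $Q_{ka}$.
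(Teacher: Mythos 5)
Your proposal is correct and follows essentially the same route as the paper's proof: one unshifted configuration supplying the $r+\nc$ prefactor expected values, the $2\nu$ parameter-shifted configurations shared across all derivative terms (each with $M \le r + \nc + r\nc \le 3r\nc$ Pauli strings, using the commute/anticommute dichotomy so that only one of $P_{ka}$, $Q_{ka}$ survives per pair), and the $2\nu$ shifted configurations with $M=r$ for the energy gradient. The bookkeeping points you flag as potential obstacles are exactly the ones the paper's proof handles, and you handle them the same way.
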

\begin{proof}
First, at parameters $\underline{\theta}$ we determine overall $M=r+\nc$ expected
values as $\langle O_k \rangle $ and $\langle \mathcal{H}_a \rangle$ from \cref{eq:jacobian_expected} using a single application
of the classical shadow procedure.
Second, we estimate derivatives of expected values ($\frac{ \partial \langle \mathcal{H}_a  \rangle }{\partial \theta_n}$,
$\frac{ \partial \langle O_k  \rangle }{\partial \theta_n}$ and
either $\frac{ \partial \langle P_{ka} \rangle }{\partial \theta_n}$
or $i \frac{ \partial \langle Q_{ka} \rangle }{\partial \theta_n}$
) via the above parameter-sift rule. We determine all derivatives with respect to a fixed
parameter $\theta_n$  by estimating Pauli strings
at parameters $\underline{\theta} +\underline{v}_n\pi/2$ as
well as 
at parameters $\underline{\theta} -\underline{v}_n\pi/2$
by two applications of the classical shadow procedure each with $M \leq r + \nc +  r \nc $.
Thus determining all derivatives requires $2 \nu$ applications of the classical shadow.

\end{proof}

\subsection{Computing the variance gradient from the Jacobian \label{app:variance_grad}}
The gradient of the variance can be computed via \cref{lemma:variance} 
when the operator pool is $\pool = \{\mathcal{H}_a\}_{a=1}^r$ as
\begin{equation*}
	\partial_n \sigma^2 = \partial_n [ \sum_{a=1}^r h_a \covar_a ] = \sum_{k=1}^r h_a \partial_n \covar_a(\psi) 
	= [\jac]_{an} h_a = \jac^\intercal \underline{h}.
\end{equation*}
This confirms that the gradient of the variance is determined by our Jacobian.

We can also compute the gradient of the vector norm $\lVert \covar \rVert^2 = \sum_{k=1}^{\npool} | \covar_k |^2$.
When the operator pool is larger than the problem Hamiltonian terms then we compute the gradient of the norm of the covariance
vector
\begin{align*}
	\partial_n \lVert \covar \rVert^2
	&=
	\partial_n \sum_{k=1}^{\npool}  \covar_k \covar_k^*
	= \sum_{k=1}^{\npool}  [ (\partial_n \covar_k) \covar_k^*  +  \covar_k (\partial_n \covar_k^* )]\\
	&= \sum_{k=1}^{\npool}  [ \jac_{kn} \covar_k^*  +  \covar_k \jac_{kn}^*]
	 = 2 \mathrm{Re} [ \jac^\intercal \underline{\covar}^* ]
	 = 2\tilde{\jac}^\intercal \tilde{\covarvec},
\end{align*}
where in $ \tilde{\jac}$ and in $\tilde{\covarvec}$ we have stacked real and imaginary parts on top of each other.

\begin{figure}[tb]
	\begin{centering}
		\includegraphics[width=\linewidth]{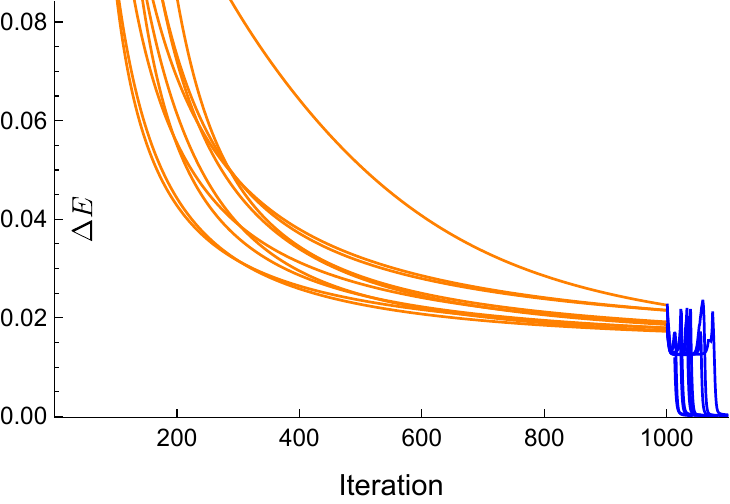}
		\caption{
			Ten examples of \cv escaping from practical local traps demonstrated on the same $10$-qubit spin-chain problem
			as in \cref{fig:spinchainLowestEigs}. Gradient descent (orange) for $1000$ adaptive size steps to reach local minima, followed by $100$ iterations of \cv (blue). Several steps of \cv were sometimes required before finding a set of operators that produced a step that escaped the minimum. Mean energy differences from the ground state are:  $\Delta E_{localmin} = 0.019$, $\Delta E_{\cv} =  1.8 \times 10^{-4}$. The last step of each run of gradient descent had an energy improvement below $2 \times 10^{-5}$ indicating being stuck in a local trap.
			Note that a single iteration of \cv does not necessarily decrease the energy given \cv is not an energy minimiser as is demonstrated here but is also visible in \cref{fig:spinchainLowestEigs}. Indeed, escaping from a local trap of the \emph{energy} surface with \cv
			may result in an intermediate increase of the energy.
		}
		\label{fig:localmintest}
	\end{centering}
\end{figure}

\subsection{Computing the update rule classically \label{app_newton_method}}

We need to first stack real and imaginary parts on top to solve for only real parameter upadtes. In particular, we want to solve the linear system of equations 
$ \jac  \Delta \underline{\theta}
  = \underline{\covar}$.
Here we estimate both $\jac$ and $\underline{\covar}$  with a quantum computer and we want to
compute the parameter update $\Delta \underline{\theta}$.
If we merely compute the pseudoinverse of $\jac$ and apply it as $\jac^{-1} \underline{\covar}$ then the resulting solution vector $\Delta \underline{\theta}$ is generally complex. However, we require that our parameter-update be real as $\Delta \underline{\theta} \in \mathbb{R}$
which we enforce via the following set of linear equations
\begin{equation*}
    \mathrm{Re}[\jac]  \Delta \underline{\theta}
  =
    \mathrm{Re}[\underline{\covar}],
    \quad \quad \text{and}
    \quad \quad
    \mathrm{Im}[\jac]  \Delta \underline{\theta}
  =
    \mathrm{Im}[\underline{\covar}].
\end{equation*}
We can simultaneously solve these systems of equations by stacking real and imaginary parts
$\tilde{\jac} = (\mathrm{Re}[\jac],\mathrm{Im}[\jac])^T$
and 
$\tilde{\underline{\covar}} = (\mathrm{Re}[\underline{\covar}],\mathrm{Im}[\underline{\covar}])^T$
on top of each other and this will guarantee that the solution is real $\tilde{\jac}^{-1} \tilde{\underline{\covar}} \in \mathbb{R}$
since $\tilde{\jac}$ and $\tilde{\underline{\covar}}$ are real matrices and vectors, respectivelty.

We invert the Jacobian via the usual damped, regularised inverse
\begin{equation*}
    \tilde{\jac}^{-1} := [\tilde{\jac}^\intercal \tilde{\jac} + \lambda \mathrm{Id}]^{-1} \tilde{\jac}^\intercal.
\end{equation*}
Here $\lambda$ is a regularisation parameter that we dynamically set by choosing $\lambda=0.0001 \times 2^i$ where $i$ is incremented from $0$ until the condition $\lVert \underline{\covar}(\underline{\theta}_t) \rVert < \lVert \underline{\covar}(\underline{\theta}_{t-1}) \rVert$ is met, i.e., the norm of the covariance vector has decreased from the previous iteration. If the step $\Delta \underline{\theta}$ to be taken is too large for any individual parameter is too large $|\Delta \theta_i|> 1$, the update step is rescaled to normalise this value to $1$. This prevents our algorithm from taking overly large steps. For the selection of constraints, $\nc$ constraints are selected randomly from the chosen operator pool at every iteration.

We implemented a linesearch algorithm
whereby we compute the update rule $\Delta \underline{\theta}$ and compute the value of
the vector norm $\lVert \underline{\covar}(\underline{\theta}) \rVert$ at parameter values
$\underline{\theta} = \underline{\theta}_{t} + \kappa \Delta \underline{\theta}$
in small increments in $\kappa$. However, it was observed empirically that a step close to the canonical choice $\kappa=-1$ was almost always chosen, so linesearch was not used in the numerics in this work.

\subsection{Existence of a shot noise-floor \label{app:shot_noise_floor}}
We extend our example in \cref{sec:large_pool} and consider noisy entries of the Jacobian and the covariance vector as $J_k + \epsilon_k$ and $\covar_k(\theta) + \eta_k$
for random variables $\epsilon_k$ and $\eta_k$ due to shot noise (and possibly other sources
of random errors). The solution to our
linear equation in \cref{eq:single_variable} is thus modified as
\begin{equation*}
	\Delta \theta \approx  -\frac{ \sum_k J_k \covar_k(\theta) } {  \sum_k J_k^2  }
	- \frac{ \sum_k J_k \eta_k + \covar_k(\theta) \epsilon_k } {  \sum_k J_k^2 },
\end{equation*}
where we dropped all higher order terms as products $\epsilon_k \eta_k$ as well as considered the approximation
$ [ \sum_k J_k^2 + \mathcal{O}(\epsilon_k) ]^{-1} = [ \sum_k J_k^2]^{-1} + \mathcal{O}(\epsilon_k / [\sum_k J_k^2]^2)$
where we also dropped the term $\mathcal{O}(\epsilon_k / [\sum_k J_k^2]^2)$.

Let us now consider the error propagation to the solution $\Delta \theta$ by considering the
linear error propagation formula
\begin{align*}
	\var [ \Delta \theta]
	 &\approx  
	 \frac{ \sum_k J_k^2 \var[\eta_k] +  \covar_k(\theta)^2 \var[\epsilon_k] } {  \sum_k J_k^2 }\\
	&\leq
	\mathcal{E}^2 [1 + \frac{  |\covar_k(\theta)|^2  } {  \sum_k J_k^2 }],
\end{align*}
where we have simplified the formula by bounding the variance of each $\var[ \eta_k] \leq \mathcal{E}^2$ and $\var[ \epsilon_k ] \leq \mathcal{E}^2$,
i.e., this is indeed the case in practice where we determine each covariance and Jacobian entry to the same guaranteed precision $\mathcal{E}$ using classical shadows. Indeed we see that as we approach an eigenstate $|\covar_k(\theta)|^2 \rightarrow 0$ and thus
the error that propagates into our solution is bounded by  $\var [ \Delta \theta] \leq \mathcal{E}^2$ by the worst-case error
of a \emph{single Jacobian/vector entry}.  This is a very powerful averaging of random errors: by increasing the number of constraints
we gain increasingly more information about the root, however, the random error that propagates into
our solution does not scale with the number of constraints.

\subsection{Time complexity of classically solving the linear system of equations\label{app:classical_comp_time}}

Let us now derive the time complexity of computing the regularised inverse as $(\tilde{\jac}^\intercal  \tilde{\jac} + \lambda R)^{-1} \tilde{\jac}^\intercal  \tilde{\covarvec}$
of the Jacobian $\tilde{\jac} \in \mathbb{R}^{ 2\nc \times \nu}$ in which we
have stacked real and imaginary parts on top of each other and $R$ is a regularisation matrix.
This computation can be broken up into four steps.

\begin{figure}[tb]
	\begin{centering}
		\includegraphics[width=\linewidth,trim={0 3.5cm 0 0.8cm},clip]{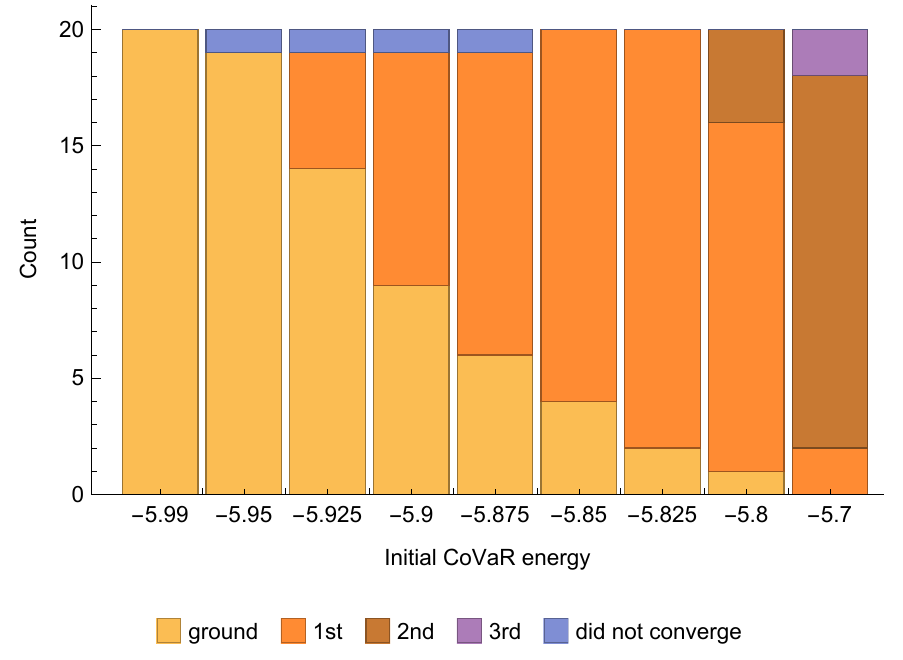}
		\caption{Bar chart showing the distributions of states which \cv converged to for the runs in \cref{fig:spinchain_groundstate} showing the probability of converging to the ground state with initial overlap (convergence being an energy difference of less than $10^{-3}$ in this case) as a function of its initialisation energy. Each bar corresponds to one point in \cref{fig:spinchain_groundstate}, but there are two additional points on the right of the figure.
		}
		\label{fig:groundstate_barchart}
	\end{centering}
\end{figure}

\begin{itemize}
	\item First, compute
	the square matrix $A := \tilde{\jac}^\intercal  \tilde{\jac}+ \lambda R$ as the product of two non-square matrices
	as
	$[A]_{mn}
	=\lambda R_{mn}+ \sum_{k=1}^{2\nc} [\tilde{\jac}]_{mk} [\tilde{\jac}]_{kn}$.
	Computing all $\nu^2$ entries of $A$ requires
	$2 \nu^2 \nc + \nu^2$ operations.
	
	\item Second, we compute the inverse of $A$
	which requires between $\mathcal{O}(\nu^{2.373})$ and $\mathcal{O}(\nu^{3})$ operations
	depending on the algorithm.

	\item Third, compute the matrix-vector product
	$\mathbf{v}:=\tilde{\jac}^\intercal \tilde{\covarvec}$ as
	$	[\tilde{\jac}^\intercal \tilde{\covarvec}]_n = \sum_{k=1}^{2\nc} [\tilde{\jac}_{nk}] \tilde{\covar}_k$,
	which requires overall $2 \nu \nc$ operations. 
	
	\item Finally, we compute the matrix-vector product $A \mathbf{v}$ which requires $\nu^2$ operations.
	
\end{itemize}

Given $\nc \gg \nu$, the overall computation time is dominated by the
first step as computing $A$ and thus the time complexity is
$t \in \mathcal{O}(\nu^2 \nc)$ which is merely
linear in the dominant dimension $\nc$.
We confirm this theoretical scaling in \cref{fig:classical_comp_fig} and conclude that
the absolute times are very reasonable, i.e., we can compute the update rule in a matter of minutes for up to a very large number of covariances $\nc = 10^6$. Of course, the computation can be heavily parallelised and can also be preformed in a distributed-memory model with negligible communication between nodes. As such, in principle one could straightforwardly use a very large number of covariances $\nc \approx 10^8$ that would still require reasonable classical computational resources.

\begin{figure*}[tb]
	\begin{centering}
		\includegraphics[width=\linewidth]{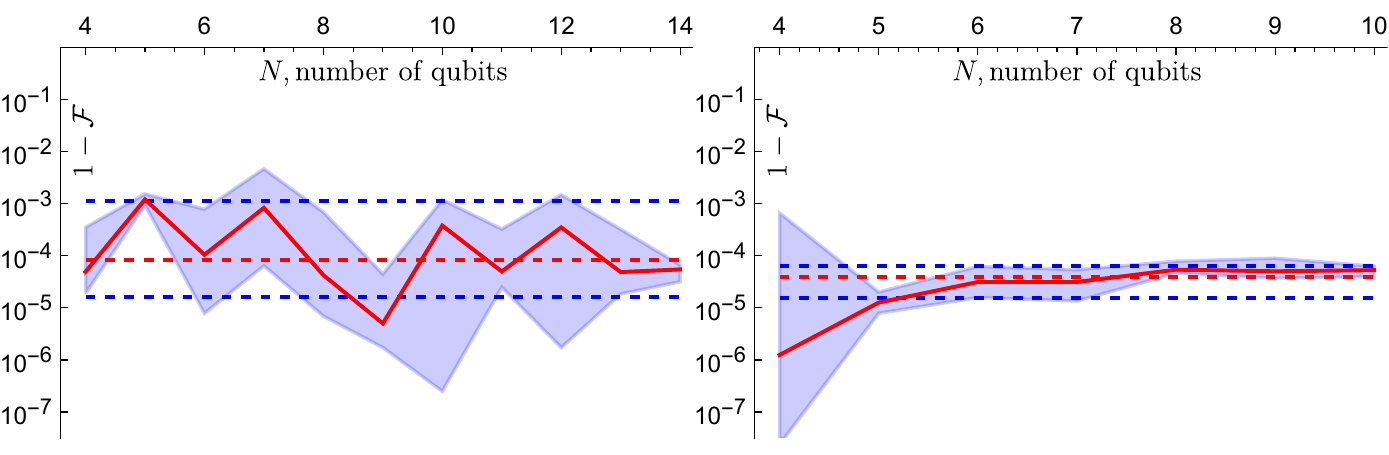}	
		\caption{
			Scaling results showing the infidelity with respect to the ground state of the state achieved after $20$ steps of \cv from an initial state of $\mathcal{F}_{init}$ overlap with the ground state.
				The red curve and blue shaded region shows the median and quartiles of $10$ runs of \cv for each $N$ and the dotted lines show the median and quartiles of the data as a whole. 
				(left) simulations for our recompilation problem from \cref{sec:applications_recomp} using a constant initial fidelity
				$\mathcal{F}_{init} = 50\%$ and a constant ansatz depth of 2 layers.
				(right) simulations for the spin chain problem using an increasing ansatz depth that can estimate the ground state
				to at least  an infidelity of $10^{-5}$. An initial overlap of $\mathcal{F}_{init} = 80\%$ was used to suppress convergence to excited states and only the runs that did converge to the ground state (and not to an excited state) are included in the present statistics.
				Compared to our recompilation simulations (left) using a constant ansatz depth,
				here the performance may appear to be decreasing slightly as we increase the number of qubits
				-- which would be explained by 
				the increasing depth (polylogarithmically in \cref{sec:large_pool}) of the ansatz circuit.
		}
		\label{fig:scaling}
	\end{centering}
\end{figure*}

\section{Further numerical simulations}
\subsection{Comparison with VQE for the spin-chain problem}

While in \cref{sec:application_spin} we focused on practical applications of \cv here we demonstrate a comparison
of convergence speed between gradient descent and \cvns.
In particular, in \cref{fig:spinchain_close} we compare the speed of convergence to the ground state of the spin chain,
using an identical ansatz and hyperparameters as those used for runs of \cv in \cref{fig:spinchainLowestEigs}.

\subsection{Demonstration of escaping local traps} \label{sec:localmintest}
As we noted in the main text, even when gradient descent is stuck
in a local trap due to the gradient of the energy surface vanishing, \cv can be used to navigate out from
such a trap given \cv is not an energy minimiser and may yield a non-zero step -- especially
that constraints that determine a \cv step are generated randomly at every iteration.
We numerically demonstrate this in \cref{fig:localmintest} using our spin-chain problem, with all parameters the same as those used for \cref{fig:spinchainLowestEigs}. The local traps were found using $1000$ iterations of gradient descent with an adaptive step size and achieved a mean energy difference from the ground state of $\Delta E = 0.019
$. \cv was then run for $100$ further iterations and was able to achieve a mean energy difference of $\Delta E = 1.8 \times 10^{-4}$.

\subsection{Convergence of root finding to eigenstates} \label{sec:groundstate_barchart}

In this section we further analyse the property of \cv that it converges to eigenstates
	that have a dominant contribution to the initial state as reported in \cref{fig:spinchain_groundstate}. For this reason, in \cref{fig:groundstate_barchart} we performed simulations
	of \cv optimisations with initial states of increasing expected energy starting near the  ground state.

Recall that the expected value of the energy can be written as $E = \langle \psi | \mathcal{H} | \psi \rangle = \sum_k E_k p_k$, where
$E_k$ are eigenenergies (eigenvalues) of $\mathcal{H}$ and $p_k$ are the probabilities (fidelity) that $|\psi\rangle$ is
 in the $k$-th eigenstate of $\mathcal{H}$.
Given a gapped Hamiltonian with energies $E_0 < E_1 < E_2 \dots$, quantum states with a low energy must necessarily
have a high probability (fidelity) to be in the lowest lying eigenstates. As such, an initial state that has expected energy $E = E_0 p_0 + E_1 p_1 + \dots$ close to the ground state energy as $E\approx E_0$
guarantees the high probability (fidelity) $p_0 \approx 1$. Indeed, \cv nearly always converges to
the ground state in \cref{fig:groundstate_barchart} (bars on the left) when $E\approx E_0$ and thus $p_0 \approx 1$.

On the other hand, starting \cv from initial states that have higher expected energies than $E_0$ causes \cv to
not always converge to the ground state due to the necessarily increased populations of excited states:  \cref{fig:groundstate_barchart}(bars in the middle) feature instances where $\cv$ does not converge to the ground state, however, it then nearly always converges to the first excited state.
For example, when no higher eigenstates are populated the probability of the first excited state is $p_1 = 1 - p_0$.
Interestingly, we find a nearly linear relationship between the probability of the eigenstate,
such as $p_0$, and the percentage when \cv converges to that state in \cref{fig:spinchain_groundstate} -- and note that this relationship would be exactly
linear for the case of fault-tolerant phase-estimation protocols.
As we keep increasing the energy, higher excited states start to contribute as in \cref{fig:groundstate_barchart}(bars on the right)
and thus \cv may also converge to those eigenstates.

\subsection{Scaling of performance} \label{sec:scaling}

We performed simulations to assess the scaling of the performance of \cvns, for both recompilation and spin-chain problems
and plot the results in \cref{fig:scaling}. For the recompilation problem we knew by construction the parameters $\underline{\theta}^\star$
of the ground state and the circuit depth was constant due to a fixed number $2$ of ansatz layers for all qubit counts $N$. 
For the spin-chain problem we first searched for the solution at each qubit number $N$
using natural gradient descent followed by \cvns. 
The number of ansatz layers was then increased until
a desired precision with respect to the ground state
was achieved (here an infidelity of $10^{-5}$)
thus obtaining a series of solution parameters $\underline{\theta}^\star$ at
every qubit count.

These solution parameters were perturbed to obtain initial states for $\cv$ with a desired initial overlap with the ground state
($50\%$ for recompilation and $80\%$ for the spin chain to suppress convergence to excited states).
Of course, such states could equally well have been created through alternative initialisation methods,
including performing an initial period of gradient or natural gradient descent.
Finally, \cv was run for a fixed number $20$ of iterations and statistics of the final achieved fidelity are plotted 
in \cref{fig:scaling}.

\section{Details of numerical simulations}

We performed all statevector simulations using the open-source tools QuEST~\cite{quest} and its high level Mathematica based interface QuESTlink~\cite{QuESTlink}. Shot noise was simulated by adding Gaussian noise of standard deviation $1/\sqrt{N_s}$ to computed matrix and
vector elements.
Numerics for recompilation and the spin chain were both performed using a hardware efficient ansatz of the form in \cref{fig:ansatz}.

\subsection{Effect of Constraint Number on Performance} \label{sec:ncons_comp_details}

\begin{table}[tb]
	\centering 
	\begin{tabular}{l@{\hspace{5mm}}  l@{\hspace{5mm}}  c@{\hspace{5mm}}   c@{\hspace{5mm}} c }
		\\[-3mm]
		\hline\hline
		\\[-4.5mm]
		fitted data & type &	$a$   &  $b$  &   $c$		\\
		\hline \\[-4mm] 	
		\cref{fig:constraints_noise_comp} noise free & max  & $15.6$ & $3.00$ &  $7 {\times} 10^{-4}$\\
		& min  & $5.62$ & $3.23$ &  $ 0.0$\\
		\cref{fig:constraints_noise} shot noise & max  & $11.3$ & $3.76$ &  $3 {\times} 10^{-4}$\\
		& min  & $0.0259$ & $1.68$ &  $1 {\times} 10^{-4}$\\
		\cref{fig:constraints_noise} circuit noise & max  & $7.80$ & $ 3.09$ &  $2 {\times} 10^{-4}$\\
		& min & $0.35$ & $2.24$ &  $ 0.0$\\
		\cref{fig:operator_pool_comp} orth. pool & max  & $15.6$ & $2.99$ &  $ 7 {\times} 10^{-4}$\\
		& min & $0.15$ & $2.51$ &  $0.0$ 
		\\[0mm] \hline \hline
	\end{tabular} 
	\caption{\label{fits}
			In our numerical simulations we investigated the effect of increasing the number $\nc$ of
			covariances. We fit a function to the fidelity achieved by \cv of the form $1-\mathcal{F}=a \left(\nc/\nu \right)^{-b} +c$.
			Fits to the worst (max) and best (min) of three runs of \cv are reported.
	}
\end{table}

Data for \cref{fig:constraints_noise_comp} and \cref{fig:constraints_noise} was obtained by simulating a
14-qubit parameter rediscovery problem using 2 layers of the ansatz in \cref{fig:ansatz}. 
The initial states were initialised close to the solution by randomly perturbing the solution parameters
resulting in an initial average fidelity of $\mathcal{F}=46\pm7\%$.
Fits in these figures are of the form $1-\mathcal{F}_{min}=a \left(\nc/\nu \right)^{-b} +c$ 
and we report fitted parameters in \cref{fits}.
The orange line in \cref{fig:operator_pool_comp} is identical to the orange line in \cref{fig:constraints_noise}.

	The noisy simulations were performed assuming the following simplified noise model. 
	Recall that global depolarising noise is a relatively good approximation
	in complex quantum circuits and becomes near-exact for random circuits, refer 
	for rigorous bounds to ref.~\cite{dalzell2021random}. This error channel acts on any density matrix
	via the Kraus map $D(\rho) = F \rho + (1-F) \rho_{max}$,
	where $\rho_{max}$ is the maximally mixed state, i.e., white noise, and $F$ is the fidelity.
	The expected value of any traceless Hermitian operator $O$, such as
	Pauli strings as relevant in the present work, merely gets attenuated
	as $\langle O \rangle = \tr[O \rho] = F \langle O \rangle_{id}$ where $\langle O \rangle_{id}$ is the ideal, noiseless expected value.

In practice this error model does not capture more subtle physical processes that corrupt the expected value measurement. Nevertheless, it was shown in ref~\cite{koczor2021dominant} that nearly all typical error models used in practice admit the decomposition $F \rho + (1-F) \rho_{err}$
where $\rho_{err} \approx \rho_{max}$ is an error density matrix that we do not expect to be exactly the maximally mixed 
state, albeit in practice it is relatively close to the maximally mixed state via its vanishing commutator norm from ref~\cite{koczor2021dominant}. In order to go beyond global depolarisation, but without resorting to computationally infeasible explicit noise simulations, we compute the noisy expected value as $\langle O_k \rangle = \tr[O_k \rho] = F \langle O_k \rangle_{id} + (1-F) \tr[O_k \rho_{err}]$ by approximating the term $\tr[O_k \rho_{err}] \sim \mathcal{N}(0,\sigma^2)$ using random Gaussian numbers. Here we set $\sigma = 0.01$ which is determined by the distance of $\rho_{err}$ from the maximally mixed state which we simulate with random Gaussian numbers that are unique to each observable indexed by $k$. Furthermore, we choose the fidelity $F=0.9 \approx  (1-\epsilon_1)^{\nu_1}(1-\epsilon_2)^{\nu_2}$, such that it approximates the performance of a typical, state-of-the-art experimental device with two-qubit error rates $\epsilon_2 = 0.001$ and single-qubit error rates 4-times smaller $\epsilon_1 = 0.25 \epsilon_2$ given in our circuit we have $\nu_1 = 196$ and $\nu_2 = 52$ single- and two-qubit gates, respectively.

\subsection{Recompilation problem in Fig.~\ref{fig:gradient_comparison}} \label{sec:recomp_details}
The numerics for recompilation in \cref{fig:gradient_comparison}(c) were done on a $10$-qubit parameter rediscovery problem for two layers of HEA ($\nu=88$). Gradient descent was preformed with a learning rate $\eta=0.1$ for both VQE and V-VQE.

\subsection{Spin-chain simulations} \label{sec:spinchain_details}
The Hamiltonian in \cref{eq:spin-ring} was used with parameters $J=0.1$ and $c_i$ chosen randomly between $-1$ and $1$.
For the spin-chain simulations in \cref{fig:spinchainLowestEigs}, Imaginary Time Evolution was used from a random initialisation until an energy of $E=-5.9$ was reached, with parameters $\underline{\theta}_{imag}$. These parameters were then disturbed by $|\Delta \theta_k| \le 0.05$ to produce 7 low energy states. \cv was then run from these initial states for 40 iterations. Imaginary Time Evolution was also continued from $\underline{\theta}_{imag}$ until convergence and reached an energy difference to the ground state of $\Delta E = 0.012$ compared to the $4 \times 10^{-4}$ of CoVaR.


%

\end{document}